\newcommand{\bd}{\boldsymbol}
\newtheorem{theorem}{Theorem}
\newtheorem{lemma}{Lemma}[section]
\newtheorem{prop}[lemma]{Proposition}
\theoremstyle{definition}
\newtheorem{remark}[lemma]{Remark}
\newtheorem{assumption}[lemma]{Assumption}
\let\llldots=\ldots
\def\ldots{\llldots{}}
\newcommand{\beqn}{\begin{equation}}
\newcommand{\eeqn}{\end{equation}}
\newcommand{\tr}{\text{tr}}
\newcounter{smalllist}
\let\llldots=\ldots
\def\ldots{\llldots{}}
\numberwithin{equation}{section}
\begin{document}
\title[Evolution of scattering data under FPU lattices]{On the evolution of scattering data under perturbations of the Toda lattice}
\author[D.~Bilman]{Deniz Bilman}
\address{Deniz Bilman\\
         Department of Mathematics, Statistics and Computer Science\\ 
         University of Illinois at Chicago\\
         851 S. Morgan Street\\
         Chicago, IL\\
         }
\email{dbilma2@uic.edu}
\author[I.~Nenciu]{Irina Nenciu}
\address{Irina Nenciu\\
         Department of Mathematics, Statistics and Computer Science\\ 
         University of Illinois at Chicago\\
         851 S. Morgan Street\\
         Chicago, IL \textit{and} Institute of Mathematics ``Simion Stoilow''
     of the Romanian Academy\\ 21, Calea Grivi\c tei\\010702-Bucharest, Sector 1\\Romania\\
     }
\email{nenciu@uic.edu}

\begin{abstract}
We present the results of an analytical and numerical study of the long-time behavior for certain 
Fermi-Pasta-Ulam (FPU) lattices viewed as perturbations of the completely integrable Toda lattice. 
Our main tools are the direct and inverse scattering transforms
for doubly-infinite Jacobi matrices, which are well-known to linearize the Toda flow. 
We focus in particular on the evolution of the associated scattering data under the perturbed vs.~the unperturbed equations. We find that the eigenvalues present initially in the scattering data converge to new, slightly perturbed eigenvalues under the perturbed dynamics of the lattice equation. To these eigenvalues correspond solitary waves that emerge from the solitons in the initial data. We also find that new eigenvalues emerge from the continuous spectrum as the lattice system is let to evolve under the perturbed dynamics.
\end{abstract}


\maketitle

\section{Introduction}\label{S:1}

The purpose of this work is to numerically investigate the long time behavior of solutions to certain
perturbations of the completely integrable Toda lattice. In its wider context, this problem lies within
a set of fundamental questions concerning the qualitative features of solutions to nonlinear 
Hamiltonian partial differential or difference equations (PDEs or P$\Delta$Es,
respectively). It is well known that in these types of equations 
several different phenomena can appear over long times, among them blow-up, scattering to the free 
evolution, or the emergence of stable nonlinear structures, such as solitary waves and breather solutions.
A detailed description of any such phenomenon often depends on the precise structure of the 
equation in question. However, there is the common belief, usually referred to as the soliton resolution conjecture, 
that (in the absence of finite time blow-up) generic solutions can be decomposed at large times into 
a sum of solitary waves plus a dispersive tail (i.e., radiation). 

Loosely speaking, what we will call here solitary waves are special solutions to nonlinear PDEs or P$\Delta$Es
which travel with constant speed and without changing their profile. Their existence reflects a 
certain balance between nonlinear and dispersive effects in a given evolution equation.
The statement of the soliton resolution conjecture seems 
to be the natural scenario for the long time behavior of Hamiltonian PDEs or P$\Delta$Es, 
but there are at this time relatively few rigorous results in this direction. Part of the reason for this is that 
many current mathematical techniques, 
while extremely powerful, are intrinsically linear. But the very nature of the problem means that we are 
interested in studying the competition between nonlinearity and dispersion over infinite time intervals. 

There is one very well-known example, however, in which the description of the long
time asymptotic behavior of solutions is fully known, namely completely integrable equations. These are Hamiltonian equations which have so ``many''
conserved quantities that they can be ``completely'' integrated. In the infinite dimensional setting,
the definition of complete integrability is not fully set. We will adopt the (highly practical) point of view that a
partial differential or difference equation is completely integrable if it can be linearized through some
bijective transformation. The transformation of choice is provided by
the direct and inverse scattering data associated to the Lax operator of the integrable equation, and it is
this scattering transform which can be thought of as a non-linear version of the Fourier transform. One
of the advantages of the scattering transform is that it separates, in a very precise, quantitative sense
the ``parts'' of the initial data (namely each eigenvalue and associated norming constant) 
which lead to the asymptotic emergence of each soliton, and those (mainly the reflection coefficient) that lead to 
the dispersive tail. This understanding, combined with the nonlinear stationary phase/steepest descent techniques 
introduced by P. Deift and X. Zhou \cite{DZ_RHP}, has yielded a large number of 
rigorous asymptotic results for various completely integrable PDEs (see, for example, \cite{DVZ} and \cite{DZ_PII},
among many others).

The question we wish to investigate in this paper is whether or not this type of analysis, 
based on the study of the scattering data,
can be extended to apply to certain (non-integrable) perturbations of completely integrable PDEs or P$\Delta$Es. 
We are aiming at a more quantitative understanding of the behavior of solutions than
is offered by (infinite dimensional) KAM theory. At the same time, we want to work in a situation in which
the perturbations are interesting in and of themselves, but which also allow us to minimize 
the technical challenges (both analytic and numeric) which
are not directly related to the question of the evolution of
solitary waves under perturbations of the integrable case.

We focus on certain so-called Fermi-Past-Ulam (FPU) lattices (named after the study in~\cite{FPU}), viewed as perturbations of the 
(completely integrable) Toda lattice. There already exists an
extensive body of work, both numerical and analytical, focused on the study of
the Toda and FPU lattices, and it is unfortunately impossible to include a 
comprehensive bibliography here. We wish however to mention the work
of G. Friesecke, R.~L. Pego and A.~D. Wattis (see \cite{FP1,FP2,FP3,FP4,FW})
as particularly relevant to our study, since they proved the existence of solitary waves
for general FPU lattices.  Furthermore, we rely heavily on the connection
between the Toda lattice and Jacobi matrices, in particular on the scattering
theory for Jacobi matrices (see, for example, \cite{T} and \cite{Tesch01} 
and the references therein). Finally, we should mention that the very same perturbed
evolution equations we consider (see \eqref{E:eom_pert_pq} and \eqref{E:eom_p_L})
are also studied in \cite{IST}. However, there the authors consider the question
of Lieb-Robinson bounds, which provide information on the behavior of solutions for short
(vs. arbitrarily long) times.

Numerical study of the spectrum of a linear problem associated with an integrable equation is a method that has been previously used to understand nearly-integrable dynamics. In a series of papers \cite{AHS3, AHS4, AHS2, AHS1}, following \cite{AH} and \cite{MBFO}, M.~Ablowitz, B.~Herbst, and C.~Schober studied the effects of the perturbations that are induced by truncations, discretizations, and roundoff errors present in the numerical computations on solutions of certain integrable systems, such as the nonlinear Schr\"{o}dinger and the sine-Gordon equations on periodic spatial domain. The authors numerically tracked the evolution of eigenvalues and the phase portraits associated with initial data in the neighborhood of homoclinic manifolds, and revealed how numerically induced perturbations can lead to chaos due to the unstable nature of the initial data. In our work, however, we consider certain fixed perturbations of the integrable Toda Hamiltonian and investigate long-time behavior of scattering data associated with solutions that evolve under the dynamics that result from the perturbed Hamiltonian. In an earlier study \cite{FFM}, W.~Ferguson, H.~Flaschka, and D.~McLaughlin numerically investigated the relationship between excitation of action-variables and corresponding solutions of the periodic Toda lattice for a variety of initial conditions. The authors also computed the Toda action variables corresponding to solutions of the original FPU lattice, which can be treated as a perturbation of the integrable Toda lattice, and observed that the action variables remain nearly constant under the non-integrable FPU dynamics. Our study, as mentioned above, focuses on the long-time behavior of scattering data and solutions for perturbations of the doubly-infinite Toda lattice.

One can study the behavior of solutions to
(not necessarily integrable) PDEs and P$\Delta$Es in various other asymptotic regimes,
(among them, semi-classical and continuum limits), or focus on questions of stability of certain special solutions; 
these are very well-established and well-studied problems, using a variety of techniques, including some
from the completely integrable arsenal. As an example,
let us mention the recent and on-going work of B. Dubrovin, T. Claeys, T. Grava, C. Klein and K. McLaughlin, 
(see, for example, \cite{CG_09,CG_10,DGK}) who, inspired by ideas and conjectures of B. Dubrovin \cite{D}, 
study the behavior of a large class of Hamiltonian perturbations of Burgers' equation which are required to be 
integrable up to a certain order in the perturbation parameter. 

Our paper is organized in two separate parts. In the first part, comprising Section~\ref{S:Background}, 
we present the analytical background of our problem. This background is supplemented by several results concerning
the evolution equations under consideration, which can be found in the Appendix. 
These results consist of global-in-time existence and (spatial) decay properties 
of the solutions on the one hand, as well as set-up and properties of the associated scattering data. In particular, we
deduce the evolution equations for the scattering data -- which, unsurprisingly, turn out to be nonlinear, nonlocal perturbations
of the linear evolution equations that the scattering follow in the integrable, Toda case. 
The study of the long-time asymptotics of the perturbed equations for the scattering data are known to be highly 
nontrivial even in the case of dispersive equations (see, for example, the seminal paper \cite{DZ_Pert}). 
Ultimately, we wish to perform the same type of analysis as in \cite{DZ_Pert} for the equations considered here, but the situation 
is even more complicated due to the existence, in this case, of solitary, traveling waves.

Thus the numerical part of our project is an essential first step in the study of the long-time asymptotics for solutions
of these perturbations of the Toda lattice. In the second, numerical, part of the paper we analyze and report on the long-time
behavior of solutions and scattering data for a variety of initial data and of perturbations. Even more results can be found
on the project's webpage, \texttt{http://bilman.github.io/toda-perturbations}. In Section~\ref{S:scheme}
we present our numerical scheme and justify its validity in approximating the actual solutions of the perturbed equations. 
Section~\ref{S:soliton_initial} contains our results for initial data given by a Toda soliton, which is then allowed to evolve under the 
perturbed equation. We study not only the shape of the solution, but also, more importantly, the behavior of the scattering data,
with a focus on the eigenvalues. In Section~\ref{S:soliton_free_data}, we perform the same analysis for initial data which are
without solitons in the Toda lattice -- or, equivalently, for initial data whose scattering data do not have any eigenvalues. Finally,
Section~\ref{S:clean} contains results on so-called clean solitary waves -- that is, numerical solutions which correspond to
the analytical solitary waves whose existence is known for these perturbed lattices. For these clean solitary waves
we compute the scattering data in order to compare them with the long-time asymptotics obtained from non-cleaned
initial data, as well as perform numerical experiments to simulate collisions and multi-soliton type solutions.

What we find consistently in our study is that, while the eigenvalues present at time $0$ in the scattering data of a solution
are, naturally, no longer constant, they do converge very fast to new, slightly perturbed
values. This is consistent with what was reported in \cite{FFM}. However, in addition to this, we also observe the emergence
of new eigenvalues from the edges of the absolutely continuous (ac) spectrum. The behavior of these new eigenvalues
is more difficult to analyze than that of the persistent, initial eigenvalues, as they do not appear to stabilize
at values outside of the ac spectrum. On the physical side, that means that solutions remember, as time goes to infinity, the solitary 
waves ``contained'' in their initial data, albeit with slightly modified velocities and amplitudes, 
but that in addition new, small ``bumps"
appear instantaneously when the perturbed evolution starts. These new bumps do not appear to approach stationary
waves, but rather their speeds and amplitudes seem to decrease with time, which could make them in fact part of
the dispersive tails of the solutions. However, our current numerical analysis cannot confidently predict the long time
behavior of these new waves (or, equivalently, of their associated eigenvalues), and a different project,
based on the numerical study of Riemann-Hilbert problems, is currently investigating this phenomenon.

\section{Background and Evolution Equations}\label{S:Background}
We consider the classical problem of a 1-dimensional chain
of particles with nearest neighbor interactions. We will assume throughout that the
system is uniform (contains no impurities) and that the mass of each
particle is normalized to 1. The equation that governs the evolution is then:
\begin{equation}\label{E:FPUeq}
\partial_t^2{q}_n(t)=V_\varepsilon'\big(q_{n+1}(t)-q_n(t)\big)-V_\varepsilon'\big(q_n(t)-q_{n-1}(t)\big)\,,
\end{equation}
where $q_n(t)$ denotes the displacement of the $n^{\text{th}}$
particle from its equilibrium position, and $V$ is the interaction potential between neighboring
particles. We will focus on potentials of the form:
\begin{equation*}
V_\varepsilon(r) = V_0(r) + \varepsilon u(r)\,,
\end{equation*}
where $V_0$ is the well-known Toda potential (see, for example, \cite{Toda}) $V_0(r)=e^{-r}+r-1$,
$\varepsilon >0$, and $u$ is a (well-chosen) perturbation potential.
In order to ensure that we are studying a meaningful case, 
we must make certain assumptions:
\begin{assumption}\label{A}
We consider $V_\varepsilon\in C^2(\mathbb{R})$ such that:
\begin{enumerate}
\item[{\rm (}i{\rm )}] $V_\varepsilon(r)\geq 0$ for all $r\in\mathbb R$;
\item[{\rm (}ii{\rm )}] $V_\varepsilon(0)=V_\varepsilon'(0)=0$ and $V_\varepsilon''(0)>0$;
\item[{\rm (}iii{\rm )}] $V_\varepsilon(r)\to\infty$ as $|r|\to\infty$;
\item[{\rm (}iv{\rm )}] $V_\varepsilon$ is super-quadratic on at least one side of the origin, i.e.,
\begin{equation*}
\frac{V_\varepsilon(r)}{r^2}\text{ increases strictly with } |r|\text{ for all } r\in\Lambda \text{, with } \Lambda=\mathbb R_+ \text{ or }
\Lambda=\mathbb R_-\,.
\end{equation*}
\end{enumerate}
\end{assumption}

It is then known (see, for example, \cite{IST}) that, if $V_\varepsilon$ satisfies Assumption \ref{A}, and
if we rewrite \eqref{E:FPUeq} as a first order system
\begin{equation}\label{E:eom_ps}
\begin{cases}
\begin{aligned}
\partial_t p_n &= V'_{\varepsilon}(q_{n+1}-q_n) - V'_{\varepsilon}(q_n-q_{n-1}) \\
\partial_t q_n &= p_n\,,
\end{aligned}
\end{cases}
\end{equation}
then the system is well-posed globally in time in $\big((p_n)_{n\in\mathbb Z},(q_{n+1}-q_n)_{n\in\mathbb Z}\big)
\in\ell^2(\mathbb{Z}) \oplus \ell^2(\mathbb{Z})$ for any initial data. The system is Hamiltonian, with Hamiltonian function
\begin{equation*}
\mathcal{H}_{\varepsilon}(p,q) = \sum_{n\in\mathbb{Z}}\left[\frac{p_n^2}{2} + V_{\varepsilon}(q_{n+1}-q_n)\right]\,,
\end{equation*}
which is finite and conserved in time along solutions. Furthermore, the solutions are bounded uniformly in time:
\begin{equation*}
\big\|\big(p_n(t)\big)_{n\in\mathbb Z}\big\|_{\ell^2(\mathbb Z)\cap\ell^\infty(\mathbb Z)}+
\big\|\big(q_{n+1}(t)-q_n(t)\big)_{n\in\mathbb Z}\big\|_{\ell^2(\mathbb Z)\cap\ell^\infty(\mathbb Z)}\leq C\,,
\end{equation*}
for some constant $C>0$ which depends only on the initial data.

The Toda potential $V_0$ is easily seen to satisfy Assumption~\ref{A}.
We note that, in order to ensure that $V_\varepsilon$ satisfies Assumption~\ref{A}, 
$u$ must satisfy $u(0)=u'(0)=0$, but is not required to be non-negative or growing at $\pm\infty$;
in fact, $u$ is allowed to be a bounded function since $V_0$ already satisfies the necessary growth conditions. 
In light of this discussion, the perturbations we consider 
in our numerical studies are not limited to convex, non-negative functions such as $u(r)=r^{2N}$, for $N\in\mathbb{N}$ with 
$N\geq 2$; we also include functions which are not 
convex or positive, such as $u(r)=r^{2N+1}$, for $N\in\mathbb{N}$ (suitably scaled by $\varepsilon$), 
as well as bounded functions such as $u(r) = 1 - \cos{r}$.

In addition to the global in time well-posedness, we know from Theorem 1 of \cite{FW} that
if $V_\varepsilon$ satisfies Assumption~\ref{A}, then there exists a constant $K_0 \geq 0$ such that for every $K>K_0$ the 
system \eqref{E:eom_ps} possesses a nontrivial traveling solitary wave with finite kinetic energy 
and with average potential energy $K$. 
These solitary waves $(q_n)_{n\in\mathbb{Z}}$ have the following properties:
\begin{itemize}
\item they are monotone functions, increasing (expansion waves) if $\Lambda = \mathbb{R}^{+}$ and decreasing (compression waves) if $\Lambda=\mathbb{R}^{-}$.
\item they are localized, in the sense that $q_{n+1} - q_n \rightarrow 0$ as $n\rightarrow \pm\infty$.
\item they are supersonic; that is, their wave speeds $c$ satisfy $c^2 > V''(0)$.
\end{itemize}

In the case of the Toda lattice ($\varepsilon=0$), these waves are solitons, and they are known
explicitly. For example, 1-soliton solutions for the displacements are given by the 2-parameter family
\begin{equation}\label{E:1sol_q}
q_n(t) = \log{\left(\frac{1+\frac{\gamma}{1-e^{-2k}}e^{-2k(n-1) - 2k\sigma c t}}{1+\frac{\gamma}{1-e^{-2k}}e^{-2kn - 2k\sigma c t}}\right)}\,,
\end{equation}
where $\gamma>0$, $k > 0$ is the wave number, $c = \frac{\sinh{k}}{k} > 1$ is the speed of propagation, 
and $\sigma = \pm 1$ is the constant determining the direction of propagation. 

As is now well-known, the existence of solitons
in this case is due to the complete integrability of the Toda lattice. 
Complete integrability of the Toda lattice was proven by H.~Flaschka in 1974
by introducing a change of variables that allowed him to set
the system in Lax pair form \cite{Fla1,Fla2}.
Since our aim is to follow the behavior under
the perturbed evolution of quantities linked to the completely integrable lattice, we will start with
Flaschka's change of variables, but apply it to the perturbed lattice:
\begin{equation*}
a_n=\frac12 e^{-\frac{q_{n+1}-q_n}{2}} \quad\text{and}\quad
    b_n=-\frac12 p_n\,. 
\end{equation*}
Note that the new variables satisfy $(a_n)_{n\in\mathbb{Z}},(1/a_n)_{n\in\mathbb Z},(b_n)_{n\in\mathbb{Z}}
\in\ell^\infty(\mathbb{Z})$. We remark here that for any potential $V_\varepsilon$ considered in this study, the super quadratic 
growth condition (Assumption~\ref{A}(iv)) is satisfied on $\Lambda=(-\infty, 0)$, and therefore solitary waves obey $a_n > \frac{1}{2}$ for all $n\in\mathbb{Z}$.
Introduce the second-order linear difference operators $L$ and $P$ defined on $\ell^2(\mathbb{Z})$ by
\begin{align*}
(Lf)_n &= a_{n-1}f_{n-1}+ b_nf_n + a_nf_{n+1}\\ 
(Pf)_n &= -a_{n-1}f_{n-1} + a_nf_{n+1}\,.
\end{align*}
and recall that in the standard basis $L=L\big(\{a_n\}_{n\in\mathbb{Z}},\{b_n\}_{n\in\mathbb{Z}}\big)$ is a Jacobi matrix 
(symmetric, tridiagonal with positive off-diagonals) and $P$ is a bounded skew-adjoint operator, i.e., $P^* = -P$:
\begin{equation*}
L=
\begin{pmatrix} 
  \ddots    &\ddots & 	\ddots	 & 		   \\
   \ddots & b_{n-1} & a_{n-1} & 0           \\
 \ddots          & a_{n-1} & b_n    & a_n     & \ddots        	 \\
	   & 0         & a_n    & b_{n+1} & \ddots       	 \\
		   &		  & \ddots      &\ddots & \ddots         \\
\end{pmatrix}
\,,~
P=
\begin{pmatrix} 
  \ddots    &\ddots & 	\ddots	 & 		   \\
   \ddots & 0 & a_{n-1} & 0           \\
 \ddots          & -a_{n-1} & 0    & a_n     & \ddots        	 \\
	   & 0         & -a_n    &0 & \ddots       	 \\
		   &		  & \ddots      &\ddots & \ddots         \\
\end{pmatrix}.
\end{equation*}
The evolution equations for the old $p$ and $q$, and the new $a$ and $b$ variables can be found
through straightforward and short calculations.
The equations of motion induced by $\mathcal H_\varepsilon$ are
\begin{equation}\label{E:eom_pert_pq}
\begin{cases}
\partial_t{p}_n= e^{-(q_n-q_{n-1})}-e^{-(q_{n+1}-q_n)}+ \varepsilon \big[u'(q_{n+1} - q_n) - 
u'(q_n - q_{n-1})\big] \\
\partial_t{q}_n=p_n\,.
\end{cases}
\end{equation}
In terms of the $a,b$ variables, this translates to
\begin{equation}\label{E:eom_p_ab}
\begin{cases}
\partial_t{a}_n = a_n\big(b_{n+1} - b_n\big)\\
\partial_t{b}_n = 2\big(a_n^2 - a_{n-1}^2\big) + \frac{\varepsilon}{2} \big(c_{n-1} - c_n \big)\,
\end{cases}
\end{equation}
where
\begin{equation*}
c_n = u'\left(\log{\left(\frac{1}{4a_n^2}\right)}\right)~\text{ for all}~n\in\mathbb Z\,.
\end{equation*}
The evolution of the Jacobi matrix $L$ is given by a perturbation to the usual Lax pair of the Toda lattice:
\begin{equation}\label{E:eom_p_L}
\partial_t{L} = [P, L] + \varepsilon U\,,
\end{equation}
where $U$ is a diagonal matrix with
$U_{nn} =\frac{1}{2} \big( c_{n-1} - c_n \big)$ for all $n\in\mathbb Z\,.$
We note that one of the conservation laws associated to the Toda lattice still holds true; namely a straightforward calculation 
shows that $\tr(L-L_0)$ (where $L_0=L\big(\{\frac12\},\{0\}\big)$ is the free Jacobi matrix) is conserved in time by the perturbed 
evolution. It is however no longer true that traces of higher powers of $L-L_0$ are conserved, which is not surprising as we 
expect the perturbed dynamics to be non-integrable.

As is well-known, the integrability of the Toda lattice can be exploited via the
bijective correspondence between the Lax operator, which in this case is the Jacobi matrix $L$,
and its scattering data. This correspondence goes under the name of direct and inverse scattering
theory, and has already been studied in detail. While we do not attempt to give a comprehensive
survey of the relevant references, the interested reader may enter the subject, for 
example, through \cite{T}, \cite{Tesch01}, and the references therein.

Recall that, since $L$ is a bounded self-adjoint operator, we know that the spectrum $\sigma(L) \subset \mathbb{R}$ and $L$ 
has no residual spectrum. For details, see \cite{RS}. Furthermore, if the sequences $(a_n)_{n\in\mathbb{Z}}$ and $(b_n)_{n\in\mathbb{Z}}$ 
satisfy the hypotheses of Theorem~\ref{T:SpatialDecay} (\ref{A:app}) at time t=0 (and hence at all times $t\geq 0$), then one can further 
conclude that the spectrum of $L$ consists of a purely absolutely continuous part
$\sigma_{\rm{ac}}{(L)} = [-1,1]$
and a (finite) pure point part
\begin{equation*}
\sigma_{\text{pp}}{(L)} = \{ \lambda_j\, \colon j=1,\dots,N \}\subset (-\infty,-1)\cup(1,\infty)\,.
\end{equation*}
All the eigenvalues $\lambda_j$ are simple.
For convenience, we map the spectral data
via the so-called \textit{Joukowski transformation}:
$$
\lambda=\frac12\,\left(z+\frac1z\right),~\quad z=\lambda-\sqrt{\lambda^2-1},~\quad \lambda\in\mathbb{C},~|z|\leq 1\,. 
$$
Using this parameter $z$, it is standard to show that for any $0<|z|\leq 1$ there exist unique \textit{Jost 
solutions} $\varphi_{\pm}(z,n)$, i.e.\ solutions of 
\begin{equation}\label{E:defnJost}
L\varphi_{\pm}(z,\cdot)=\frac{z+z^{-1}}{2}\,\varphi_{\pm}(z,\cdot)\,,
\end{equation} 
normalized such that
\begin{equation}\label{E:normJost}
\lim_{n\to\pm\infty} \varphi_{\pm}(z,n)\cdot z^{\mp n}=1\,.
\end{equation}
The functions $z\mapsto z^{\mp n}\varphi_\pm (z,n)$ are holomorphic in the domain
$|z|<1$ and continuous on $|z|\leq 1$. If we focus on the unit circle $|z|=1$ with $z^2\neq 1$,
we observe that $\varphi_{\pm}(z,\cdot), \varphi_{\pm}(z^{-1},\cdot)$ are linearly independent,
and hence we can obtain the \textit{scattering relations}:
\begin{align}
T(z)\varphi_+(z,n) = R_{-}(z) \varphi_-(z,n) + \varphi_-(z^{-1}, n)\label{E:scatteringsoln1}\\
T(z)\varphi_-(z,n) = R_{+}(z) \varphi_+(z,n) + \varphi_+(z^{-1}, n)\label{E:scatteringsoln2}
\end{align}
for all $n\in\mathbb{Z}$, and for $|z|=1$, with $z^2 \neq 1$.

But more is true. Indeed, the transmission coefficient $T(z)$ has a meromorphic extension inside the 
entire unit disk $|z|\leq1$, with finitely many simple poles $\zeta_k \in (-1,0) \cup (0,1)$, $k=1, \dots, N$. 
The locations of the poles are related to the eigenvalues of the original Jacobi matrix through the 
Joukowski relation:
\begin{equation*}
\lambda_k=\frac{\zeta_k+\zeta_k^{-1}}{2},~\text{ for all }~k\in\{1,\dots,N\}\,.
\end{equation*}
Note that positive $\zeta_k$'s correspond to the eigenvalues above 1, while negative $\zeta_k$'s 
correspond to the eigenvalues below -1.
An important description of the locations of the poles of $T$ is as the points $z$ inside
the unit disk, $|z|<1$, where the Jost solutions $\varphi_+(z,\cdot)$ and $\varphi_-(z,\cdot)$
are constant multiples of each other -- and thus both in $\ell^2(\mathbb{Z})$. One can then
compute the residue of $T$ at each simple pole $\zeta_k$, as follows:
\begin{equation*}
\text{Res}(T;\zeta_k) = -\mu_k\zeta_k \gamma_{k,+} = -\, \frac{\zeta_k\gamma_{k,-}}{\mu_k}\,,
\quad\text{where }\gamma_{k,\pm} = \displaystyle\frac{1}{\big\|\varphi_{\pm}(\zeta_k, \cdot)\big\|_{\ell^2}^2}
\end{equation*}
are the \textit{norming constants}, and $\mu_k$ is the associated proportionality constant,
$\varphi_+(\zeta_k, \cdot) = \mu_k\varphi_-(\zeta_k, \cdot)\,.$

It is a fundamental fact of scattering theory
that Jacobi matrices $L$ whose coefficients decay fast enough (as the hypothesis of Theorem~\ref{T:SpatialDecay} in~\ref{A:app}) 
are in bijective correspondence 
with their scattering data $\{R,\zeta_k,\gamma_k\,|\,1\leq k\leq N\}$, where we use the standard convention of setting
\begin{equation*}
R(z) = R_{+}(z)~\text{ and }~\gamma_k=\gamma_{k,+}\,.
\end{equation*}
Note that implicit in this statement is the fact that from $R_+$ and the $\zeta_k$'s and $\gamma_{k,+}$'s one can fully reconstruct $R_-$, $T$ and the $\gamma_{k,-}$'s.
From this point onwards we will always use the notation above for $R$ and $\gamma_k$, unless specified otherwise.

Furthermore, if the Jacobi matrix $L$ evolves according to the Toda lattice, \eqref{E:eom_p_L} with $\varepsilon=0$, 
then the $\zeta_k$'s are constant, while $R$ and the $\gamma_{k}$'s satisfy simple linear evolution equations.
If $L$ is allowed to evolve according to the perturbed equation \eqref{E:eom_p_L}, then these linear equations
pick up perturbation terms (see \eqref{E:eom_p_R_+}, \eqref{E:eom_p_eigenvalues}, and \eqref{E:eom_p_norm})
which require full knowledge of not only $L(t)$, but also the scattering data and the Jost solutions. 
Due to the intricate structure of these evolution equations, it turns out to be easier, numerically, 
to compute the scattering data at a time $t$
directly from the  (truncation of the) Jacobi matrix $L(t)$. In the
simplest case, that of the evolution of the eigenvalues, we numerically compare the results of our direct calculations
with the evolution equation \eqref{E:eom_p_eigenvalues}. The results of this comparison can be found on
the website of the project\footnote{Project website: \texttt{http://bilman.github.io/toda-perturbations}}.

\section{The Numerical Scheme}\label{S:scheme}
Before moving on to presenting the numerical results of our work, we describe the numerical scheme used for solving \eqref{E:eom_p_L} and computing scattering data associated to the (doubly-infinite) Jacobi matrix $L$. 
\subsection{Time-stepping for computing $L(t)$}
We note that $L$, $P$, and $U$ are discrete operators. Therefore, there is no need for spatial discretization to solve \eqref{E:eom_p_L} numerically. To approximate the solutions we truncate the doubly-infinite lattice at particles with indices $\pm N\in\mathbb{N}$ for some large $N$, and work with the truncated matrices $L_N$
\begin{equation*}
L_N=
\begin{pmatrix} 
b_{-N} & a_{-N}    &   	0	   		 &             &  \\ 
a_{-N} & b_{-N+1} & a_{-N +1} 		 & \ddots		 &  \\
0         &  a_{-N +1} &\ddots  & \ddots         & \ddots& \\
          &	\ddots	   & \ddots              & \ddots & a_{N-2}    & 0     	 \\
           &		   &		  \ddots         & a_{N-2} & b_{N-1}   & a_{N-1}       \\
          &		   &		   		 & 	0	 &a_{N-1}		 & b_N 	     \\
\end{pmatrix}\,,
\end{equation*}
and accordingly, $P_N$. Since the solutions $(a,b)$ of \eqref{E:eom_p_ab} satisfy
\begin{equation*}
\big( a(t) - \tfrac{1}{2}, b(t) \big) \in \ell^2(\mathbb{Z}) \oplus \ell^2(\mathbb{Z})~\text{ for all}~t \geq 0\,,
\end{equation*}
we close the finite system of differential equations for the truncated system by imposing the Dirichlet boundary conditions given by
\begin{equation}\label{E:bc}
a_{-(N+1)} - \frac{1}{2} = a_{N} - \frac{1}{2} = 0~\text{ and }~b_{\pm(N+1)} = 0\,,
\end{equation}
and consider
\begin{equation}\label{E:eom_finite}
\begin{cases}
\partial_t{a}_n = a_n\big(b_{n+1} - b_n\big)\\
\partial_t{b}_n = 2\big(a_n^2 - a_{n-1}^2\big) + \frac{\varepsilon}{2} \big(c_{n-1} - c_n \big)\,,
\end{cases}
\end{equation}
for $n\in\{-N,-N+1,\dots,N-1,N\}$, subject to the boundary conditions given in \eqref{E:bc}. 

To integrate \eqref{E:eom_finite}, we adopt the $4^{\text{th}}$-order Runge-Kutta time-stepping method. We define the temporal discretization error $E_{\Delta t}$ at time $t$ by
\begin{equation*}
E_{\Delta t}(t) = \left \| L_N (t) - L_N^* (t) \right\|_{HS}\,,
\end{equation*}
where $\Delta t$ is the step size for temporal discretization and $\| \cdot \|_{HS}$ is the Hilbert-Schmidt norm. When an exact solution is available at hand, $L_N^*$ stands for the finite truncation of the exact solution to the infinite dimensional problem. Otherwise, $ L_N^*$ is taken to be the solution obtained by choosing $\Delta t$ extremely small. Table \ref{T:RungeKuttaConv} lists the discretization errors measured at $t=320$ and demonstrate that
\begin{equation*}
\alpha=\log_2 \left | \frac{E_{2\Delta t}(t)}{E_{\Delta t}(t)}\right |
\end{equation*}
approaches 4, the order of the time-stepping method, as the temporal step-size is diminished by half. The underlying experiment is comprised of pure 1-soliton initial data ($k=0.4$) which is let to evolve in the Toda lattice ($\varepsilon = 0$) and a perturbed lattice with with $u(r)=r^2$, $\varepsilon=0.05$.
\begin{table}[htp]
\begin{center}
\small
\begin{tabular}{|c|c|c|}
\cline{1-3}
$\Delta t$ & $E_{\Delta t}(T)$, $\varepsilon=0$ & $\alpha$  \\
[0.5ex]
\hline
$4h$ & 1.3438e-06 & --\\
[1ex] 
$2h$ & 8.1798e-08 & 4.0381 \\
[1ex] 
 $h$ & 5.0433e-09 & 4.0196 \\
 [1ex] 
 $\frac{h}{2}$ & 3.1303e-10 & 4.0099 \\
 [1ex] 
 \hline
\end{tabular} 
\qquad
\begin{tabular}{|c|c|c|}
\cline{1-3}
$\Delta t$ & $E_{\Delta t}(T)$, $\varepsilon=0.05$ & $\alpha$  \\
[0.5ex]
\hline
 $4h$ & 1.4236e-06 & --\\
[1ex] 
 $2h$ & 8.6975e-08 &4.0328 \\
[1ex] 
 $h$ &5.3727e-09 &4.0168 \\
 [1ex] 
 $\tfrac{h}{2}$ & 3.3363e-10 &4.0093\\
 [1ex] 
 \hline
\end{tabular} 
\caption{Temporal discretization errors and $\alpha$ measured at time $T=320$ for one-soliton initial data, $k=0.4$; $h=0.02$, and $N=2^{13}$ }
\label{T:RungeKuttaConv}
\end{center}
\end{table}%
Figure~\ref{F:Error} displays the error growth in time.
\begin{figure}[h!]\centering
\subfigure[$E_{\Delta t }(t)$]{
\includegraphics[width=188pt]{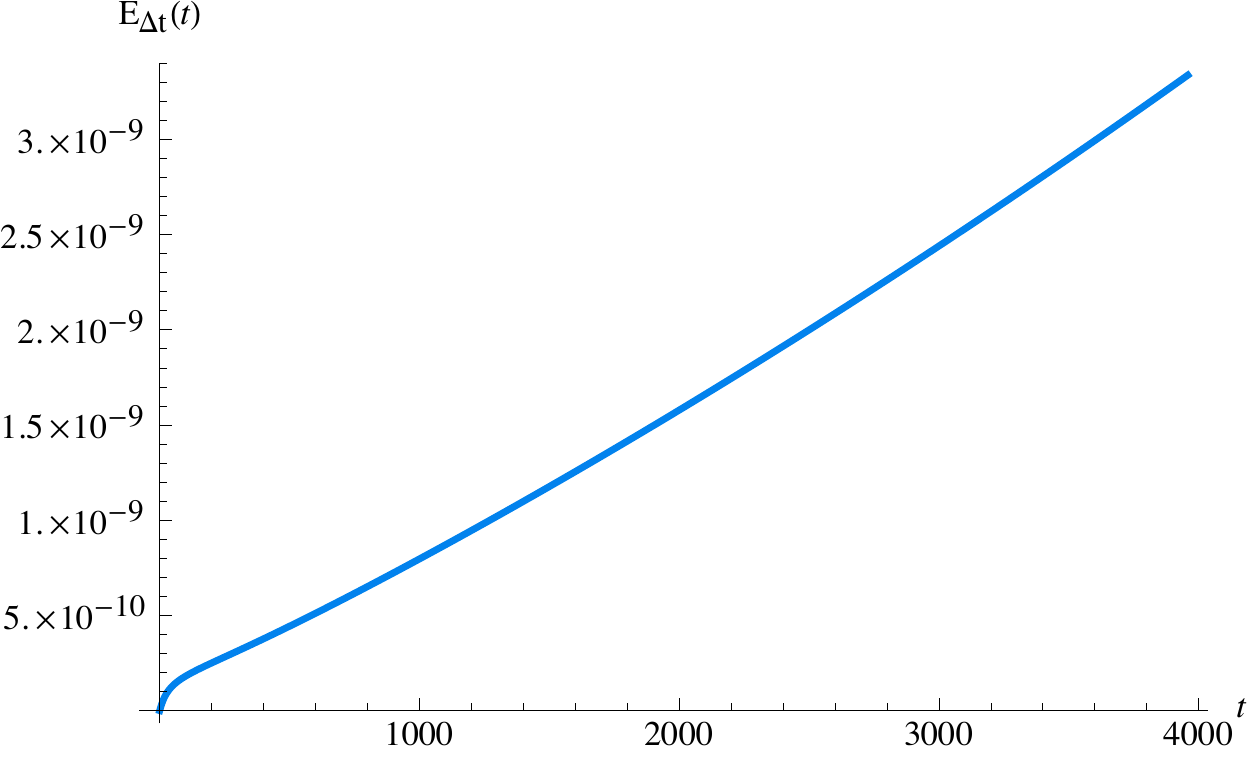}
}
~
\subfigure[$\log\left(E_{\Delta t}(t)\right)$]{
\includegraphics[width=188pt]{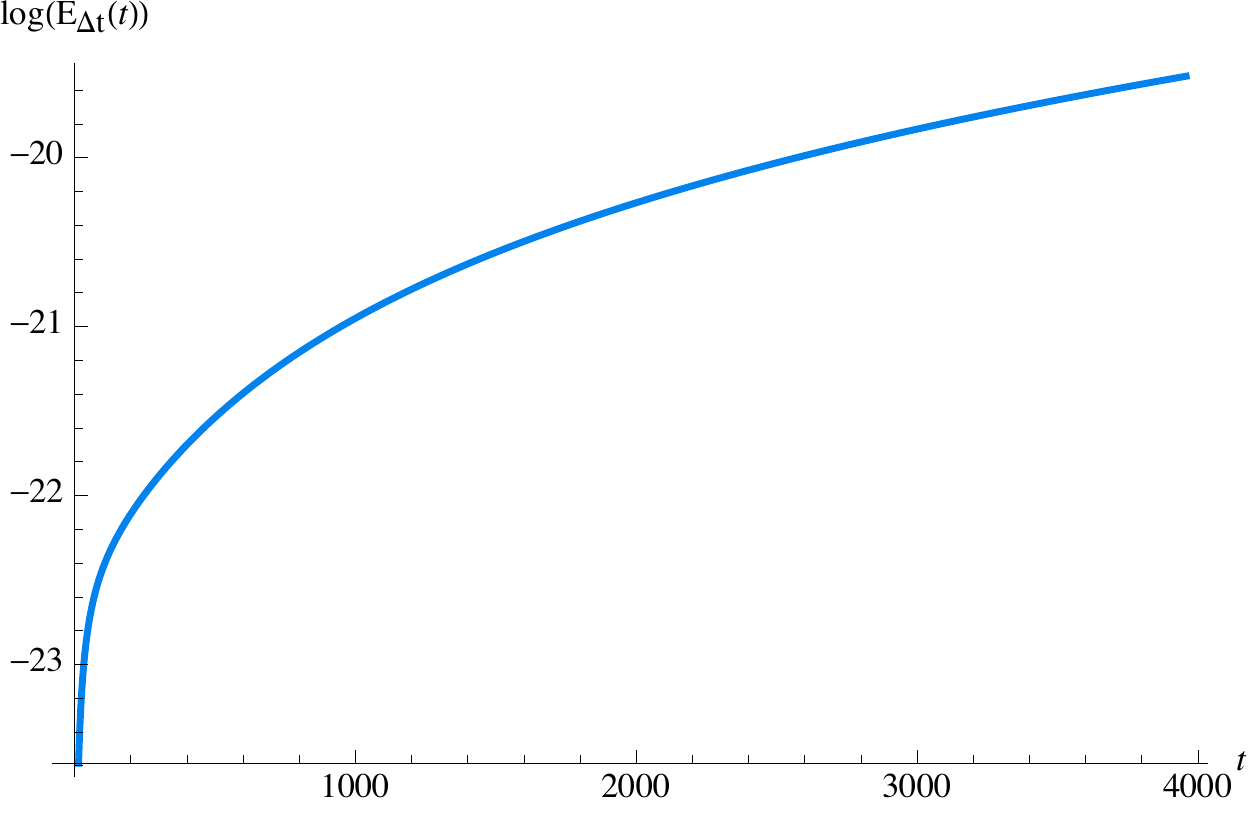}
}
\caption{$E_{\Delta t}(t)$ and $\log\left(E_{\Delta_t}(t)\right)$, with $\Delta t = 10^{-2}$ and $N=2^{14}$.}
\label{F:Error}	
\end{figure}
\subsection{Computing eigenvalues of $L(t)$}
We approximate the discrete spectrum of $L(t)$ by computing the eigenvalues of its truncation $L_N(t)$ using the QR-algorithm (provided by \texttt{LAPACK}, \cite{lapack}) at each time-step. Since $L$ is a discrete operator, no spatial discretization is used to obtain the matrix $L_N$. To justify accuracy of this method, we begin with some general facts concerning eigenvalues of doubly-infinite (whole-line) Jacobi matrices and eigenvalues of their finite truncations. We omit the proofs of these facts and refer the reader to the articles they are taken from. As in \cite{HS}, given any bounded self-adjoint operator $A$ on $\ell^2(\mathbb{Z})$, define
\begin{equation}\label{E:variational}
\begin{aligned}
\lambda^+_j (a) &= 	\inf_{\phi_1,\dots, \phi_{j-1}}\sup_{\substack{\psi\colon \psi \perp \phi_j \\ \psi\in\ell^2(\mathbb{Z}), \| \psi \| = 1}} \langle \psi, A \psi \rangle\\
\lambda^-_j (a) &= 	\sup_{\phi_1,\dots, \phi_{j}}\inf_{\substack{\psi\colon \psi \perp \phi_j \\ \psi\in\ell^2(\mathbb{Z}), \| \psi \| = 1}} \langle \psi, A \psi \rangle\,.
\end{aligned}
\end{equation}
From the definitions, we have
\begin{equation*}
\lambda^-_1 (a) \leq \lambda^-_2 (a) \leq \cdots \leq \lambda^+_2(a) \leq \lambda^+_1 (a)\,.
\end{equation*}
\begin{prop}[p. 111, \cite{HS}]
The min-max principle \cite[Theorem XIII.1]{RS4} asserts that
\begin{itemize}
\item[(1)] $\lim_{n\rightarrow +\infty} \lambda^{+}_n (a) = \sup \sigma_{\textrm{ess}}(a)$ and $\lim_{j\rightarrow +\infty} \lambda^{-}_j (a) = \inf \sigma_{\textrm{ess}}(a)$.
\item[(2)] If $A$ has $M^{+}$ eigenvalues counting multiplicity above the essential spectrum, these eigenvalues are precisely $\lambda^{+}_1, \lambda^{+}_2, \dots \lambda^{+}_{M^+}$, and $\lambda^{+}_j = \sup \sigma_{\textrm{ess}}(a)$ for $j>M^+$.
\item[(3)] If $A$ has $M^{-}$ eigenvalues counting multiplicity below the essential spectrum, these eigenvalues are precisely $\lambda^{-}_1, \lambda^{-}_2, \dots \lambda^{-}_{M^-}$, and $\lambda^{-}_j = \sup \sigma_{\textrm{ess}}(a)$ for $j>M^-$.
\end{itemize}
\end{prop}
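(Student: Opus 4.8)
The statement is the classical min--max (Courant--Fischer) theorem, and the cleanest route is through the spectral theorem for the bounded self-adjoint operator $A$, so the plan is to begin by invoking it to write $A=\int_{\mathbb R}\lambda\,dE(\lambda)$ for a projection-valued measure $E$. The first reduction is to observe that the assertions for $\lambda_j^-$ follow from those for $\lambda_j^+$ applied to $-A$, since $\lambda_j^-(A)=-\lambda_j^+(-A)$ and $\esssigma(-A)=-\esssigma(A)$, with eigenvalues below $\inf\esssigma(A)$ corresponding to eigenvalues above $\sup\esssigma(-A)$. Thus it suffices to treat $\lambda_j^+$ and the top of the spectrum. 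Set $\mu:=\sup\esssigma(A)$; by the definition of essential spectrum, the part of $\sigma(A)$ lying in $(\mu,\infty)$ consists of isolated eigenvalues of finite multiplicity, which I list in decreasing order with multiplicity as $e_1\ge e_2\ge\cdots>\mu$ (either a finite list of length $M^+$, or an infinite list that can accumulate only at $\mu$).

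The heart of the argument is a two-sided estimate showing $\lambda_j^+=e_j$ whenever $e_j$ exists (i.e.\ $e_j>\mu$). For the upper bound I would choose $\phi_1,\dots,\phi_{j-1}$ to be orthonormal eigenvectors for $e_1,\dots,e_{j-1}$; then any admissible unit $\psi$ (orthogonal to all of them) lies in the orthogonal complement of the corresponding eigenspaces, on which $\langle\psi,A\psi\rangle\le e_j$, so the inner supremum is $\le e_j$ and hence $\lambda_j^+\le e_j$. For the matching lower bound I would argue that for \emph{any} admissible choice of $\phi_1,\dots,\phi_{j-1}$, the $j$-dimensional span $V$ of eigenvectors for $e_1,\dots,e_j$ meets the $j-1$ orthogonality constraints in at least a line, so there is a unit $\psi\in V$ with $\psi\perp\phi_1,\dots,\phi_{j-1}$; since $A|_V$ has smallest eigenvalue $e_j$, we get $\langle\psi,A\psi\rangle\ge e_j$, whence the inner supremum is $\ge e_j$ for every choice and $\lambda_j^+\ge e_j$. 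This dimension-counting step is the standard engine of the proof.

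It remains to handle the indices past the discrete eigenvalues, that is, to show $\lambda_j^+=\mu$ once $j>M^+$ in the finite case (and to extract the limit in general). For the upper bound I would take $\phi_1,\dots,\phi_{M^+}$ to exhaust the eigenvectors above $\mu$ (padding out to $j-1$ vectors harmlessly); then every admissible $\psi$ lies in the range of $E((-\infty,\mu])$, giving $\langle\psi,A\psi\rangle\le\mu$ and $\lambda_j^+\le\mu$. For the lower bound I would use the infinite-rank characterization of the essential spectrum: since $\mu\in\esssigma(A)$ and only finitely many spectral points lie above it, the projection $E((\mu-\delta,\mu])$ has infinite rank for every small $\delta>0$, so its range contains a unit vector $\psi$ orthogonal to the $j-1$ given $\phi_i$, and for such $\psi$ one has $\langle\psi,A\psi\rangle>\mu-\delta$; letting $\delta\to0$ yields $\lambda_j^+\ge\mu$, completing part (2). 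Part (1) then follows: in the finite case $\lambda_j^+=\mu$ for all large $j$, while in the infinite case $\lambda_j^+=e_j\downarrow\mu$ because the $e_j$ can accumulate only on $\esssigma(A)$; either way $\lim_j\lambda_j^+=\mu=\sup\esssigma(A)$, and the dual argument for $-A$ gives part (3).

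The main obstacle is not any single computation but the care required in bookkeeping the regimes simultaneously --- finitely versus infinitely many eigenvalues above $\mu$, and the transition at $j=M^+$ --- together with the clean use of the infinite-rank characterization of $\esssigma$ to manufacture test vectors orthogonal to an arbitrary finite family. Since all of this is precisely the content of \cite[Theorem XIII.1]{RS4}, in the paper itself I would simply cite that reference; the sketch above records the argument one would reproduce if a self-contained proof were wanted.
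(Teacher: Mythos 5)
Your proposal is correct: the sketch you give (reduction to $-A$, the two-sided estimate via eigenvector test spaces and dimension counting, and the infinite-rank characterization of $\esssigma(A)$ to handle indices past $M^{+}$) is the standard Courant--Fischer argument with no gaps. It also matches the paper's treatment, since the paper does not prove this proposition at all --- it is quoted as background from \cite{HS}, with the text explicitly stating that proofs of such facts are omitted in favor of the citation to \cite[Theorem XIII.1]{RS4} --- which is exactly what you say you would do in the paper itself.
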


The following result describes the impact of truncations on eigenvalues of Jacobi matrices.
\begin{prop}[Proposition 2.2, \cite{HS}]\label{P:EV1}
Let $\Pi$ be an orthogonal projection, and $A$ be a bounded self-adjoint operator on $\ell^2(\mathbb{Z})$. Define $A_{\Pi} = \Pi A \Pi$, restricted as an operator onto the range of $\Pi$. Then
\begin{equation*}
\lambda^{+}_j (A_{\Pi}) \leq \lambda^{+}_j (a)~\text{ and }~\lambda^{-}_j (A_{\Pi}) \geq \lambda^{-}_j (a)\,,
\end{equation*}
for $n > 0$.
\end{prop}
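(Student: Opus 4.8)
The plan is to read off both inequalities directly from the variational formulas in \eqref{E:variational}, using a single projection trick together with two elementary observations. The first observation is that the quadratic forms of $A_\Pi$ and $A$ coincide on the range of $\Pi$: for any $\psi\in\mathrm{Ran}(\Pi)$ we have $\Pi\psi=\psi$, so $\langle\psi,A_\Pi\psi\rangle=\langle\psi,\Pi A\Pi\psi\rangle=\langle\Pi\psi,A\Pi\psi\rangle=\langle\psi,A\psi\rangle$. Thus computing the min-max values $\lambda_j^{\pm}(A_\Pi)$ amounts to running the very same variational problem as for $A$, but with the trial vectors $\psi$ and the constraint vectors $\phi_i$ both confined to $\mathrm{Ran}(\Pi)$.

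The second observation is that, on $\mathrm{Ran}(\Pi)$, orthogonality to a vector $\phi$ and orthogonality to its projection $\Pi\phi$ are the same constraint: for $\psi\in\mathrm{Ran}(\Pi)$, using that $\Pi$ is self-adjoint and $\Pi\psi=\psi$, one gets $\langle\psi,\Pi\phi\rangle=\langle\Pi\psi,\phi\rangle=\langle\psi,\phi\rangle$. Consequently, given arbitrary vectors $\phi_1,\dots,\phi_{j-1}\in\ell^2(\mathbb Z)$, the admissible set for $A_\Pi$ cut out by the projected constraints $\Pi\phi_1,\dots,\Pi\phi_{j-1}$ equals $\{\psi\in\mathrm{Ran}(\Pi):\psi\perp\phi_i,\ \|\psi\|=1\}$, which is a \emph{subset} of the admissible set $\{\psi\in\ell^2(\mathbb Z):\psi\perp\phi_i,\ \|\psi\|=1\}$ appearing in the formula for $A$.

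To prove $\lambda_j^{+}(A_\Pi)\le\lambda_j^{+}(A)$ I would fix, for a tolerance $\delta>0$, constraint vectors $\phi_1,\dots,\phi_{j-1}\in\ell^2(\mathbb Z)$ that bring the inner supremum for $A$ within $\delta$ of $\lambda_j^{+}(A)$, then feed the projected vectors $\Pi\phi_i$ into the supremum defining $\lambda_j^{+}(A_\Pi)$. By the subset inclusion above and the quadratic-form identity, the supremum over the smaller admissible set is at most the supremum over the full one, hence at most $\lambda_j^{+}(A)+\delta$; since $\lambda_j^{+}(A_\Pi)$ is an infimum over all choices of constraint vectors in $\mathrm{Ran}(\Pi)$, this particular choice yields $\lambda_j^{+}(A_\Pi)\le\lambda_j^{+}(A)+\delta$, and $\delta\downarrow 0$ finishes it. The bound $\lambda_j^{-}(A_\Pi)\ge\lambda_j^{-}(A)$ is the mirror image: the same subset inclusion now forces the infimum over the smaller admissible set to \emph{dominate} the one over the full set, and taking the supremum over constraint vectors turns this into the desired lower bound.

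I do not expect a genuine obstacle here, as this is a standard application of the min-max principle; the only point requiring care is bookkeeping the directions of the inequalities. Passing from $\ell^2(\mathbb Z)$ to $\mathrm{Ran}(\Pi)$ has two opposing effects: the trial-vector space shrinks (which on its own would lower suprema and raise infima), while the constraint vectors are simultaneously restricted (which could in principle enlarge admissible sets and thus push in the opposite direction). The constraint-equivalence observation is precisely what neutralizes the second effect, guaranteeing that the net movement of each min-max value is monotone in the claimed direction.
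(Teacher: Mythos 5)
Your proof is correct and follows essentially the same route as the paper's: both read the inequalities directly off the variational formulas \eqref{E:variational}, observing that passing from $A$ to $A_\Pi$ restricts the trial vectors to $\mathrm{Ran}\,\Pi$, which can only decrease suprema and increase infima. The paper compresses this to a single sentence, while you additionally spell out the constraint-equivalence step ($\psi\perp\phi_i$ iff $\psi\perp\Pi\phi_i$ for $\psi\in\mathrm{Ran}\,\Pi$) that justifies comparing the outer $\inf$/$\sup$ over constraint vectors — a worthwhile piece of bookkeeping that the paper's terse proof leaves implicit.
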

\begin{proof}
Changing from $A$ to $A_\Pi$ in \eqref{E:variational} adds the condition $\psi \in \text{Ran}\, \Pi$. This increases infimums and decreases supremums, hence gives us the desired inequalities.
\end{proof}
\begin{remark}\label{R:new_eigs}
Let $\lambda^{-}_1 < \lambda^{-}_2 < \cdots <\lambda^{-}_{2N+1}$ and $\lambda^{+}_{2N+1} < \dots <\lambda^{+}_ {2} <\lambda^{+}_{1}$ denote the real simple eigenvalues of $L_N$, labeled in increasing and decreasing order, respectively. Suppose that $L$ has $M^{-}$ eigenvalues below its ac spectrum and $M^{+}$ eigenvalues above its ac spectrum. Note that $M^{\pm}$ are finite. Since $L$ is bounded, the quadratic form of $L_N$ is a restriction of the quadratic form of $L$ to $\mathbb{C}^{2N+1}$. Then by Proposition \ref{P:EV1}, for any $N\in\mathbb{N}^{+}$, we have
\begin{equation}\label{E:EVtrunc1}
\begin{aligned}
\lambda^{-}_j (L) &\leq \lambda^{-}_j (L_{N})&&\quad\text{for}\phantom{x} j =1,2,\dots,\min(2N+1, M^{-})\\
\lambda^+_{j} (L_N) &\leq \lambda^{+}_j (L)&&\quad\text{for}\phantom{x} j =1,2,\dots,\min(2N+1, M^{+})\,,
\end{aligned}
\end{equation} 
as in \cite[Theorem XIII.3]{RS4}. Furthermore, 
\begin{equation*}
\begin{aligned}
-1 &\leq \lambda^{-}_j (L_{N})&&\quad\text{for}\phantom{x}  \min(2N+1, M^{-})<j \leq 2N+1\\
\lambda^{+}_{j} (L_N) &\leq 1 &&\quad\text{for}\phantom{x} \min(2N+1, M^{+})< j \leq 2N+1\,.
\end{aligned}
\end{equation*}
This implies that if $L_N$ has $M$ eigenvalues that are strictly less than $\inf\sigma_{\text{ac}}(L)=-1$, then $L$ has at least $M$ eigenvalues below its continuous spectrum. Analogously, if $L_N$ has $M$ eigenvalues that are strictly greater than $\sup\sigma_{\text{ac}}(L) = 1$, then $L$ has at least $M$ eigenvalues above its continuous spectrum. For a more detailed discussion on this matter, we refer the reader to \cite[Theorem XIII.3]{RS4}, the discussion after Proposition 2.4 in \cite{Sim2}, or \cite{HS}.
\end{remark}

More is true regarding eigenvalues of finite truncations of $L$. Since eigenvalues of $L_N$ and eigenvalues of its principal sub matrix strictly interlace, \cite[Proposition 2.1]{Sim2},
\begin{equation}\label{E:EVtrunc3}
\lambda^{-}_j(L_{N+1}) < \lambda^{-}_j(L_N)~\text{and}~ \lambda^{+}_j(L_{N}) < \lambda^{+}_j(L_{N+1}),~\text{for}~ j=1,\dots,2N+1\,.
\end{equation}
\eqref{E:EVtrunc1} and \eqref{E:EVtrunc3} together imply that eigenvalues of $L_N$ have limit points outside $[-1,1]$ as $N\rightarrow +\infty$. Note that, even for self-adjoint operators, it is not in general true that the set of these limit points is equal to the pure point spectrum of the operator under study. However, in our case, $L$ has additional properties. First, finite truncations of $L$ are also self-adjoint operators. Second, at any time $t\geq 0$, $L(t)$ is a compact perturbation of the discrete free Schrr\"{o}dinger operator $L^0$ that has $a_n \equiv \tfrac{1}{2}$, $b_n \equiv 0$, and $\sigma\big(L^0\big) =\sigma_{\text{ac}}\big(L^0\big) = [-1,1]$. As proven in \cite{IP}, these two facts imply:
\begin{prop}[Theorem 2.3, \cite{IP}]\label{P:EVconv}
Let $Z(L)$ denote the set of all limit points of $\bigcup_{N=1}^{\infty} \sigma\big({L_N}\big)$. Then $Z(L) = \sigma(L)$.
\end{prop}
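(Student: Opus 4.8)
The plan is to prove the two inclusions $\sigma(L)\subseteq Z(L)$ and $Z(L)\subseteq\sigma(L)$ separately, using throughout the structural fact recorded above that $L$ is a compact perturbation of the free operator $L^0$. By Weyl's theorem this gives $\esssigma(L)=\esssigma(L^0)=[-1,1]$, so $\sigma(L)=[-1,1]\cup\{\lambda_1,\dots,\lambda_N\}$, the $\lambda_j$ being the finitely many eigenvalues outside $[-1,1]$. The elementary observation driving both directions is this: if $\phi\in\ell^2(\mathbb{Z})$ is supported in $[-M,M]$ and $N\ge M+1$, then $L\phi$ and $L_N\phi$ agree on all of $[-N,N]$, since truncation only severs the coupling across the sites $\pm(N+1)$, where $\phi$ and its relevant neighbours vanish. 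Consequently, as each $L_N$ is Hermitian,
\begin{equation*}
\mathrm{dist}\big(\mu,\sigma(L_N)\big)\le\frac{\|(L_N-\mu)\phi\|}{\|\phi\|}=\frac{\|(L-\mu)\phi\|}{\|\phi\|}\,,
\end{equation*}
so that a finitely supported approximate eigenvector of $L$ at energy $\mu$ is automatically one for $L_N$.

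For $\sigma(L)\subseteq Z(L)$ I would treat the two pieces of the spectrum separately. If $\mu\in\esssigma(L)=[-1,1]$, Weyl's criterion supplies an orthonormal (hence weakly null) singular sequence $(\psi_m)$ with $\|(L-\mu)\psi_m\|\to0$; truncating and renormalizing yields finitely supported $\phi_m$ with $\|(L-\mu)\phi_m\|\to0$ whose supports escape to infinity, because weak nullity forces the retained mass into larger and larger windows. By the displayed inequality each $\phi_m$ forces an eigenvalue of some $L_{N_m}$, with $N_m\to\infty$, into every neighbourhood of $\mu$, so $\mu\in Z(L)$. If instead $\mu=\lambda_j$ lies outside $[-1,1]$, its eigenvector decays exponentially, and truncating it gives the same conclusion; alternatively this case is already contained in the monotone min--max convergence recorded in Remark \ref{R:new_eigs} together with Proposition \ref{P:EV1}.

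The substantive direction is $Z(L)\subseteq\sigma(L)$. Let $\mu\in Z(L)$ be the limit of eigenvalues $\mu_N\in\sigma(L_N)$; if $\mu\in[-1,1]$ there is nothing to prove, so suppose $\mu>1$ (the case $\mu<-1$ being symmetric). Let $\psi_N$ be a unit eigenvector of $L_N$ and $\tilde\psi_N$ its extension by zero to $\ell^2(\mathbb{Z})$. A direct computation shows that $(L-\mu_N)\tilde\psi_N$ is supported only at the two sites $\pm(N+1)$, with
\begin{equation*}
\big\|(L-\mu_N)\tilde\psi_N\big\|^2=a_N^2\,|(\psi_N)_N|^2+a_{-(N+1)}^2\,|(\psi_N)_{-N}|^2\,,
\end{equation*}
so everything hinges on the boundary values $(\psi_N)_{\pm N}$. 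Here I would exploit that, since $\mu_N$ stays in a compact set bounded away from $[-1,1]$ and $a_n\to\tfrac12$, $b_n\to0$ as $|n|\to\infty$ (compactness of $L-L^0$), the transfer matrices of the eigenvalue equation are uniformly hyperbolic on the near-free tails $|n|\ge n_0$. The Dirichlet eigenvector, being bounded in $\ell^2$ and vanishing at $\pm(N+1)$, must then be governed by the contracting mode as one moves outward, yielding a bound $|(\psi_N)_{\pm N}|\le C\rho^{\,N-n_0}$ with $\rho<1$ independent of $N$. Hence $\|(L-\mu_N)\tilde\psi_N\|\to0$, and self-adjointness of $L$ gives $\mathrm{dist}(\mu_N,\sigma(L))\le\|(L-\mu_N)\tilde\psi_N\|\to0$; since $\sigma(L)$ is closed, $\mu\in\sigma(L)$.

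I expect the hyperbolic decay estimate in the last paragraph to be the main obstacle: it is precisely the assertion that no spectral pollution occurs outside $[-1,1]$, and it is the one place where the compact-perturbation-of-free structure is indispensable. Making the rate $\rho$ explicit would proceed through a Combes--Thomas type resolvent bound, or equivalently through a transfer-matrix argument showing that a solution of the three-term recurrence which vanishes at the far boundary is exponentially smaller at the boundary-adjacent site than its $\ell^2$ norm over the tail, uniformly for energies $\mu_N\ge 1+\delta$.
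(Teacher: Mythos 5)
First, a framing remark: the paper does not actually prove this proposition --- it is quoted verbatim from \cite{IP} (Theorem 2.3), and the paper only assembles around it the min--max scaffolding of Proposition~\ref{P:EV1}, the min--max principle, Remark~\ref{R:new_eigs}, and the interlacing inequality \eqref{E:EVtrunc3}. So your proposal must be judged on its own merits and against that scaffolding. Your direction $\sigma(L)\subseteq Z(L)$ is correct: the observation that finitely supported vectors see $L$ and $L_N$ identically, combined with self-adjointness of $L_N$, handles $[-1,1]$ via truncated Weyl sequences and handles the eigenvalues via truncated eigenvectors (exponential decay is not even needed there: for an $\ell^2$ eigenvector $\psi$, the commutator $[L,\chi_{[-R,R]}]\psi$ already tends to $0$, since it only involves the entries of $\psi$ near $\pm R$). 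One small inaccuracy: the claimed fallback that this case ``is already contained in'' Remark~\ref{R:new_eigs} plus Proposition~\ref{P:EV1} is not complete, since those give only one-sided bounds and monotonicity, not convergence \emph{to} $\lambda_j$; but your primary argument is fine.

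The genuine gap is in the hard direction $Z(L)\subseteq\sigma(L)$, at the step ``uniform hyperbolicity of the transfer matrices on the near-free tails implies the Dirichlet eigenvector is governed by the contracting mode, hence $|(\psi_N)_{\pm N}|\leq C\rho^{N-n_0}$.'' That implication is false as stated: uniform hyperbolicity alone does not prevent the truncation eigenvector from being an \emph{edge state}, i.e.\ proportional to the mode that decays going \emph{inward}, which has an $O(1)$ value at the boundary-adjacent site. This is precisely what happens for Dirichlet truncations of periodic Jacobi matrices at energies in a spectral gap: the transfer matrices are uniformly hyperbolic there, yet truncations generically produce gap eigenvalues (spectral pollution), so $Z(L)\supsetneq\sigma(L)$ for such operators. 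What rescues your argument in the present setting is not hyperbolicity but the \emph{near-free} structure: at energy $\mu=\tfrac12(z+z^{-1})$, $|z|<1$, the inward-decaying direction of the free transfer matrix is proportional to $(1,z)$, which is uniformly transverse to the Dirichlet direction $(0,1)$; only because of this does the vector $\big((\tilde\psi_N)_{N+1},(\tilde\psi_N)_N\big)=(0,c)$ necessarily carry an inward-growing component of size $\gtrsim |c|$, after which the normalization $\|\psi_N\|=1$ forces $|c|\leq C\rho^{N-n_0}$. You would need to prove that this transversality persists under the (vanishing) perturbation of the tail coefficients; that is the real crux, and it also explains why a Combes--Thomas bound by itself cannot suffice --- a resolvent estimate for $L$ says nothing about vectors localized at the moving boundary. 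Finally, note that the paper's own scaffolding yields a much softer proof of this direction that avoids transfer matrices entirely: by Proposition~\ref{P:EV1}, $\lambda_j^{+}(L_N)\leq\lambda_j^{+}(L)$ for \emph{every} $j$, and by the min--max principle $\lambda_j^{+}(L)=1$ for $j>M^{+}$; hence $L_N$ never has more than $M^{+}$ eigenvalues above $1$, each sandwiched below the corresponding $\lambda_j^{+}(L)$, nondecreasing in $N$ by \eqref{E:EVtrunc3}, and convergent to $\lambda_j^{+}(L)$ by your own finite-support test-vector argument. Together with the symmetric statement below $-1$, and the triviality of limit points inside $[-1,1]\subseteq\sigma(L)$, this closes the hard direction with no quantitative decay estimates at all.
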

Proposition~\ref{P:EVconv} asserts that eigenvalues of $L_N$ that are outside $[-1,1]$ converge to eigenvalues of $L$ as $N\rightarrow +\infty$. Let $E_{\lambda_j}(t)$ denote the error in computing an eigenvalue $\lambda_j$ of $L$, defined by
\begin{equation*}
E_{\lambda_j}(t) = |\lambda_j(L_N(t)) - \lambda_j^*(t)|,
\end{equation*}
where $\lambda_j$ is the computed eigenvalue of a finite truncation $L_N$, and $\lambda_j^*$ is the eigenvalue $\lambda_j(L(t))$ of the doubly-infinite operator $L$.
\begin{table}[htp]
\centering
\small
\begin{tabular}{|c|c|c|c|c|}
\cline{1-5}
$\Delta t$ & $E_{\lambda_1}(T)$, $\varepsilon=0$ & $E_{\lambda_1^{-}}(T)$, $\varepsilon=0.05$ & $E_{\lambda_2^{-}}(T)$, $\varepsilon=0.05$ & $E_{\lambda_1^{+}}(T)$, $\varepsilon=0.05$ \\[1ex]
\hline
$4h$ & 1.262224e-09 & 5.579751e-08&1.1505353e-08& 3.370637e-12\\
$2h$ & 3.945577e-11 &  2.306966e-09& 6.701292e-10 & 1.567634e-13
\\
$h$ & 1.227241e-12 &  1.071940e-10& 4.037126e-11 &1.865174e-14\\
 $\frac{h}{2}$ & 3.042011e-14 & 5.523581e-12& 2.511102e-12& 2.620126e-14\\
 $\frac{h}{4}$ & 1.088019e-14 & 2.873257e-13& 1.869615e-13& 2.353672e-14\\
 \hline
\end{tabular} 
\caption{ $h=0.01$, $T=320$, for 1-soliton initial data with $k=0.4$, $N=2^{13}$.}
\label{T:Eigs}
\end{table}
Table~\ref{T:Eigs} displays of $E_{\lambda_j}(T)$ for one-soliton initial data, measured at $T=320$ in the Toda lattice and in the 
perturbed lattice with $u(r)=r^2$, $\varepsilon =0.05$. For one-soliton solution of the Toda lattice, the exact eigenvalue 
$\lambda_1^*$ is known, $\lambda_1^*(t)=\pm\cosh(k)$, for $t\geq 0$. In experiments with perturbed lattices, $\lambda_j^*(t)$ is 
computed from the reference solution $L^*(t)$. Figure~\ref{F:EigError} displays $E_\lambda (t)$ over time in the Toda lattice for 
1-soliton data with $k=0.6$, where $\Delta t=0.01$. 
\begin{figure}[h]\centering
\includegraphics[width=188pt]{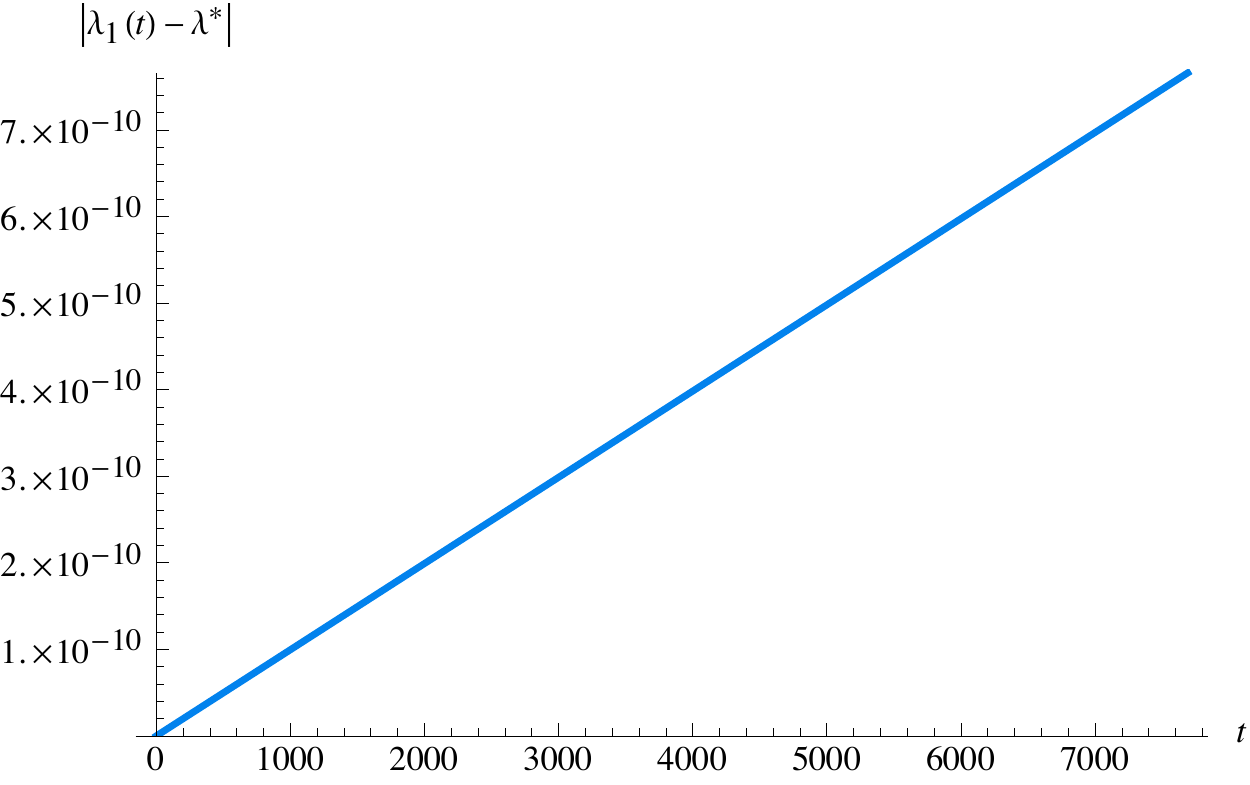}
\caption{$E_{\lambda_1}(t)$, with $\Delta t = 10^{-2}$ and $N=2^{13}$ for 1-soliton data with $k=0.6$; $\varepsilon=0$.}
\label{F:EigError}	
\end{figure}
\subsection{Computing the reflection coefficient $R(z;t)$}
The numerical procedure we adopt for approximating the reflection coefficient in scattering data associated to $L(t)$ at a time $t$ 
is analogous to the one carried out for the Korteweg-deVries equation in \cite{Tom_KdV} and for the nonlinear Schr\"{o}dinger equation in \cite{Tom_NLS}. We seek solutions of the form 
\eqref{E:normJost} to \eqref{E:defnJost}. We define two new functions
\begin{equation}\label{E:getjost}
f(z;n,t) = \varphi_{+}(z;n,t)z^{-n}\,,\quad g(z;n,t) = \varphi_{-}(z;n,t)z^{n}\,,
\end{equation}
so that we have $f(z;n,t)\rightarrow 1$ as $n\rightarrow +\infty$ and $g(z;n,t)\rightarrow 1$ as $n\rightarrow -\infty$. 
Then $f$ solves
\begin{equation}\label{E:f}
\frac{a_{n-1}}{z}f(z;n-1) + \left(b_n - \frac{z+z^{-1}}{2}\right)f(z;n) + a_n z f(z;n+1) = 0\,,
\end{equation}
and $g$ solves
\begin{equation}\label{E:g}
a_{n-1}z g(z;n-1) + \left(b_n - \frac{z+z^{-1}}{2}\right)g(z;n) + \frac{a_{n}}{z}g(z;n+1) = 0\,.
\end{equation}
For any $z$, \eqref{E:f} can be solved for $f$ on $n\geq 0$ and \eqref{E:g} can be solved for $g$ on $n\leq 0$ by backward 
substitution method using the appropriate boundary conditions at infinity. Then \eqref{E:getjost} is inverted to recover 
$\varphi_{\pm}$, and the solutions are matched at $n=0$ to extract the reflection coefficient $R(z;t)$. The effect of time-stepping 
error in $L$ on this computation is measured in the Toda lattice by
\begin{equation*}
E_\text{R}(t) = \sqrt{\sum_{z\in\mathbb{T}}\big| R(z;t) - R(z;0)e^{(z-z^{-1})t}\big|^2}\,,
\end{equation*}
where $\mathbb{T}$ is a mesh on the unit circle, typically with $1000$ mesh points. In Section~\ref{S:ref}, reflection coefficient 
for clean solitary waves is computed with the choice of $\Delta t = 0.001$.
\begin{table}[htp]
\begin{center}
\small
\begin{tabular}{|c|c|}
\cline{1-2}
\rule{0pt}{2.5ex}    
$\Delta t$ & $E_{R}(T)$, $\varepsilon=0$, $N=2^{13}$  \\
[0.5ex]
\hline
 $8h$ & 1.610568e-06 \\
 $4h$ & 1.006489e-07 \\
 $2h$ &6.429764e-09  \\
 $h$ & 4.305598e-09 \\
 \hline
\end{tabular}
\end{center}
\caption{$E_{\text{R}}(T)$ at $T=1280$, in the Toda lattice; $h=0.001$.}
\label{T:ref}
\end{table}

The underlying experiment for the measurements displayed in Table~\ref{T:ref} takes off with the initial data
\begin{equation*}
a_n=\frac{1}{2}+\frac{1}{10}e^{-n^2},\quad b_n = \frac{1}{10\cosh(n)}\,,
\end{equation*}
for which the corresponding reflection coefficient is not identically zero. Finally, we note that reflection coefficients for clean 
solitary waves in Section~\ref{S:clean} are computed with the choice $\Delta t =0.001$.

As explained in Section~\ref{S:Background}, we consider perturbations $u$ and parameters $\varepsilon>0$ such
that $V_\varepsilon$ satisfies Assumption~\ref{A}.
The results of numerical experiments exhibit no qualitative difference in behavior of the solutions subject to the 
different choices of perturbations described in Section~\ref{S:Background}, and so in what follows we only 
include the results of experiments with $u(r)=r^2$ and $u(r)=r^3$, as 
we find those to be more convenient to present here. Results for a variety of perturbation 
functions are available on the website of this 
project\footnote{\label{foot_web}Project website: \texttt{http://bilman.github.io/toda-perturbations}}. 
Throughout the remainder of the paper, 
we refer to $(a,b)$ as solutions, since the direct and inverse scattering transforms for Jacobi matrices are carried out in these 
coordinates. For 1-soliton data at $t=0$ we use
\begin{equation*}
a_n = \frac{\sqrt{1+e^{-2k(n-1) }}\sqrt{1+e^{-2k(n+1)}}}{2\left( 1 + e^{-2kn}\right)}
~{\text{~and~}}~
b_n = \frac{e^{-k} - e^k}{2}\Bigg( \frac{e^{-2kn}}{1+e^{-2kn}} -  \frac{e^{-2k(n-1)}}{1+e^{-2k(n-1)}}\Bigg)\,,
\end{equation*}
which are obtained from $q_n = \log{\left(\frac{1+e^{-2k(n-1) }}{1+e^{-2kn}}\right)}$
as in \eqref{E:1sol_q} with $\gamma = 1 - e^{-2k}$. Behavior of solutions $a$ and $b$ under perturbed dynamics 
are found to be qualitatively identical. Therefore, we present the results only for the sequence $a.$ Results including both of the 
variables $a$ and $b$ are also available on the website of this project\footnotemark[2].

\section{Numerical results: Soliton Initial Data}
We proceed with results of numerical experiments, where we commence with initial data that is a pure Toda soliton and let it evolve under the perturbed dynamics.
\subsection{Emerging solitary waves}\label{S:soliton_initial}
We find that a leading solitary wave emerges from the soliton initial data, followed by a dispersive tail, and that a secondary, counter-propagating wave is generated as soon as $t$ becomes positive. The leading solitary wave is wider and it has smaller amplitude compared to the soliton initial data. In Figure~\ref{F:SolWave_x2_takeoff}, from $t=0$ to $t=25$, we see a typical occurrence of this phenomenon. The peak of the solution has been truncated to show the details of the dispersion. Figure~\ref{F:SolWave_x2_takeoff}(b) displays the secondary wave propagating towards left, and the bottom portion of the emerging solitary wave propagating towards right, with a dispersive tail that is under development.
\begin{figure}[h]\centering
\subfigure[$t=0$]{\includegraphics[width=188pt]{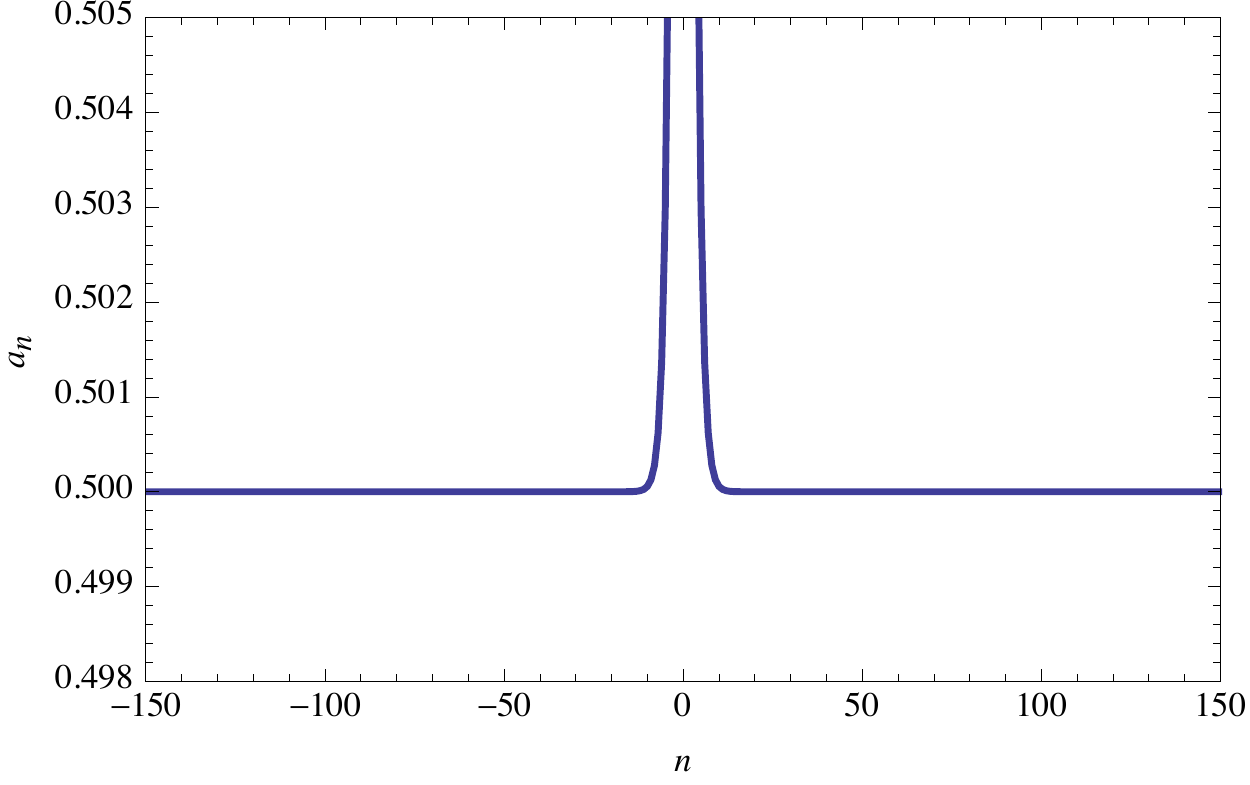}}\quad
\subfigure[$t=25$]{\includegraphics[width=188pt]{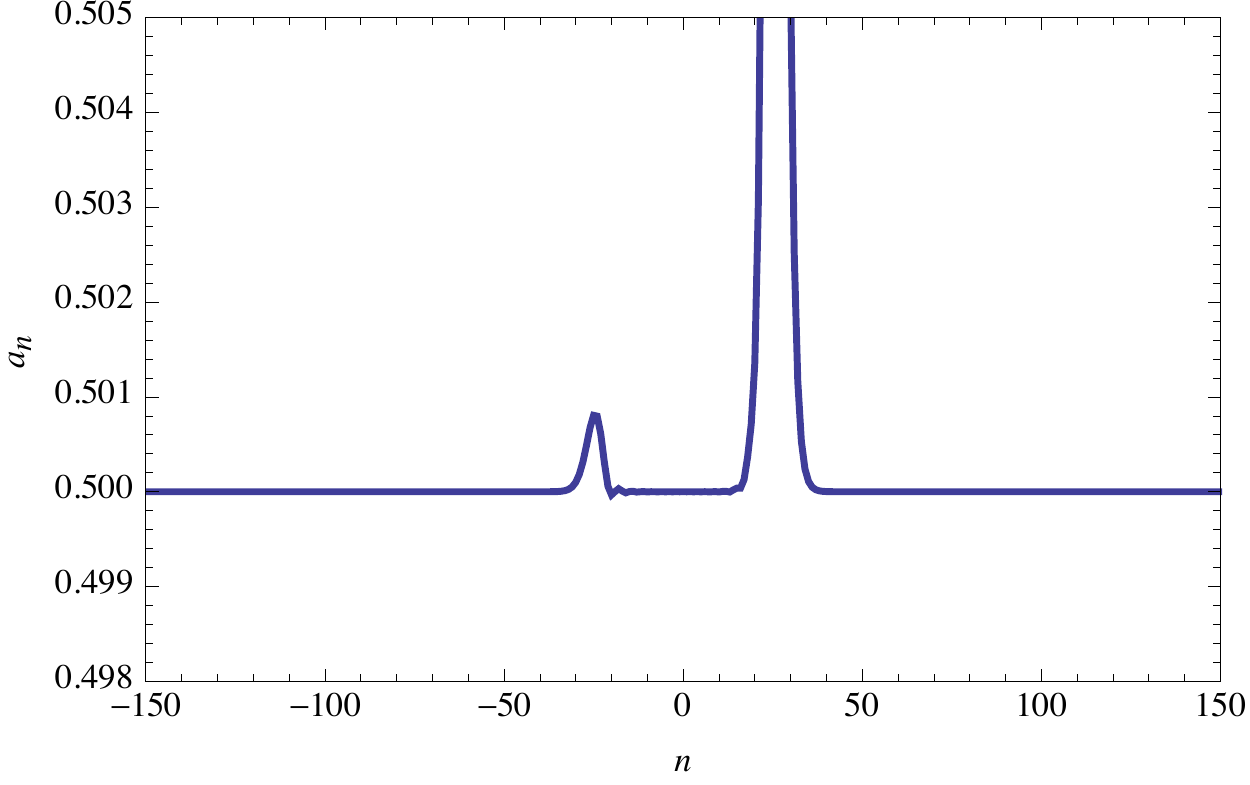}}
\caption{One-soliton initial data in the lattice with $u(r)=r^2$ and $\varepsilon=0.05$, $N=2^{13}$.}
\label{F:SolWave_x2_takeoff}	
\end{figure}
\begin{figure}[h]\centering
\subfigure[t=3000]{\includegraphics[width=188pt]{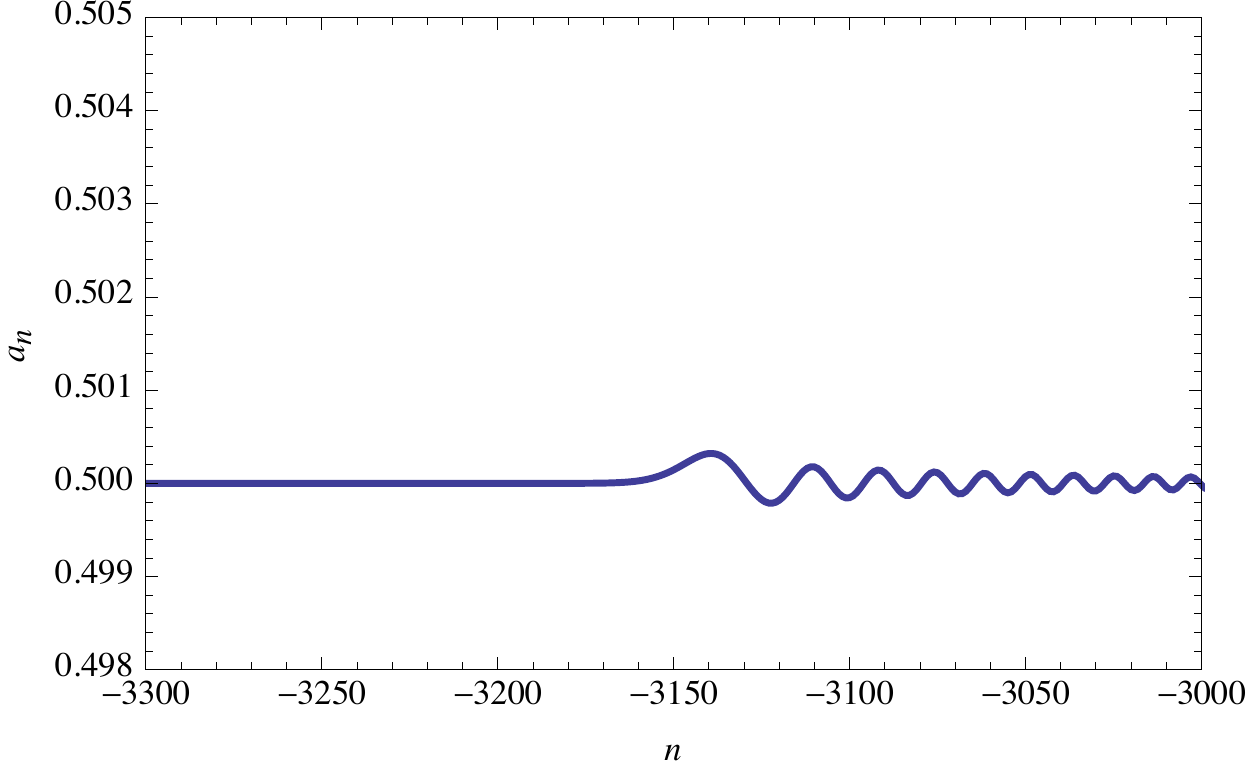}}\quad
\subfigure[t=3000]{\includegraphics[width=188pt]{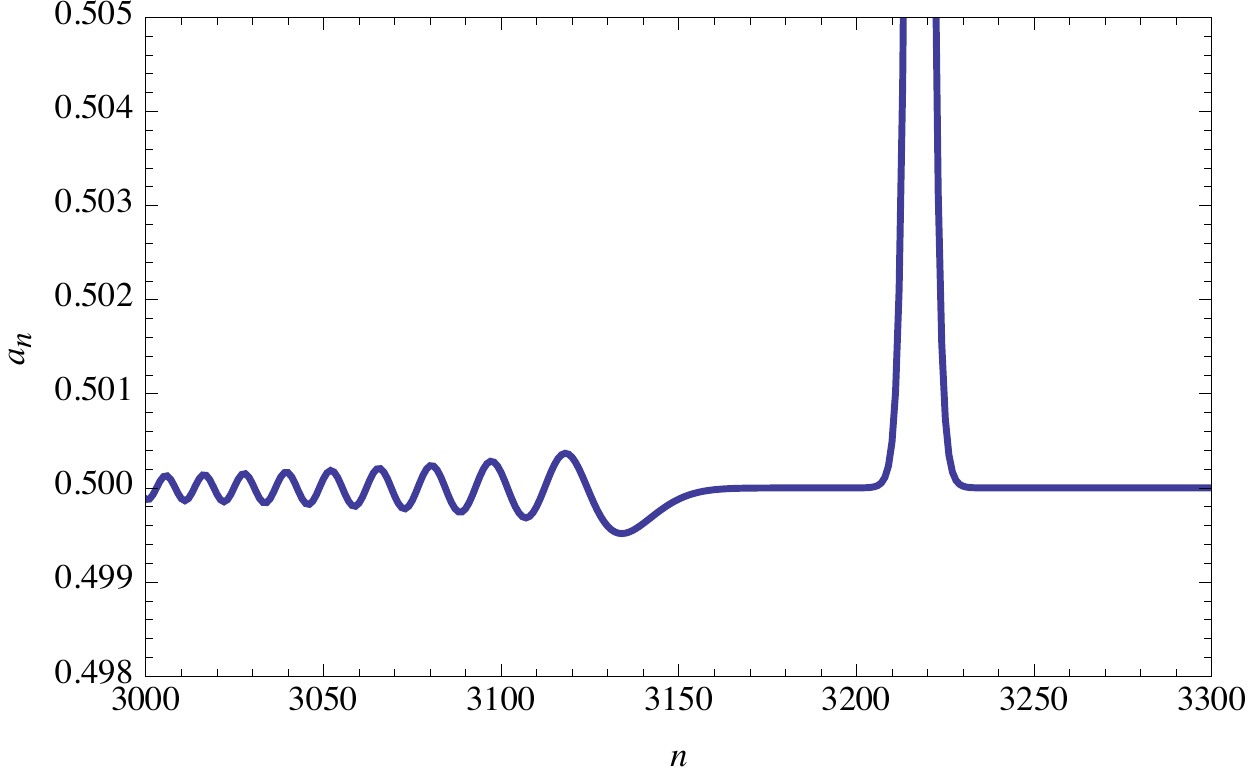}}
\caption{Solution in longer time scale, $u(r)=r^2$ and $\varepsilon=0.05$, $N=2^{13}$.}
\label{F:SolWave_x2_longtime}
\end{figure}
Figure~\ref{F:SolWave_x2_longtime} displays the solution in the same experiment at a later time $t=3000$. As can be seen in Figure~\ref{F:SolWave_x2_longtime}(a), the leading solitary wave gets separated from the dispersive tail as time elapses. Figure~\ref{F:SolWave_x2_longtime}(b) shows the counter-propagating wave which spreads and loses amplitude over the course of the entire numerical experiment. We find that this phenomena occurs in all of the perturbed systems we consider.

We now turn our attention to the numerical experiments in which we compare solitary waves that emerge from the same initial data under different perturbations. Figure~\ref{F:1sol-x2_x3} presents the leading solitary waves (in solid-red) plotted against the soliton solution of the Toda lattice (in dashed-blue), at time $t=3000$. 
\begin{figure}[h]
\centering
\subfigure[$u(r)=r^2$]{\includegraphics[width=188pt]{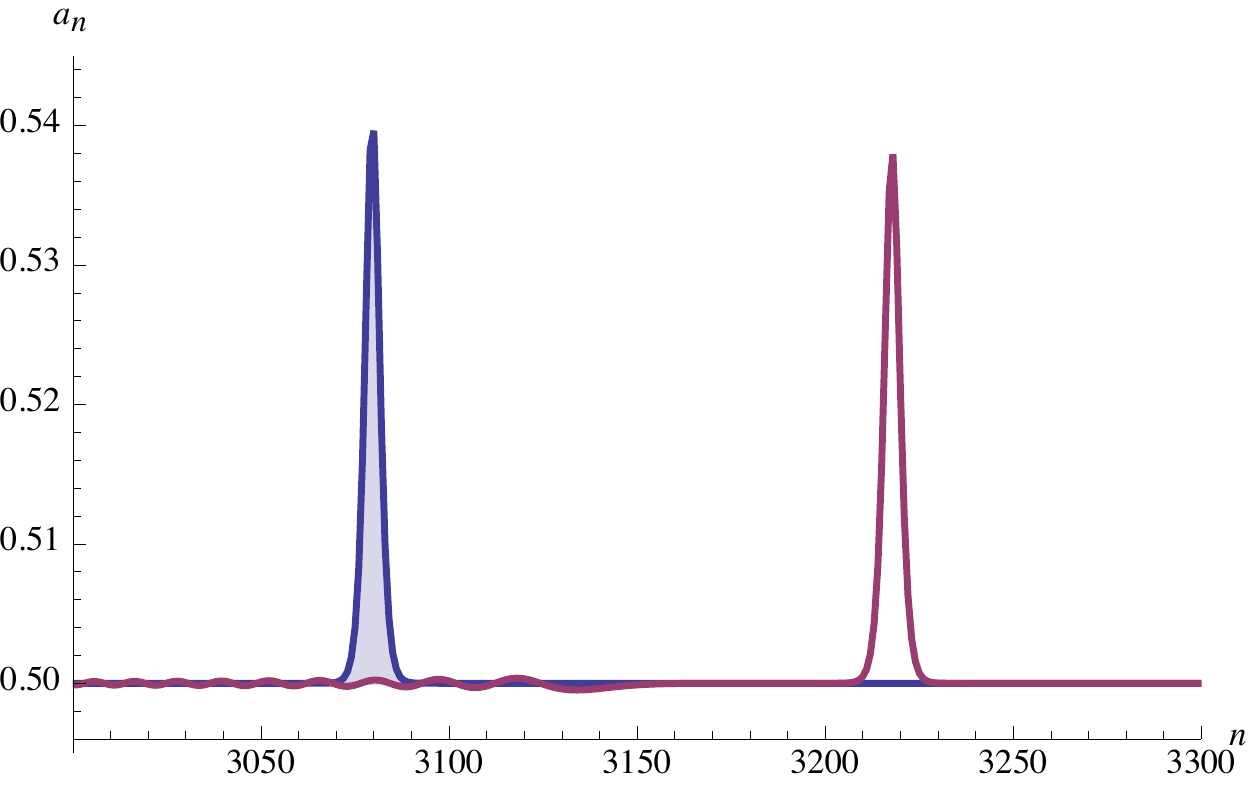}}\quad
\subfigure[$u(r)=r^3$]{\includegraphics[width=188pt]{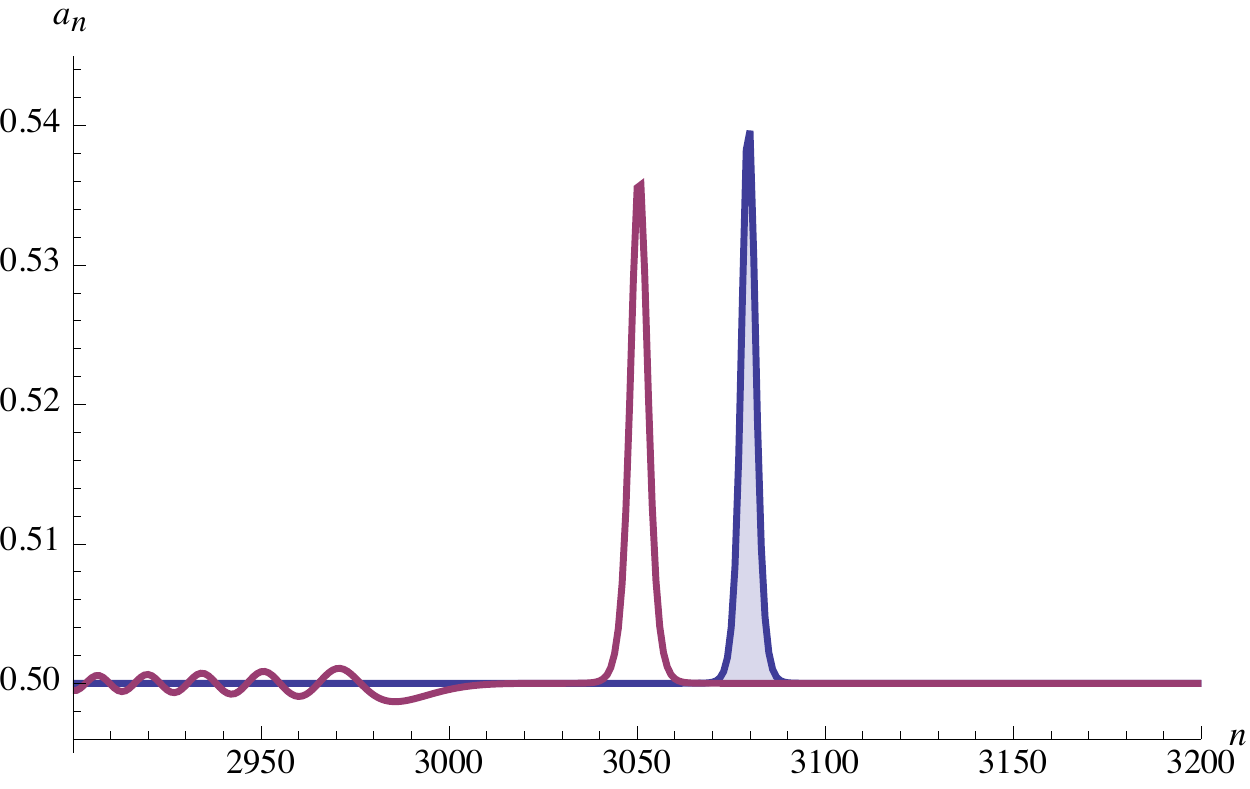}}
\caption{1-soliton initial data evolving under perturbed dynamics (solid purple) and under the Toda dynamics (filled blue) for different choices of perturbations, at $t=3000$.}
\label{F:1sol-x2_x3}
\end{figure}

As mentioned earlier, both in Figure~\ref{F:1sol-x2_x3}(a) and (b), amplitude of the leading solitary wave is smaller than that of the Toda soliton it emerges from. On the other hand, the potential with perturbation $u(r)=r^2$ generates a solitary wave that travels faster than the Toda soliton, whereas the potential with $u(r)=r^3$ yields a wave that travels slower than the soliton. This is also not surprising, since a perturbation in the interaction potential introduces a perturbation in the nonlinear dispersion relation \eqref{E:speed} that governs the propagation speed of these solitary waves, as we shall see further in this section. Note that, for a fixed perturbation, taller solitary waves travel faster than shorter ones do.

In Table ~\ref{T:1}, we list speed and amplitude pairs that are measured for the leading solitary waves in the same experiments. We measure the speed of propagation by tracking the peak of the solitary wave profile (in the continuous background) over time. To find the peak, we use fourth order polynomial interpolation. Small fluctuations in these computations are due to the fixed spatial grid size of the problem. This method is accurate up to $10^{-4}$ when an error check is performed against the exact height of solitons in the Toda lattice.

\begin{table}[htp]
\begin{center}
{\small
\begin{tabular}{c|c|c|c|c|c|c|}
\cline{2-7}
\small
& \multicolumn{2}{ c | }{\raisebox{-0.4ex}{$u(r)=r^2$}} & \multicolumn{2}{ c| }{\raisebox{-0.4ex}{$u(r)=r^3$}} &  \multicolumn{2}{ c| }{\raisebox{-0.4ex}{Toda}}\\[0.8ex] \cline{1-7}
\multicolumn{1}{|c|}{Time} & Amplitude & Speed & Amplitude & Speed & Amplitude & Speed \\ \cline{1-7}
\multicolumn{1}{ |c| }{5000} & 0.0381 &1.0737& 0.0365 & 1.0168 & \multirow{10}{*}{0.0405} & \multirow{10}{*}{1.0268} \\ \cline{1-5}
\multicolumn{1}{ |c  }{5100} &
\multicolumn{1}{ |c| }{0.0381} & 1.0731 & 0.0365& 1.0172 & \multicolumn{1}{ c| }{} &\multicolumn{1}{ c|  }{} \\ \cline{1-5}
\multicolumn{1}{ |c  }{5200} &
\multicolumn{1}{ |c| }{0.0381} & 1.0719 & 0.0365 & 1.0177 & \multicolumn{1}{ c| }{} &\multicolumn{1}{ c|  }{} \\ \cline{1-5}
\multicolumn{1}{ |c  }{5300} &
\multicolumn{1}{ |c| }{0.0381} & 1.0735 & 0.0365 & 1.0170 & \multicolumn{1}{ c| }{} &\multicolumn{1}{ c|  }{} \\ \cline{1-5}
\multicolumn{1}{ |c  }{5400} &
\multicolumn{1}{ |c| }{0.0382} &1.0735 & 0.0365 & 1.0169 & \multicolumn{1}{ c| }{} &\multicolumn{1}{ c|  }{} \\ \cline{1-5}
\multicolumn{1}{ |c  }{5500} &
\multicolumn{1}{ |c| }{0.0381} &1.0720 & 0.0365 & 1.0174 & \multicolumn{1}{ c| }{} &\multicolumn{1}{ c|  }{} \\ \cline{1-5}
\multicolumn{1}{ |c  }{5600} &
\multicolumn{1}{ |c| }{0.0381} &1.0730 & 0.0365 & 1.0175 & \multicolumn{1}{ c| }{} &\multicolumn{1}{ c|  }{} \\ \cline{1-5}
\multicolumn{1}{ |c  }{5700} &
\multicolumn{1}{ |c| }{0.0381} & 1.0737 & 0.0365 & 1.0169 & \multicolumn{1}{ c| }{} &\multicolumn{1}{ c|  }{} \\ \cline{1-5}
\multicolumn{1}{ |c  }{5800} &
\multicolumn{1}{ |c| }{0.0381} & 1.0725 & 0.0365 & 1.0170 & \multicolumn{1}{ c| }{} &\multicolumn{1}{ c|  }{} \\ \cline{1-5}
\multicolumn{1}{ |c  }{5900} &
\multicolumn{1}{ |c| }{0.0381} &1.0723 & 0.0365 & 1.0177 & \multicolumn{1}{ c| }{} &\multicolumn{1}{ c|  }{} \\ \cline{1-7}
\end{tabular}
\caption{Measured speed and amplitude of the leading solitary wave over long time scales, $\varepsilon = 0.05$.}
\label{T:1}
}
\end{center}
\end{table}

We now present the results related to time evolution of scattering data associated to the solutions in the numerical experiments discussed above. Given an eigenvalue $\lambda = \frac{1}{2} \left(\zeta + \zeta^{-1} \right)$ of $L$, and the associated norming constant $\gamma$, the corresponding Toda soliton is given by
\begin{equation}
a_n(t) = \frac12\frac{\sqrt{1 - \zeta^2 + \gamma(t)\zeta^{2(n-1)}}\sqrt{1 - \zeta^2 + \gamma(t)\zeta^{2(n+1)}}}{1 - \zeta^2 + \gamma(t)\zeta^{2n}}
\end{equation}
via the inverse scattering transform \cite{KT_rev, KT_sol}. The amplitude of this wave is given by
\begin{equation*}
\sup_{n\in\mathbb{Z}} \Bigg(a_n (0) - \frac 12 \Bigg) = \frac{|\lambda | - 1}{2} = \frac{(1 - |\zeta | )^2}{4|\zeta |}\,,
\end{equation*}
as in \cite{IST}. Therefore the eigenvalues associated to taller Toda solitons are located farther away from the continuous spectrum. Figure~\ref{F:SolEV_x2x3} shows that the eigenvalue, which initially corresponds to a Toda soliton, rapidly converges to a new asymptotic constant that is closer to the continuous spectrum. This behavior is consistent with the loss in amplitudes of these waves. Moreover, the influence of the cubic perturbation, which yields the smaller of the emerging solitary waves, accordingly drives the associated eigenvalue closer to the edge of the ac spectrum compared to the case with the quadratic perturbation. Similar behavior is observed for different choices of perturbations or initial data, including one with multiple eigenvalues.

\begin{figure}[h!]\centering
\includegraphics[width=255pt]{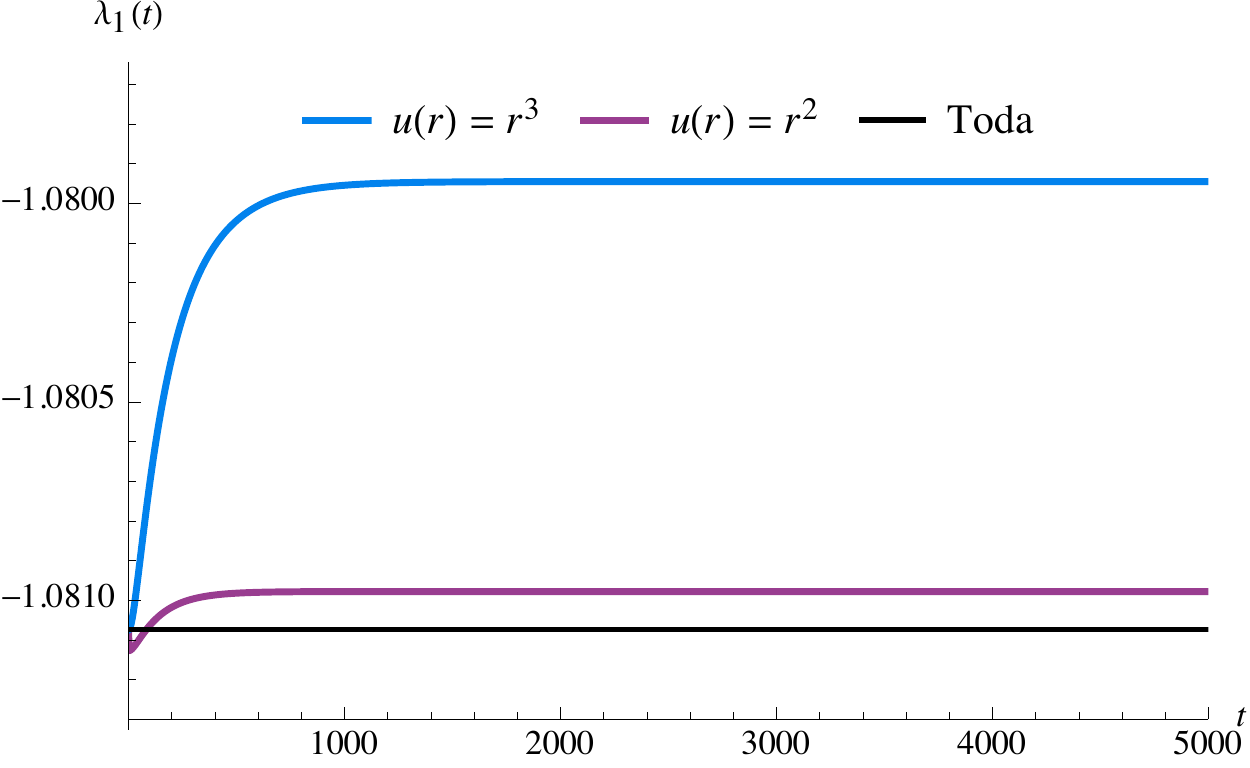}
\caption{ Evolution of the eigenvalue associated with 1-soliton initial data for different perturbations, \mbox{$\varepsilon = 0.05$}.}
\label{F:SolEV_x2x3}
\end{figure}
Figure~\ref{F:EV_multi_sol} displays time evolution of eigenvalues in a numerical experiment where we place two equal sized Toda solitons far away from each other in the spatial domain of integration, and let them evolve towards each other in the perturbed lattice with $u(r)=r^2$. 
\begin{figure}[h]\centering
\subfigure[]{\includegraphics[width=188pt]{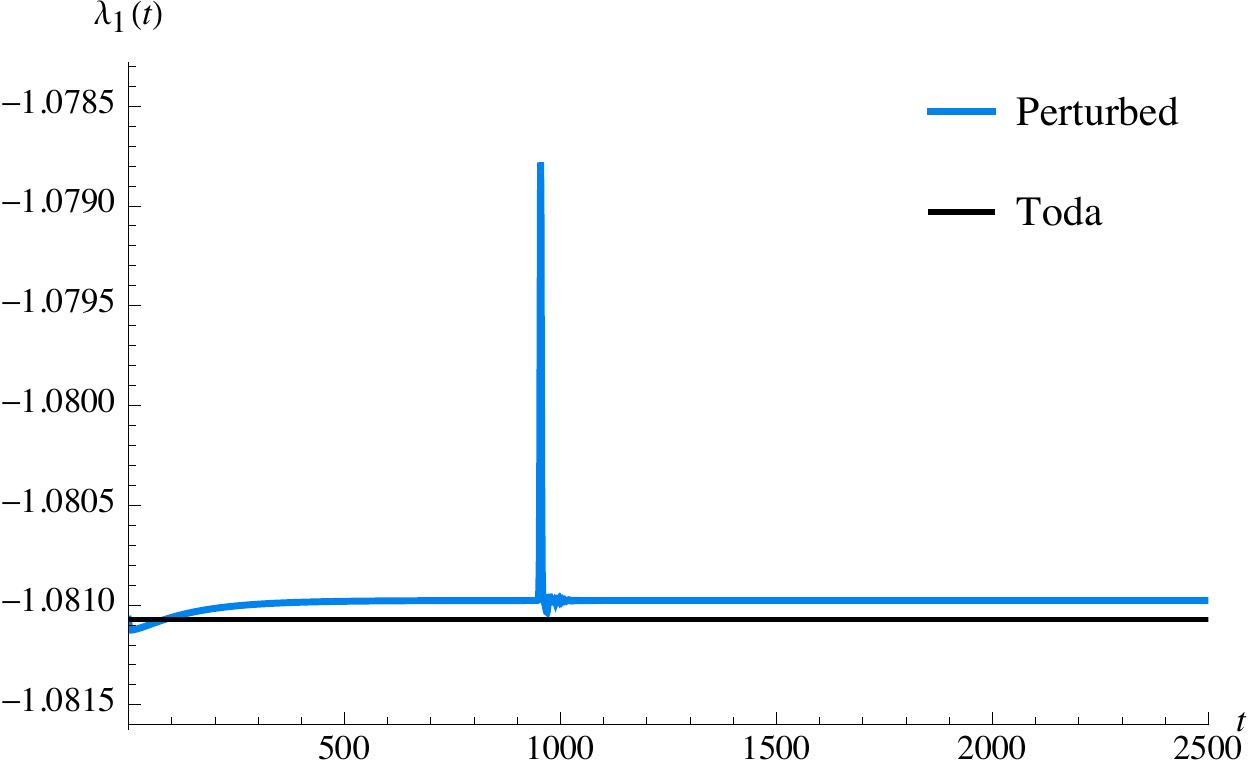}}\quad
\subfigure[]{\includegraphics[width=188pt]{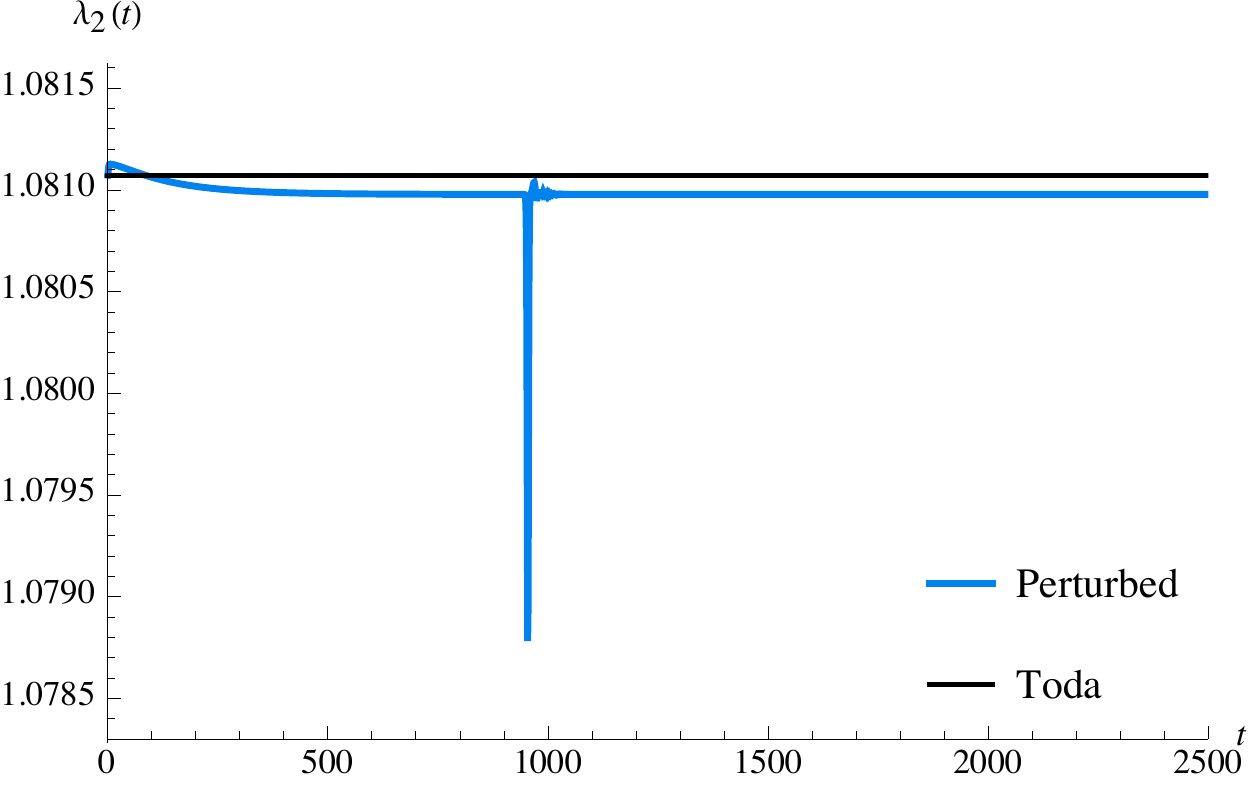}}
\caption{Eigenvalues for  initial data with two counter-propagating 1-solitons under the perturbed lattice (blue) vs. the Toda lattice (black), $\varepsilon = 0.05$, $u(r)=r^2$.}
\label{F:EV_multi_sol}
\end{figure}

As can be seen, both of the eigenvalues (blue) converge to new asymptotic values. Figure~\ref{F:EV_multi_sol}(a) displays the evolution of the eigenvalue for the left-incident wave plotted against the evolution in absence of perturbation (black), and Figure~\ref{F:EV_multi_sol}(b) is the analogous picture for the right-incident wave. The peaks and small oscillations formed over time in the trajectories of these eigenvalues are due to the interactions of the solitary waves with each other, and with the radiation that has been generated. We address the case of interacting solitary waves in greater detail in the next section, but it is worthwhile to note that the eigenvalues revert to their asymptotic values after the interactions.

Next, we study propagation speeds of the emerging solitary waves and their relation to time evolution of the corresponding scattering data. The numerical investigation that is to be discussed below is similar to the direction pursued in~\cite{FFM}. Note that the speed of a Toda soliton (corresponding to spectral parameters $(\zeta_j, \gamma_j)$) in terms of the scattering data is given by
\begin{equation}\label{E:speed}
v_j = - \frac{\theta(\zeta_j)}{\log(|\zeta_j|)}\,,
\end{equation}
where $\theta(\zeta_j) = \tfrac{1}{2}(\zeta_j - \zeta_j^{-1})$ is defined via $\gamma_j(t) = \gamma_j^{(0)} e^{2\theta(\zeta_j)t}$, which describes the evolution of the norming constant when $\varepsilon=0$, with $\gamma_j^{(0)} = \gamma_j(0)$. We consider the perturbed lattices with $u(r)=r^2$ and $u(r)=r^3$, for varying values of $\varepsilon$ and 1-soliton initial data. Through numerically computing the evolution of the spectral parameter $\zeta_1$ and the associated norming constant $\gamma_1$, we find that as $t \rightarrow \infty$
\begin{equation}\label{E:eig_nc_as}
\begin{aligned}
\zeta_1(t) &\sim \zeta_{1,\varepsilon} \,,\\
\log \gamma_1(t) &\sim \log \gamma_{1,\varepsilon} + \omega_{1,\varepsilon} t\,,
\end{aligned}
\end{equation}
where $\zeta_{1,\varepsilon}$, $\gamma_{1,\varepsilon}$, and $\omega_{1,\varepsilon}$ are constants that depend on the initial data and the perturbation. In Figure~\ref{F:zeta}, we plot the evolution of the spectral parameter $\zeta_1$ from time $t=0$ until $t=2500$, in the same numerical experiments. Clearly, the amount of deviation of $\zeta_1(t)$ from the initial value becomes larger as the perturbation size increases.
\begin{figure}[htp]\centering
\subfigure[$u(r)=r^2$]{\includegraphics[width=188pt]{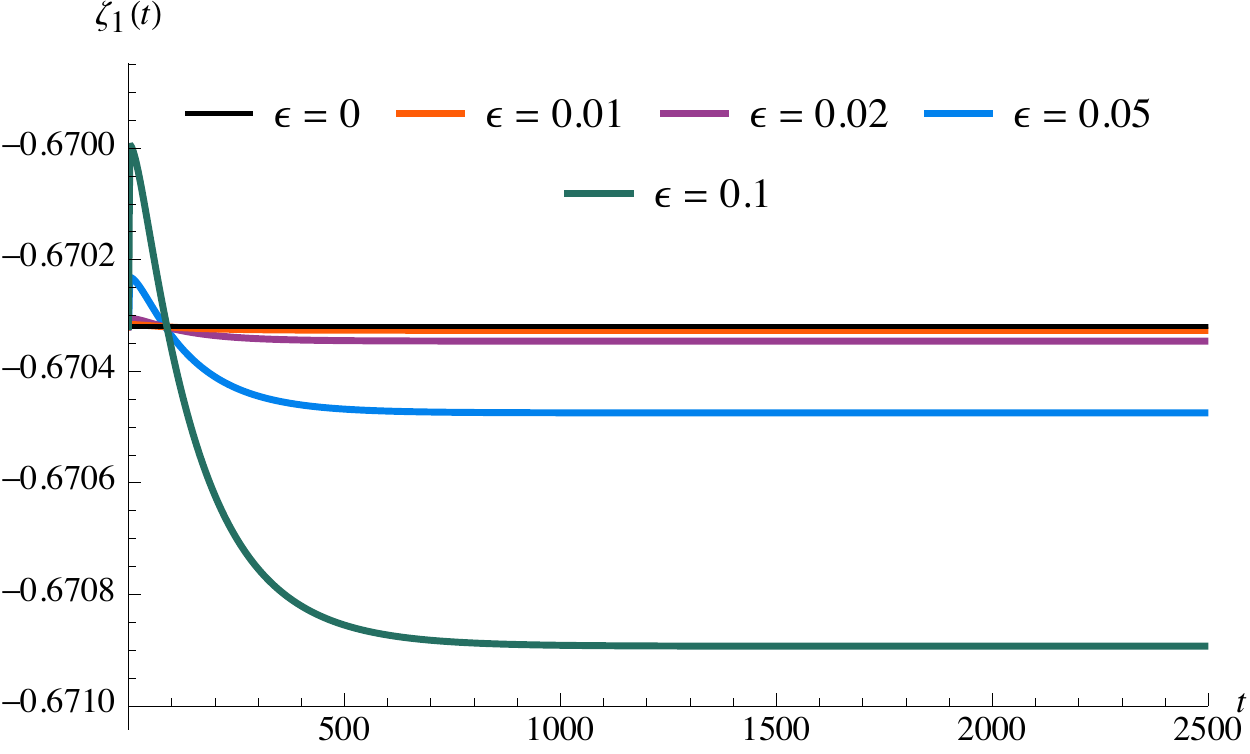}}\quad
\subfigure[$u(r)=r^3$]{\includegraphics[width=188pt]{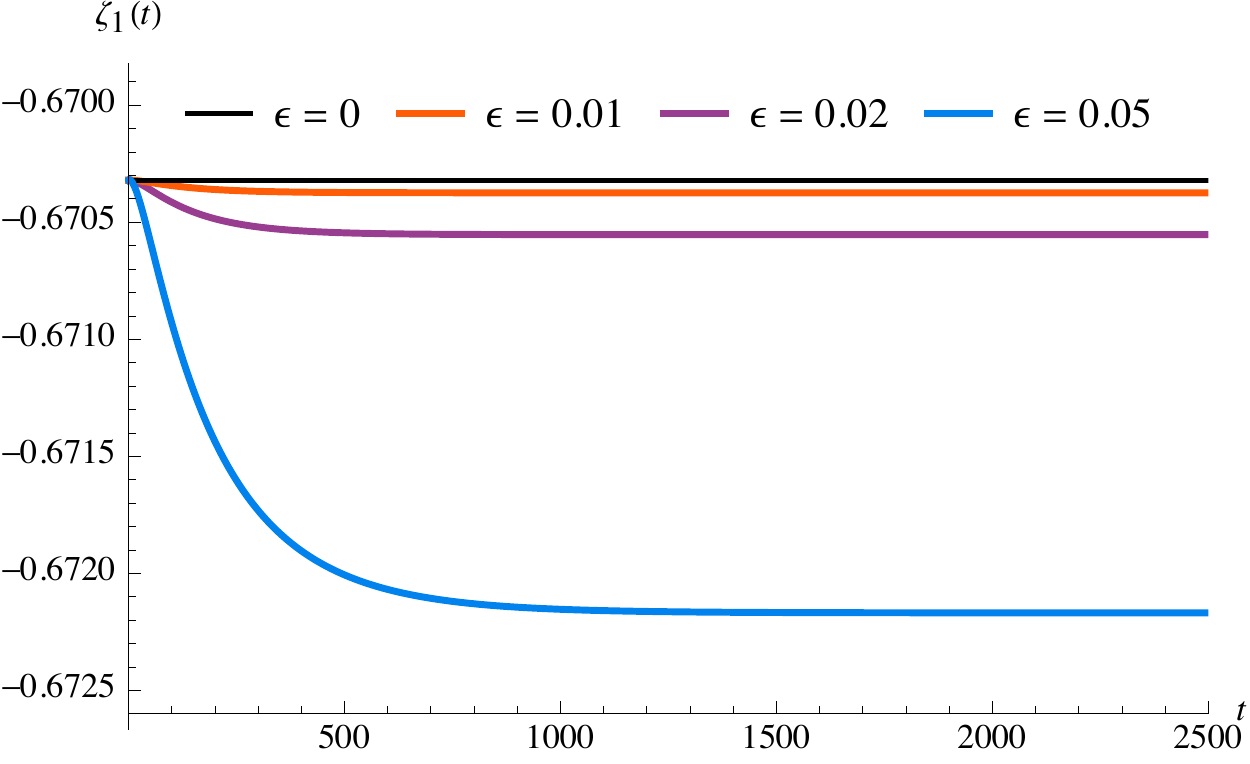}}
\caption{Spectral parameter $\zeta_1$ corresponding to \mbox{1-soliton} initial data for different choices of perturbations $u$.}
\label{F:zeta}	
\end{figure}

Figure~\ref{F:norming_constants} shows the evolution of $\log\gamma_1(t)$, the logarithm of the norming constant, from time $t=0$ until $t=6000$ in the same experiments. In Figure~\ref{F:norming_constants}(a)-(b) we see that $\log\gamma_1 (t)$ is asymptotically linear for large values of $t$, as described in \eqref{E:eig_nc_as}. Figure~\ref{F:norming_constants}(c) and (d) provides a closer look at the evolution of $\log\gamma_1$ under perturbations $u(r)=r^2$ and $u(r)=r^3$, with different values of perturbation size $\varepsilon$. 
\begin{figure}[h!]\centering
\subfigure[$u(r)=r^2$]{\includegraphics[width=188pt]{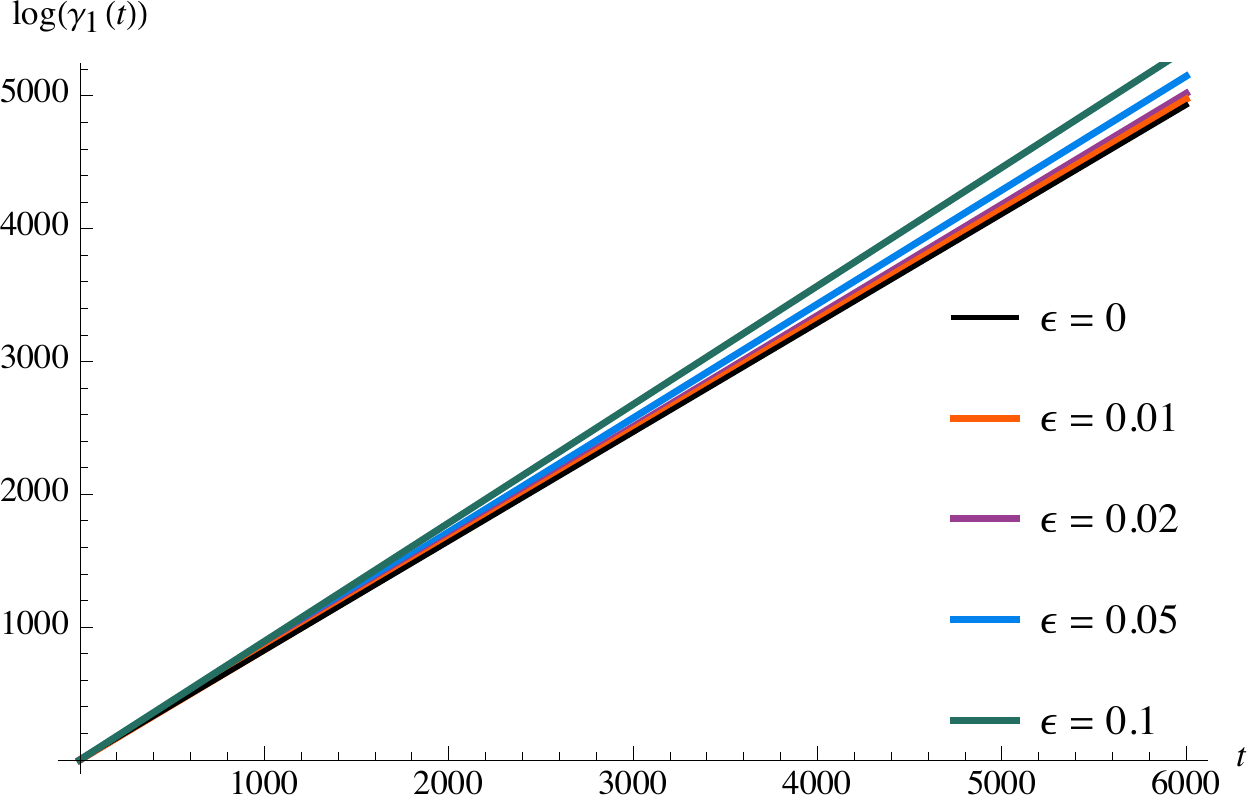}}\quad
\subfigure[$u(r)=r^3$]{\includegraphics[width=188pt]{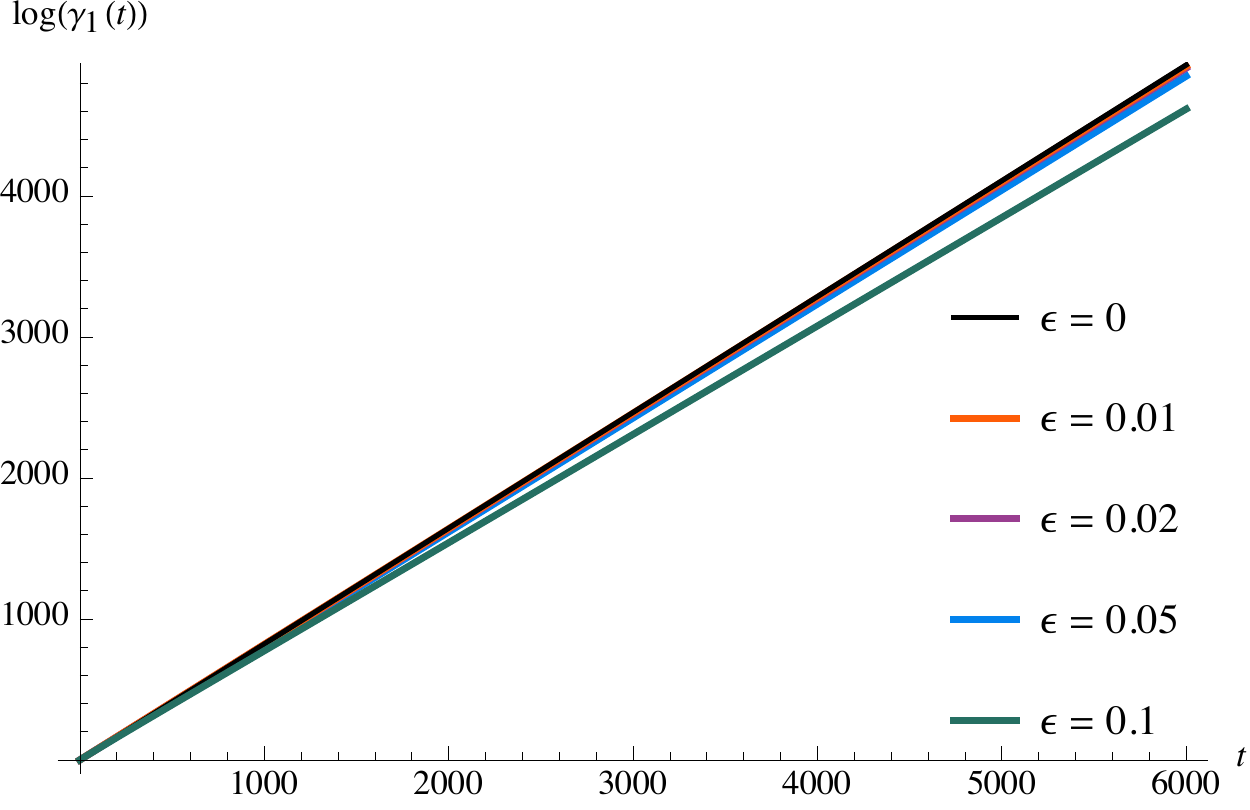}}\\
\subfigure[$u(r)=r^2$]{\includegraphics[width=188pt]{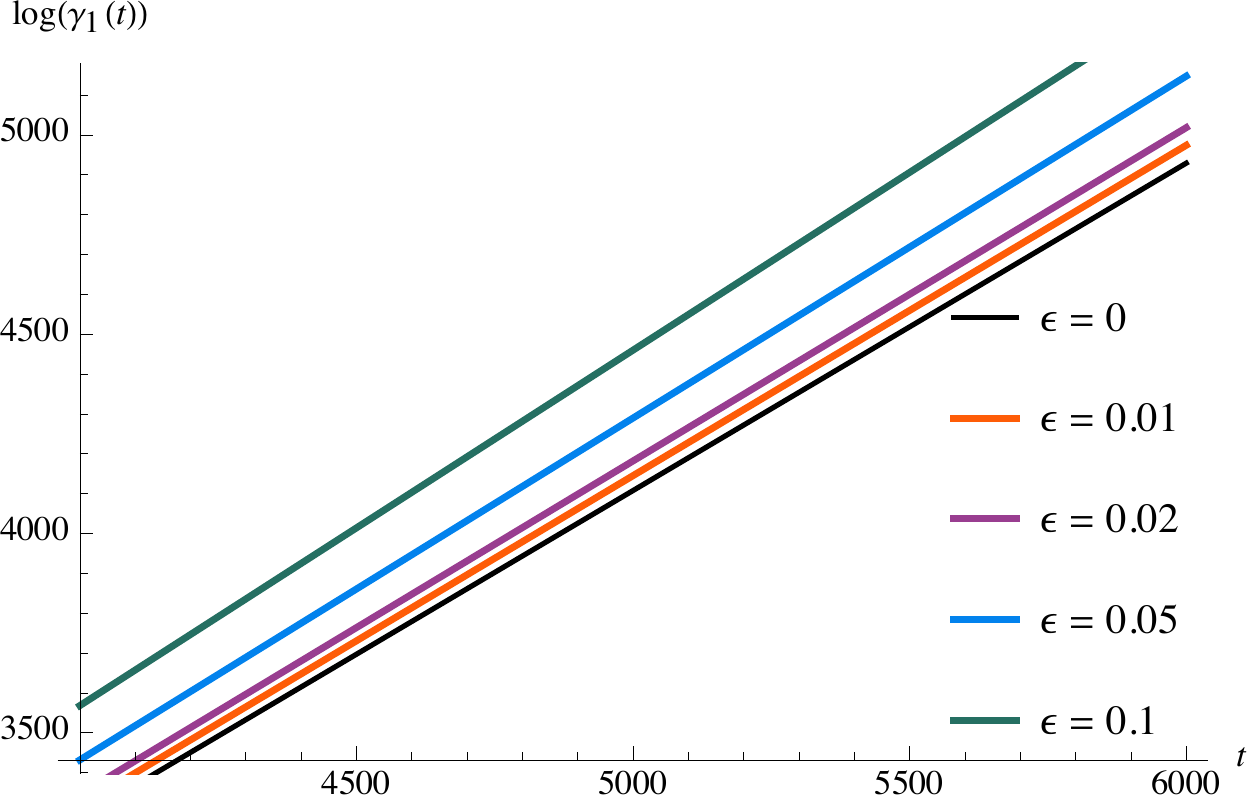}}\quad
\subfigure[$u(r)=r^3$]{\includegraphics[width=188pt]{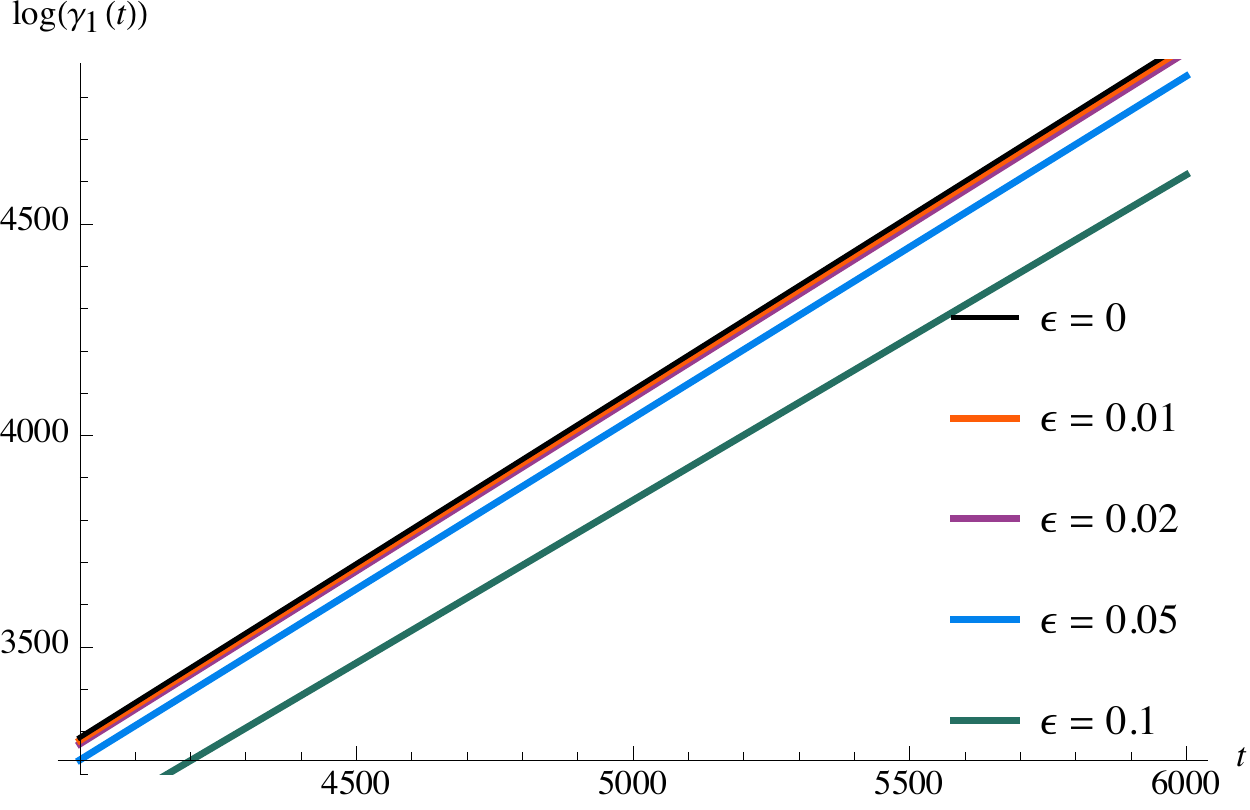}}
\caption{Logarithm of the norming constant for different choices of perturbation functions $u$, and $\varepsilon$.}
\label{F:norming_constants}	
\end{figure}
As $\varepsilon$ increases, $\log \gamma_1$ drifts farther away from the trajectory it has in case $\varepsilon=0$. In other words, $\omega_{1,\varepsilon}$ in \eqref{E:eig_nc_as} changes monotonically with respect to $\varepsilon$: increasing for $u(r)=r^2$ and decreasing for $u(r)=r^3$. Using \eqref{E:speed} and \eqref{E:eig_nc_as}, we find that the measured speed of the leading solitary wave in the perturbed lattices is asymptotically given by
\begin{equation}\label{E:as_speed}
v_1(\omega_{1,\varepsilon}, \zeta_{1,\varepsilon}) \sim - \frac{\omega_{1,\varepsilon}}{2\log(|\zeta_{1,\varepsilon}|)}\,,
\end{equation}
which is analogous to the expression \eqref{E:speed}. Finally, we note that \eqref{E:eig_nc_as} implies $\partial_t \| \varphi_{+}(\zeta_1; t) \|^2_{\ell^2} \sim - \omega_{1,\varepsilon} \| \varphi_{+}(\zeta_1; t) \|^2_{\ell^2}$, for the Jost solution $\varphi_{+}$, as $t\rightarrow +\infty$. 

\subsection{Emergence of new eigenvalues}\label{S:soliton_free_data}
In all of the numerical experiments we consider with 1-soliton initial data, we find that new eigenvalues are pushed out of the continuous spectrum after a small amount of time elapses. By the Remark~\ref{R:new_eigs}, presence of new eigenvalues in the spectrum of the truncated operator implies existence of new eigenvalues in the spectrum of $L$.
\begin{figure}[htp]\centering
\subfigure[$u(r)=r^2$]{\includegraphics[width=188pt]{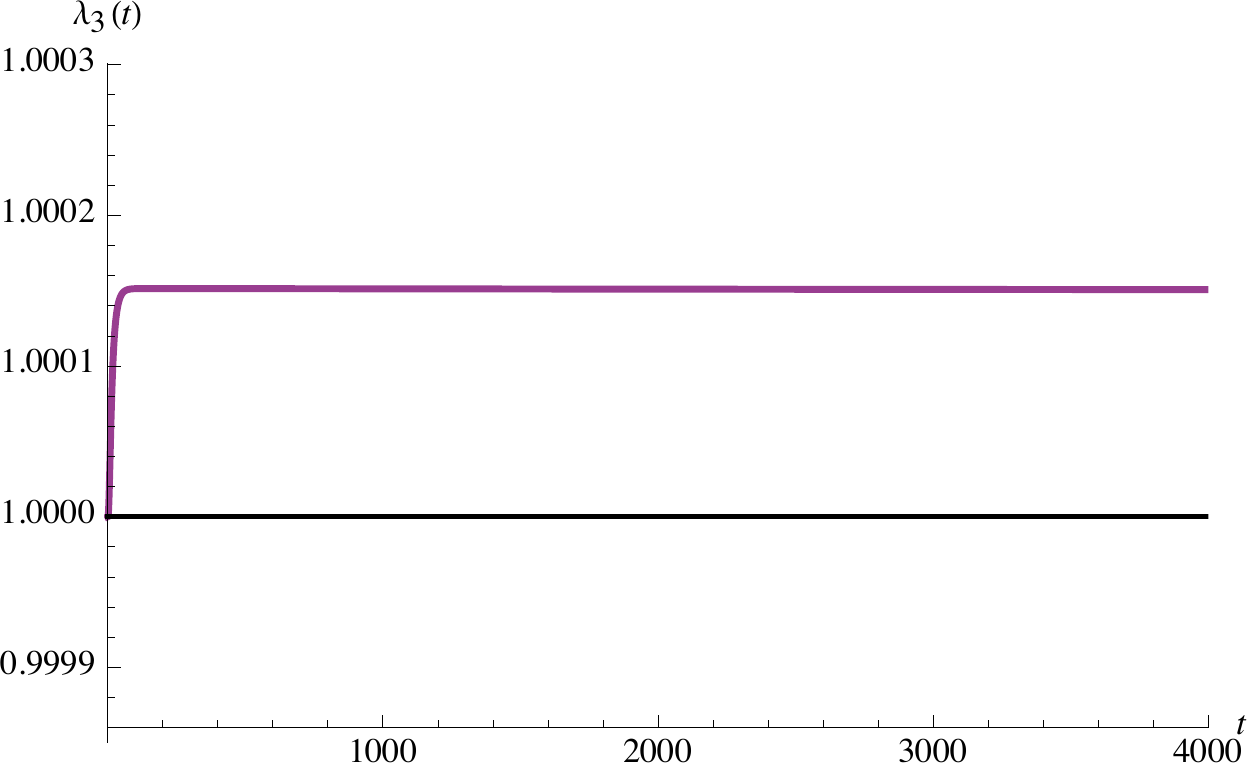}}\quad
\subfigure[$u(r)=r^3$]{\includegraphics[width=188pt]{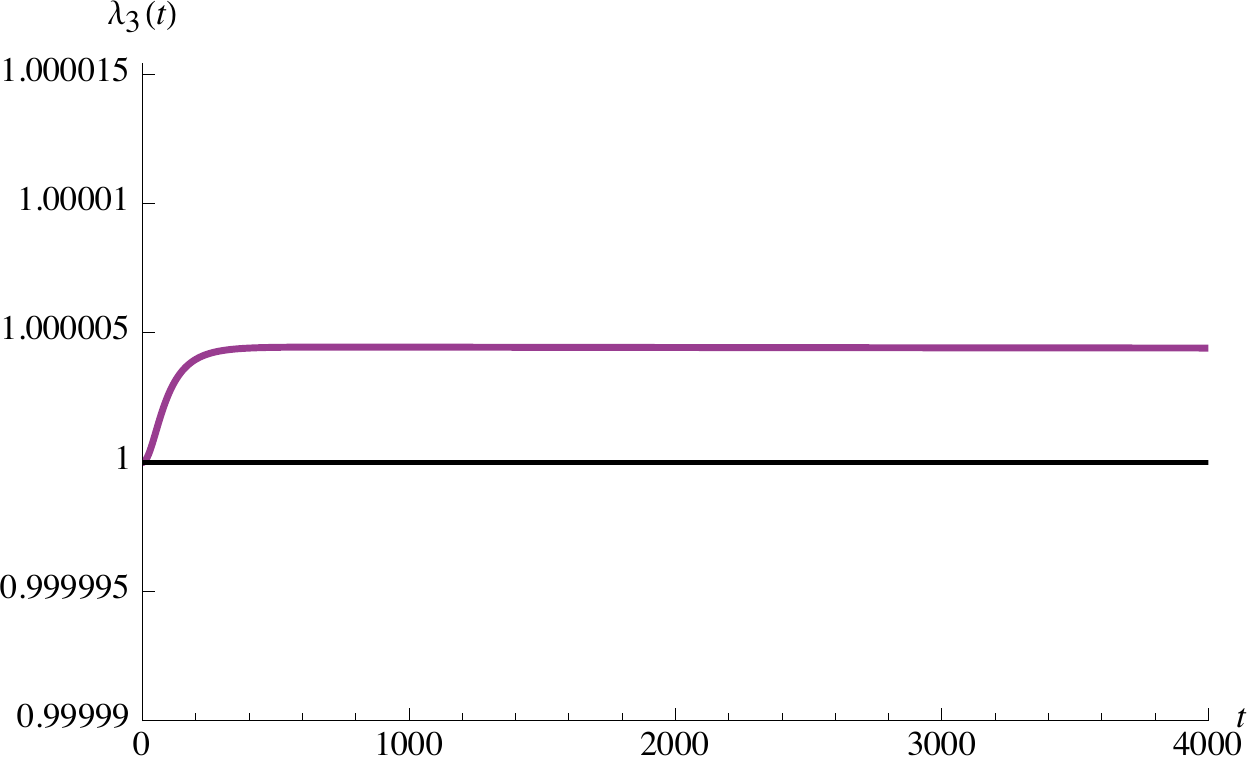}}
\caption{New eigenvalues emerging from the opposite side of the ac spectrum of $L$, $\varepsilon = 0.05$}
\label{F:newEV_opposite}	
\end{figure}

In numerical experiments that are mentioned in this section, the eigenvalue corresponding to the 1-soliton initial data lies in $(-\infty, -1)$. Figure~\ref{F:newEV_opposite} displays time evolution of new eigenvalues which are being pushed into the opposite side of the continuous spectrum, namely into $(1, \infty)$ for $u(r)=r^2$ in (a), and $u(r)=r^3$ in (b), 
with $\varepsilon = 0.05$. Long-time behavior of these new eigenvalues are in general not clear in the time scales are able to 
conduct our experiments.

Moreover, we also find that there are new eigenvalues being pushed into $(-\infty, -1)$, the side of the continuous spectrum that contains the eigenvalue associated to the initial data. Figure~\ref{F:newEV_same} displays the evolution of these new eigenvalues coming into the negative side of the ac spectrum in the same numerical experiment, with $u(r)=r^2$ in (b), and $u(r)=r^3$ in (a).
\begin{figure}[h]\centering
\subfigure[$u(r)=r^2$]{\includegraphics[width=188pt]{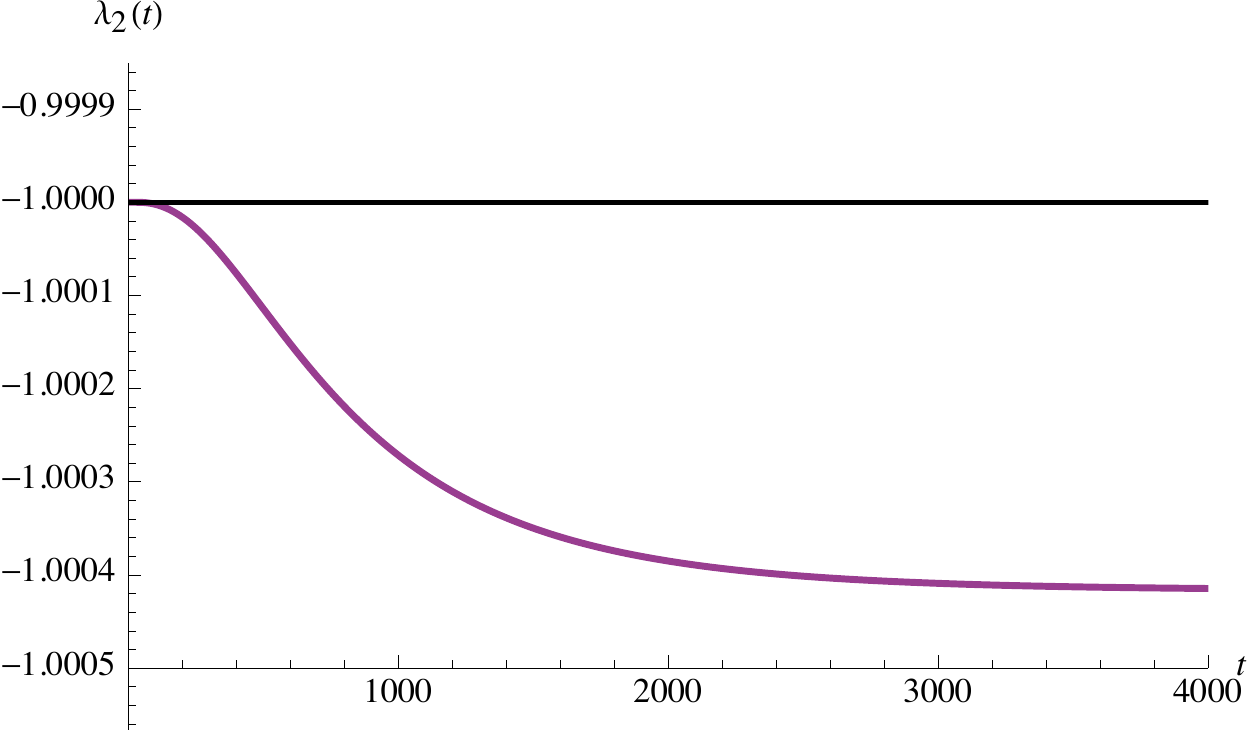}}\quad
\subfigure[$u(r)=r^3$]{\includegraphics[width=188pt]{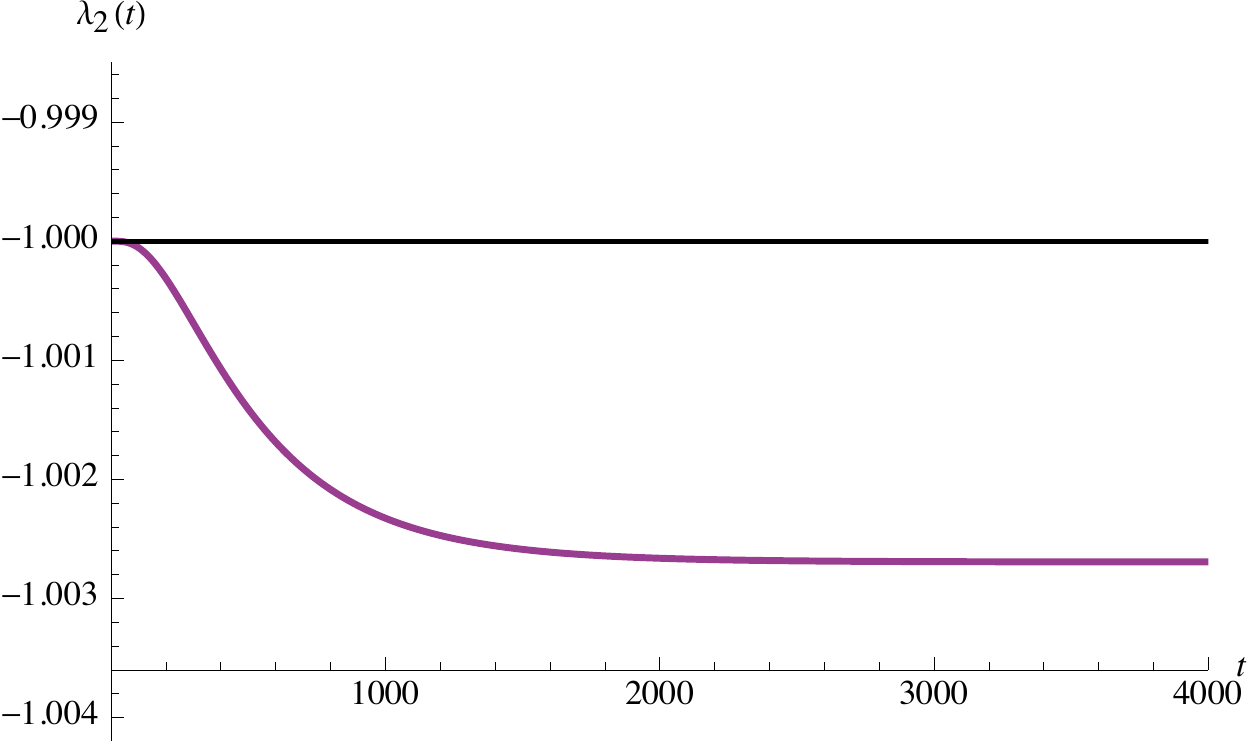}}
\caption{New eigenvalues emerging from the same side of the ac spectrum.}
\label{F:newEV_same}	
\end{figure}

Note that the spatial decay of the solutions $(a,b)$ given in Theorem~\ref{T:SpatialDecay} rules out the possibility of having embedded eigenvalues at the edges $\lambda=\pm 1$ of the essential spectrum. Therefore, $L$ does not have 
eigenvalues at $\lambda=\pm 1$ at $t=0$. What we see here is that the resonances of the Jacobi matrix at $\lambda=\pm 1$ are being pulled out by the perturbed dynamics as time evolves, see, for example,~\cite{Sim}. Emergence of new eigenvalues in the spectrum of $L$ therefore implies that the scattering data associated to $L$ at a time $t^* >0$ only partially coincide with the scattering data obtained through solving the evolution equations in Theorem~\ref{T:eom_p_eigs} (in~\ref{A:app}) from time $t=0$ to $t=t^*$.

\section{Numerical Results: Absence of Eigenvalues in the Initial Data}

Now we consider the solutions of \eqref{E:eom_p_ab} with initial data $(a^{0}, b^{0})$ such that the discrete spectrum of $L$ is initially empty. In the Toda lattice, such initial data yields a purely dispersive solution $(a,b)$, i.e.\
\begin{equation*}
\lim_{t\rightarrow \infty} \big\| a(t) -\tfrac{1}{2} \big\|_{\ell^{\infty}} + \big\| b(t) \big\|_{\ell^{\infty}} = 0\,.
\end{equation*}
One way to construct such initial data is as follows. Note that if $0 < a_n < \frac{1}{2}$ and $b_n=0$ for all $n\in\mathbb{Z}$, then the quadratic form $Q_L$ associated to the doubly-infinite Jacobi matrix $L$ satisfies $-1 < Q_L (\phi)< 1$, for all $\phi$ with $\|\phi\|_{\ell^2(\mathbb{Z})}=1$. This immediately implies that $L$ has no eigenvalues. In the numerical experiment to be discussed now, we consider a Toda soliton $\big(\hat{a}, \hat{b}\big)$ with $\big\|\hat{a} - \tfrac{1}{2}\big\|_{\ell^{\infty}} < \tfrac{1}{5}$, and commence with the initial data obtained through setting
\begin{equation}\label{E:no_eig_ID}
a^{0}_n = -(\hat{a}_n -\tfrac{1}{2}) + \tfrac{1}{2} = 1 - \hat{a}_n~\text{~and~}~ b^{0}_n = 0\,,
\end{equation}
which corresponds to reflecting the solitary wave profile $\hat{a}$ vertically and setting the initial velocity of each particle equal to $0$. Since \mbox{$\tfrac{1}{2} <\hat{a}_n <\tfrac{3}{5} $} for all $n\in\mathbb{Z}$, we have \mbox{$0<a^0_n < \tfrac{1}{2}$}, and hence the Jacobi matrix corresponding to the initial data $(a^0, b^0)$ has an empty discrete spectrum. Note that this transformation preserves all of the required spatial decay conditions given in Theorem~\ref{T:SpatialDecay} (in~\ref{A:app}).
\begin{figure}[h!]\centering
\subfigure[]{\includegraphics[width=188pt]{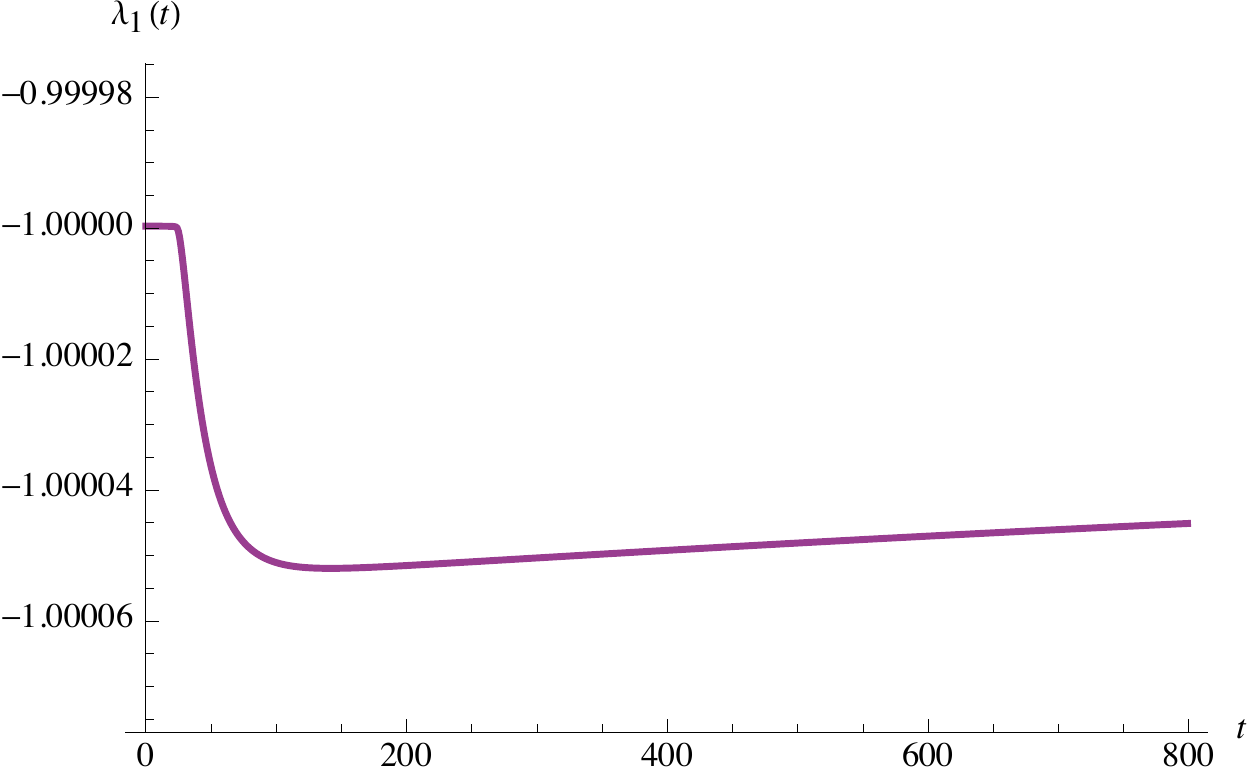}}\quad
\subfigure[]{\includegraphics[width=188pt]{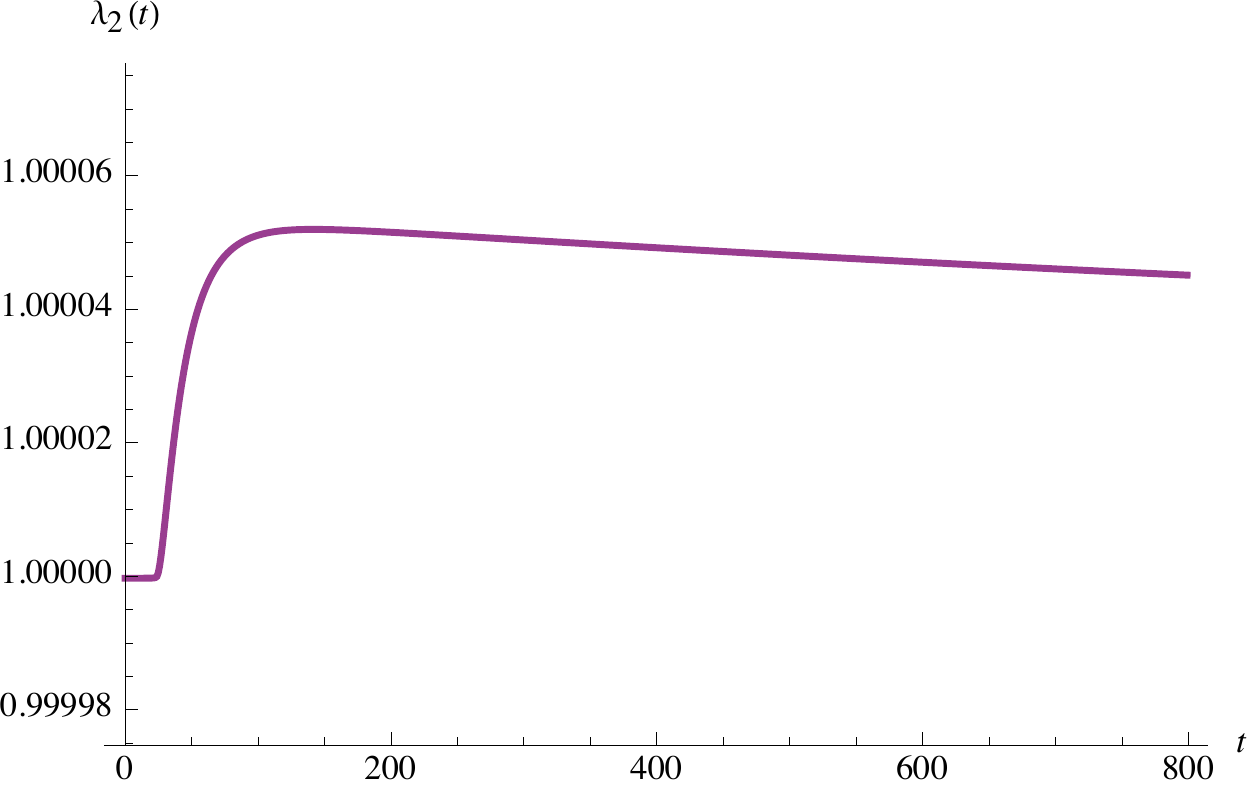}}
\caption{New eigenvalues emerging from both sides of the ac spectrum, $u(r)=r^2$, $\varepsilon = 0.05$, $N=2^{12}$; $\Delta t=10^{-3}$.}
\label{F:noeig1_neweig_pert}
\end{figure}
Although we start with an empty discrete spectrum, we find that new eigenvalues emerge from both hand sides of the continuous spectrum under the perturbed dynamics. Figure~\ref{F:noeig1_neweig_pert} displays the evolution of two new eigenvalues that emerge from the opposite sides of the ac spectrum. In any case, the new eigenvalues do not seem to converge to asymptotical values that are outside $[-1,1]$ in the time-scale of the numerical experiments.

\section{Numerical Results: Clean Solitary Waves}\label{S:clean}
We now study interactions of solitary waves in the perturbed lattices. To pursue such a study, one needs to have accurate numerical approximations of these solitary wave solutions since an exact formula for such solutions is not available at hand. We are able to generate ``clean" solitary waves numerically through the iterative procedure introduced in \cite{BC}. We commence with initial data that is a pure 1-soliton solution of the Toda lattice and let it evolve numerically under the perturbed dynamics for a relatively long time. Once a leading solitary wave is separated from the dispersive tail, and from the smaller (hence slower) solitary waves that may emerge, we cut it off by setting the remainder of the solution equal to the free solution $\left(a \equiv \tfrac{1}{2}, b\equiv 0 \right)$. Then we place the numerically isolated wave in the middle of the spatial domain of numerical integration and repeat this process. We note that more than one iterations of this procedure were needed in order to obtain an accurate solitary wave.

\subsection{Head-on collision of solitary waves}
Once we have in hand a good numerical approximation of a solitary wave solution of the perturbed system under study, we set up numerical experiments in which a pair of identical clean solitary waves travel towards each other, and interact in a head-on collision. We study the time evolution of the collision, as well as the time evolution of the eigenvalues (of $L$) associated to these solitary waves in the same time window. As is well known, Toda solitons exhibit elastic collision, i.e.\ they retain their shapes and speeds after the collision, and no radiation is produced during their interaction. The situation is slightly different for clean solitary waves in the perturbed lattices, as we shall see in the numerical experiment that is to be discussed now.

Figure \ref{F:SolWaveCol} displays two equal-sized solitary waves traveling towards each other in the perturbed lattice with the perturbation \mbox{$u(r)=r^2$} and \mbox{$\varepsilon=0.05$}, in a time window where the interaction occurs. 
\begin{figure}[h!]\centering
\subfigure[$t=1870$]{\includegraphics[width=188pt]{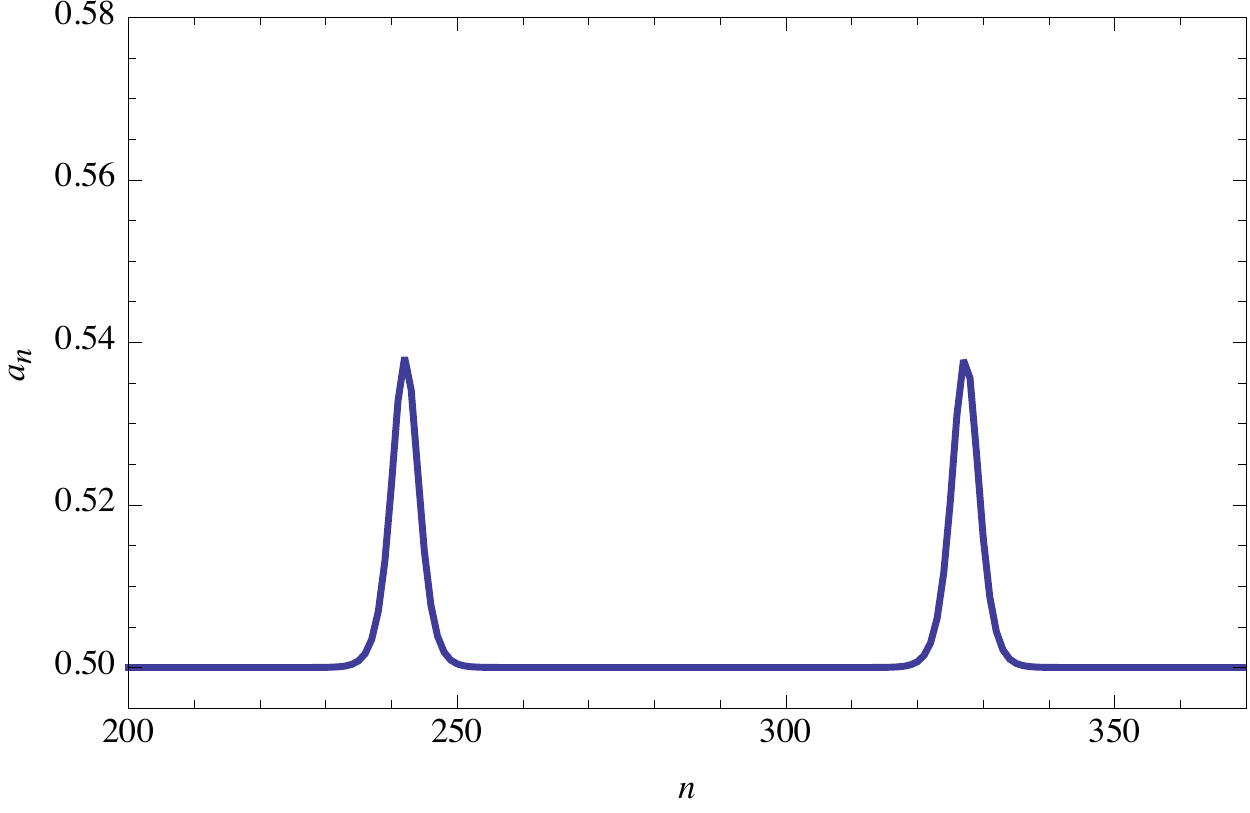}}\quad
\subfigure[$t=1905$]{\includegraphics[width=188pt]{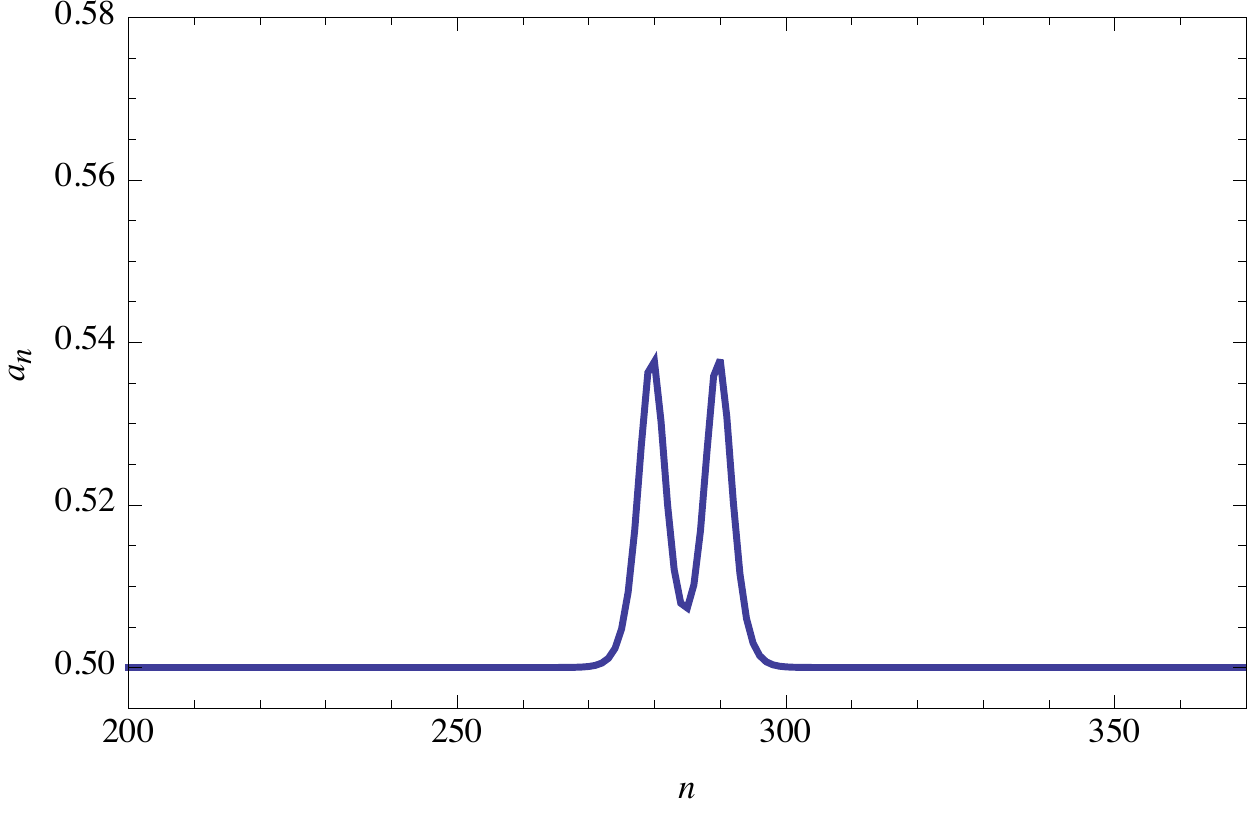}}\\
\subfigure[$t=1910$]{\includegraphics[width=188pt]{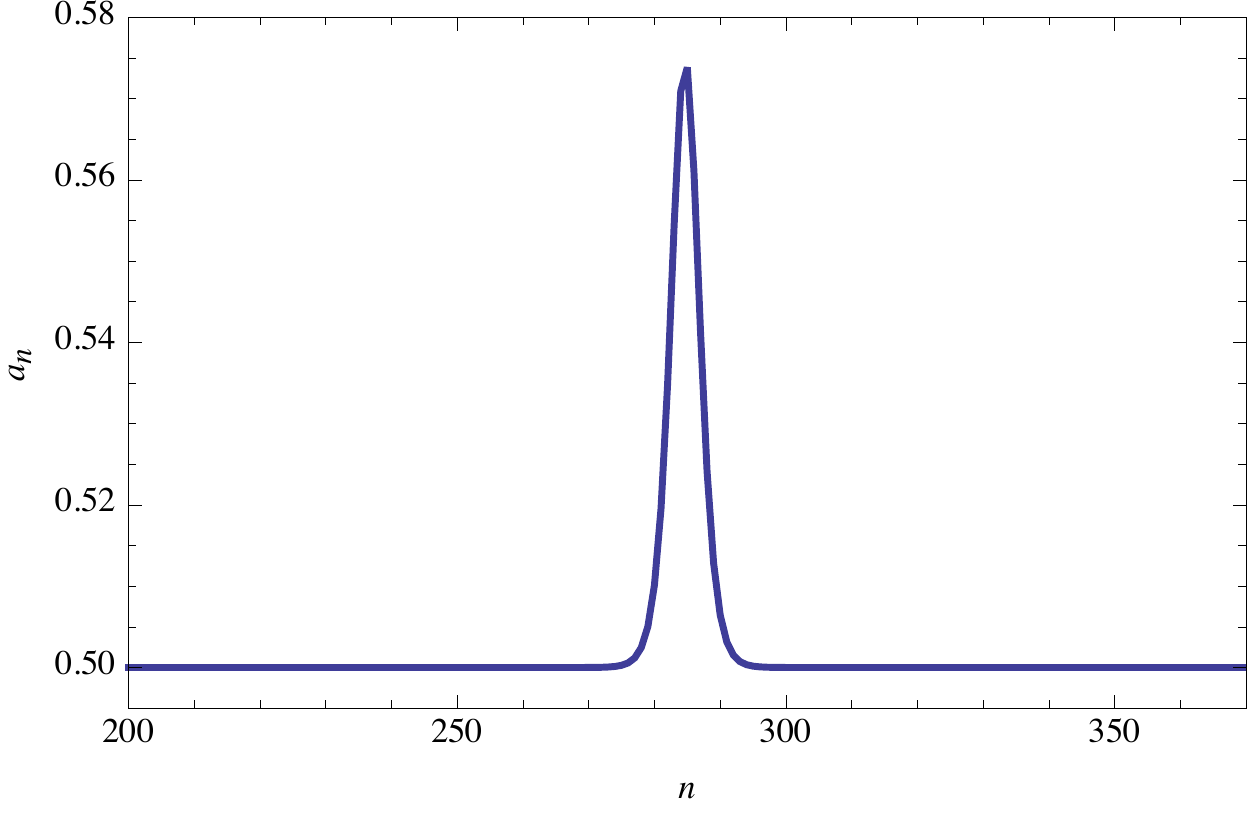}}\quad
\subfigure[$t=1945$]{\includegraphics[width=188pt]{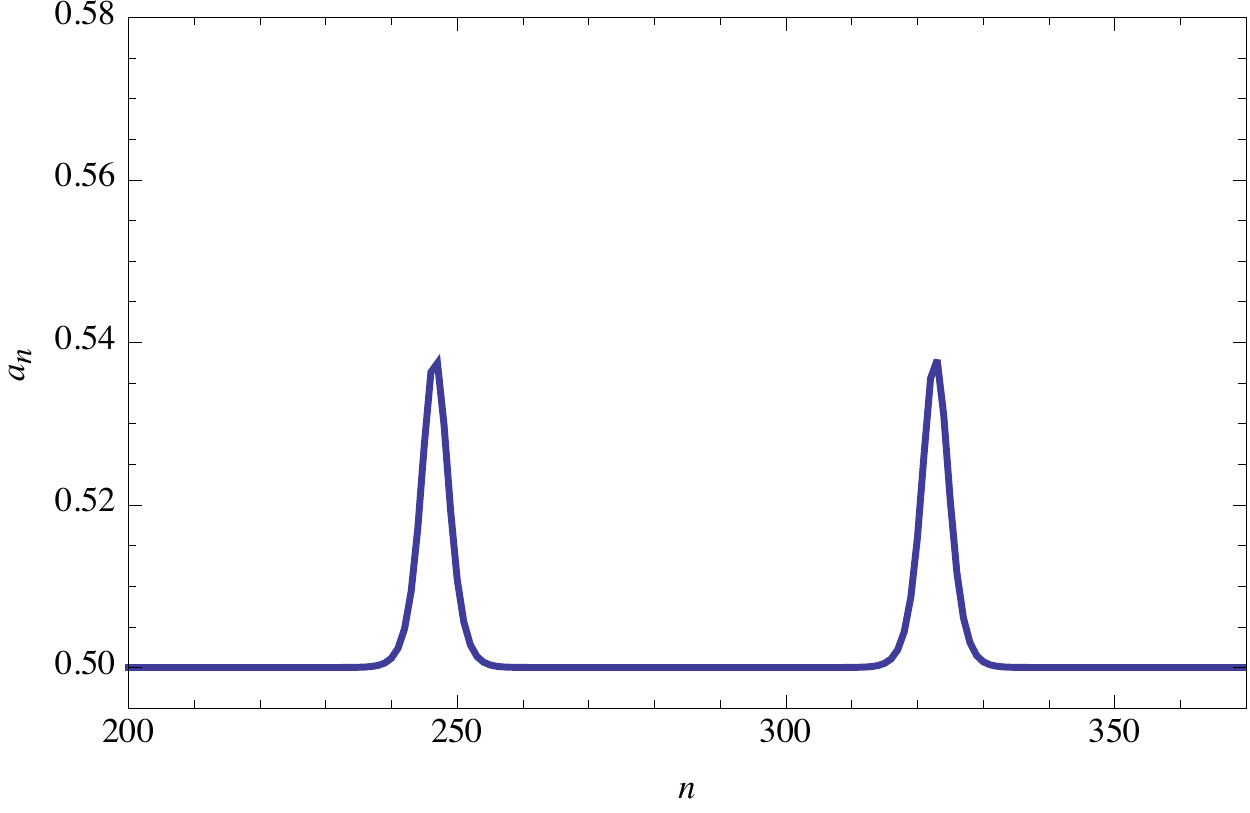}}
\caption{Interaction of two equal-sized clean solitary waves during head-on collision. $u(r)=r^2$, $\varepsilon=0.05$, $\Delta t =10^{-3}$.}
\label{F:SolWaveCol}
\end{figure}
From Figure~\ref{F:SolWaveCol}(a) to (b) the waves propagate towards each other from time $t=1870$ until $t=1905$, where they are already in the collision state. At time $t=1910$, the interacting profile forms a peak as seen in Figure~\ref{F:SolWaveCol}(c), and Figure~\ref{F:SolWaveCol}(d) displays the wave profiles at $t=1945$, separating from each other.
A closer look at the (spatial) interval between the separating waves in Figure~\ref{F:SolWaveCol}(d) reveals that there develops some radiation with very small amplitude.

Figure~\ref{F:SolWaveCol_closer}(a) displays the solitary waves before the interaction, and Figure~\ref{F:SolWaveCol_closer}(b) shows the radiation development between the separating solitary waves. Amplitude of this wave decays with time as is usual for purely dispersive waves. 
\begin{figure}[h!]\centering
\subfigure[$t=1870$]{\includegraphics[width=188pt]{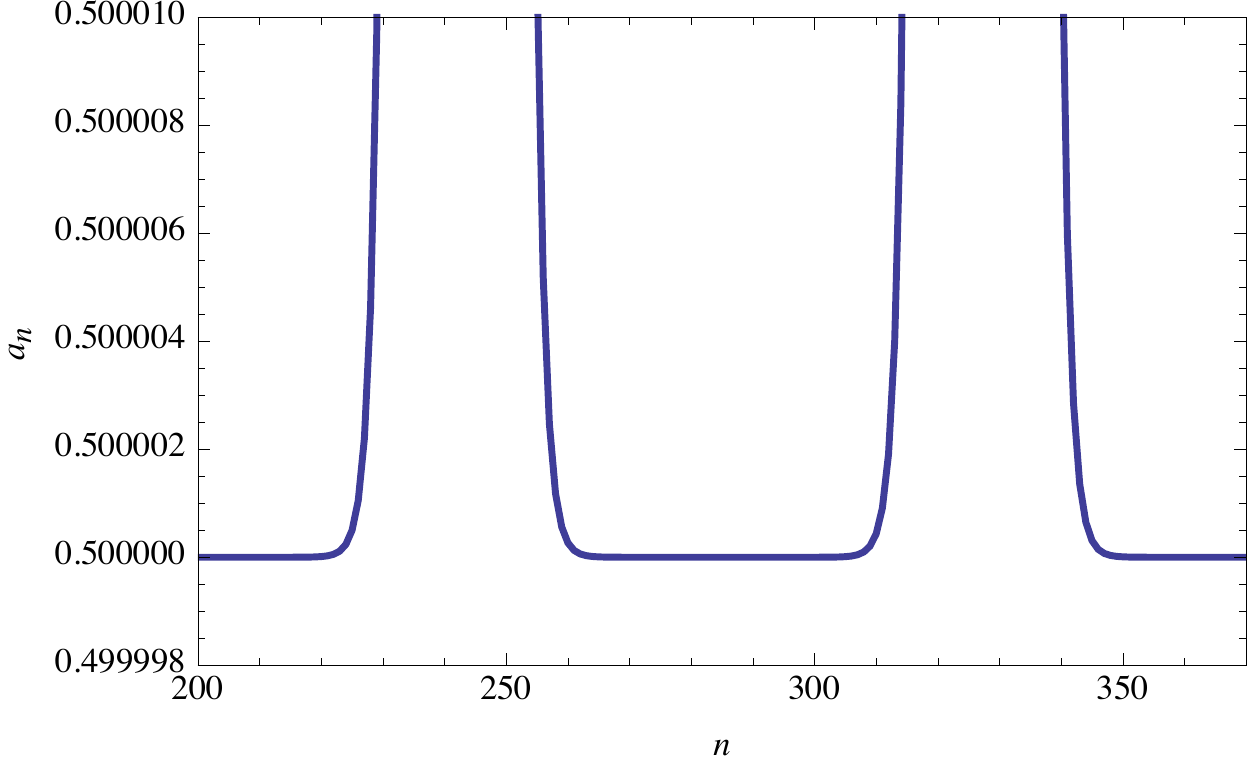}}\quad
\subfigure[$t=1950$]{\includegraphics[width=188pt]{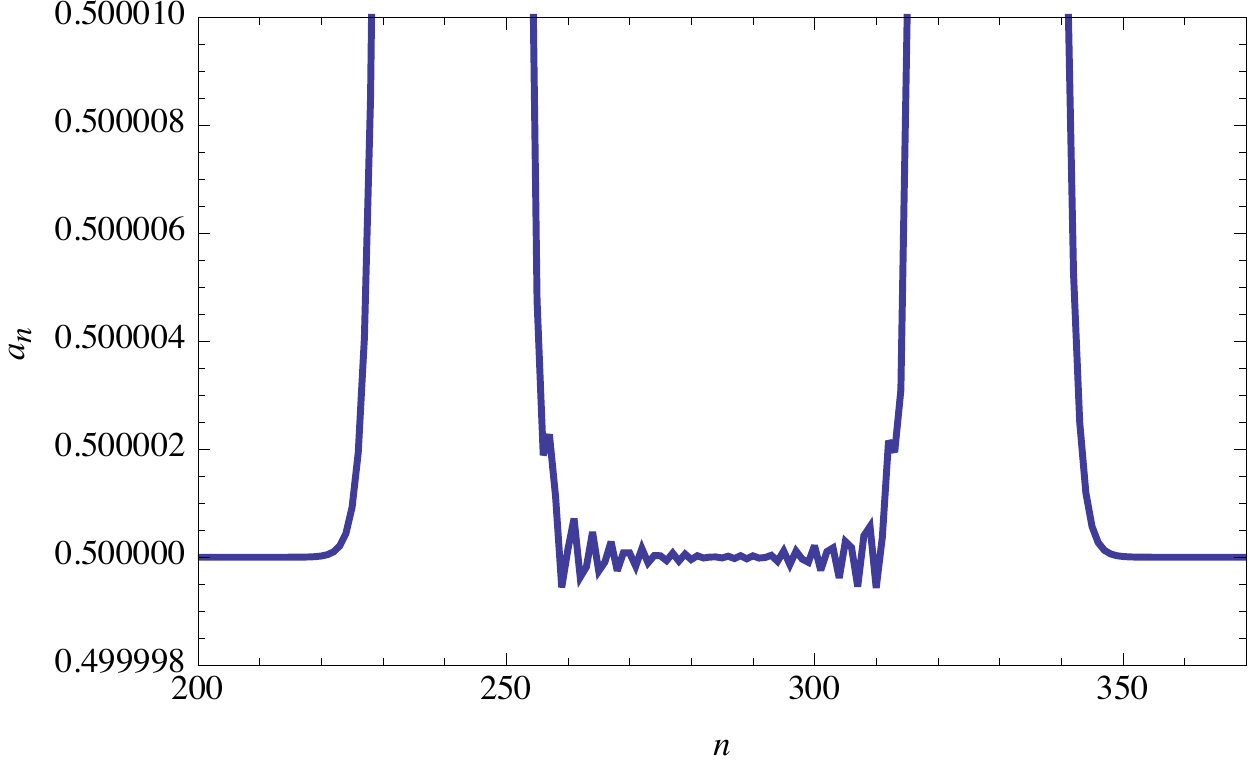}}
\caption{Head on collision of two clean solitary waves, (a) Pre-collision, (b) Post-collision.}
\label{F:SolWaveCol_closer}
\end{figure}
Now, we turn our attention to evolution of eigenvalues associated to each clean solitary wave in the numerical experiment described above. As is well known, the entire spectrum of $L$ is conserved under the Toda dynamics. Although this is not the case for the perturbed lattices, once a perturbation is fixed and a clean solitary wave solution of the perturbed system is generated, the eigenvalue associated to this clean solitary wave is found to remain asymptotically constant in time. However, the situation in case of interacting solitary waves is different. Figure~\ref{F:EVCol}(a) and (b) show time evolution of the eigenvalues corresponding to the clean solitary waves that are considered in the preceding paragraph. During the interaction, both eigenvalues deviate from their asymptotic values and move towards the edges of the continuous spectrum $[-1,1]$. As the collision state comes to an end, eigenvalues converge back to the constant asymptotic values they attain before the collision.
\begin{figure}[h!]\centering
\subfigure[Left-incident wave]{\includegraphics[width=188pt]{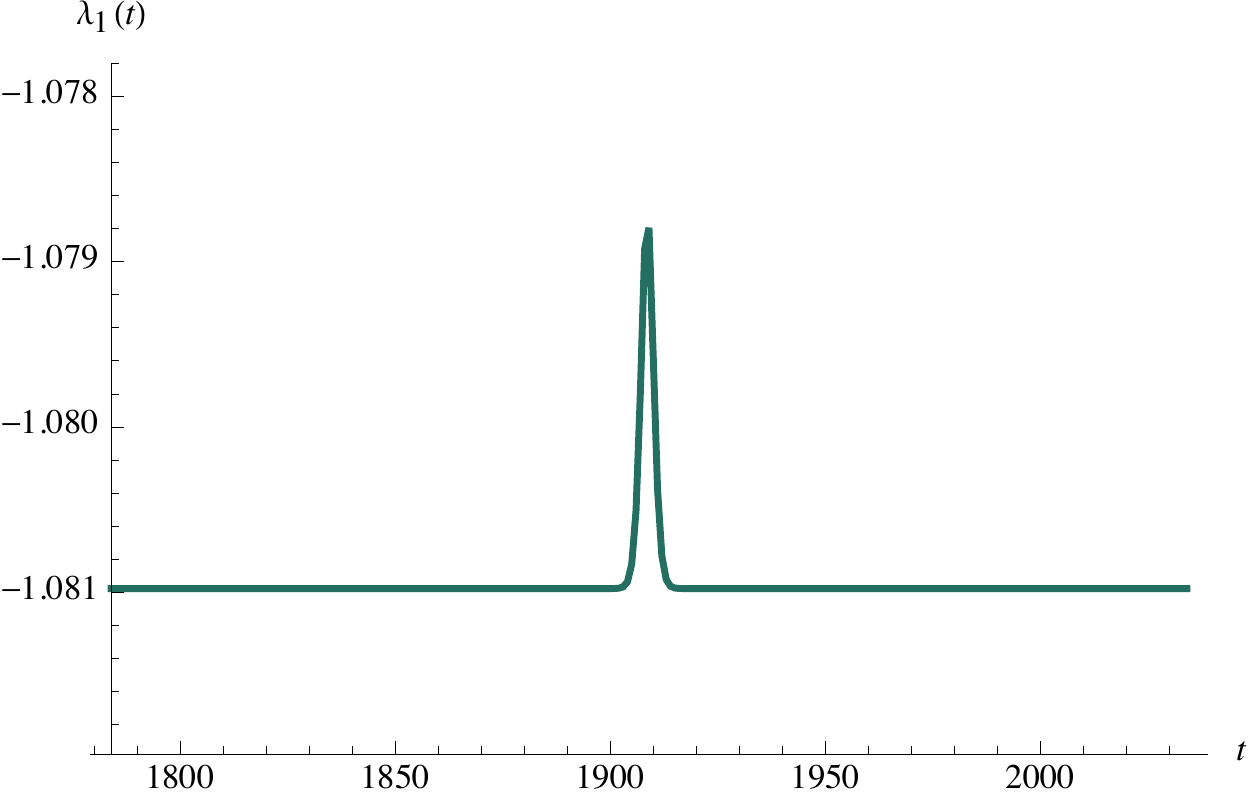}}\quad
\subfigure[Right-incident wave]{\includegraphics[width=188pt]{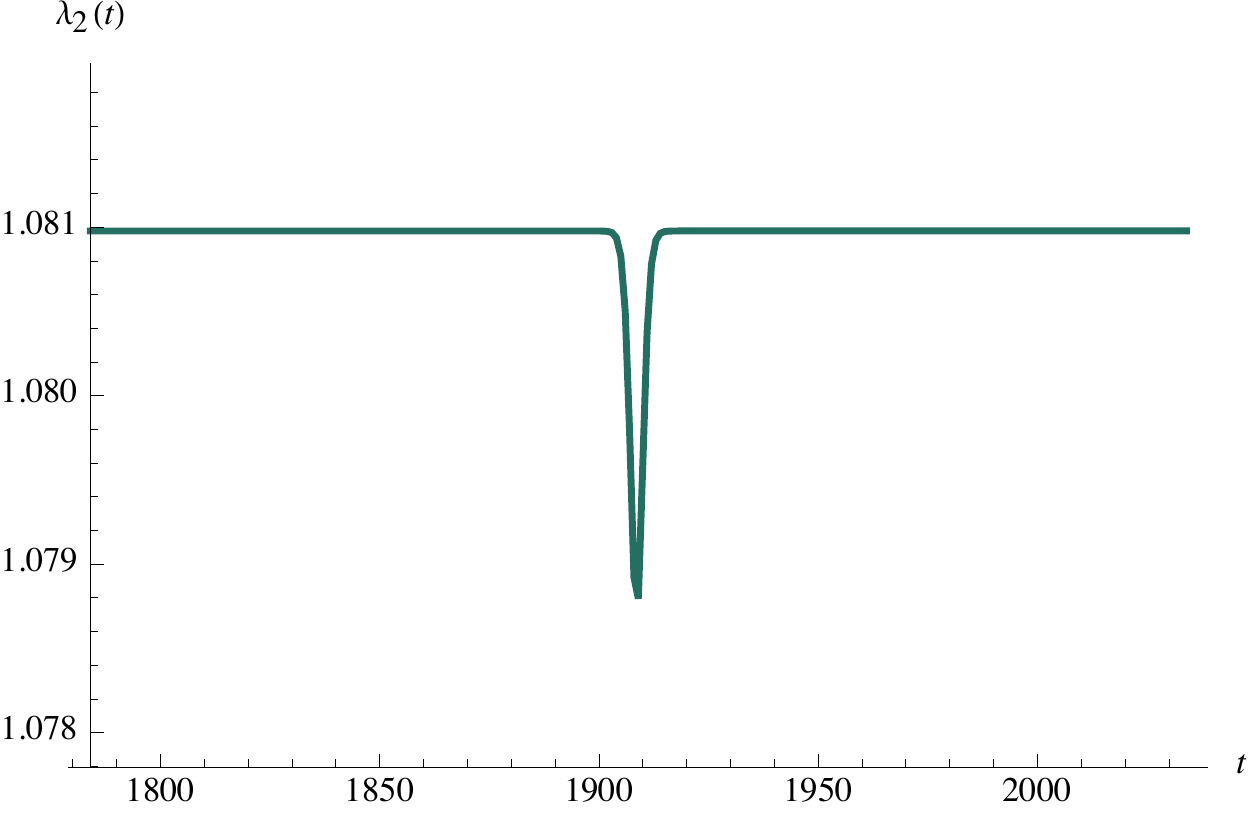}}
\caption{Eigenvalues corresponding to two clean solitary waves during their head-on collision.}
\label{F:EVCol}	
\end{figure}
In Figure~\ref{F:EVCol_3cases}, we compare evolution of eigenvalues that are associated to the solitons in the Toda lattice, to the leading solitary waves emerging from the soliton initial data in the perturbed lattice, and to the clean solitary waves generated under the same perturbation. Figure~\ref{F:EVCol_3cases}(a) displays the trajectories of the eigenvalues corresponding to the left incident traveling waves, and (b) displays those corresponding to the right incident waves. 
\begin{figure}[h!]\centering
\subfigure[Left-incident wave]{\includegraphics[width=188pt]{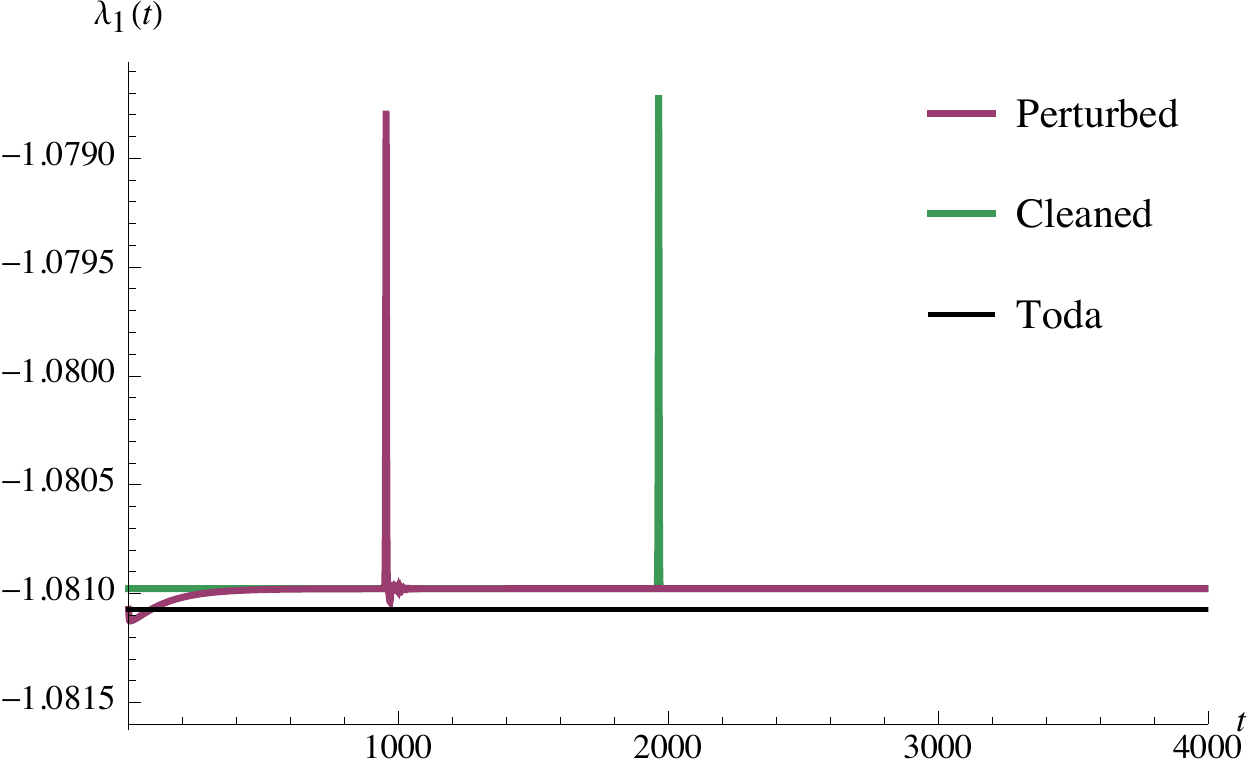}}\quad
\subfigure[Right-incident]{\includegraphics[width=188pt]{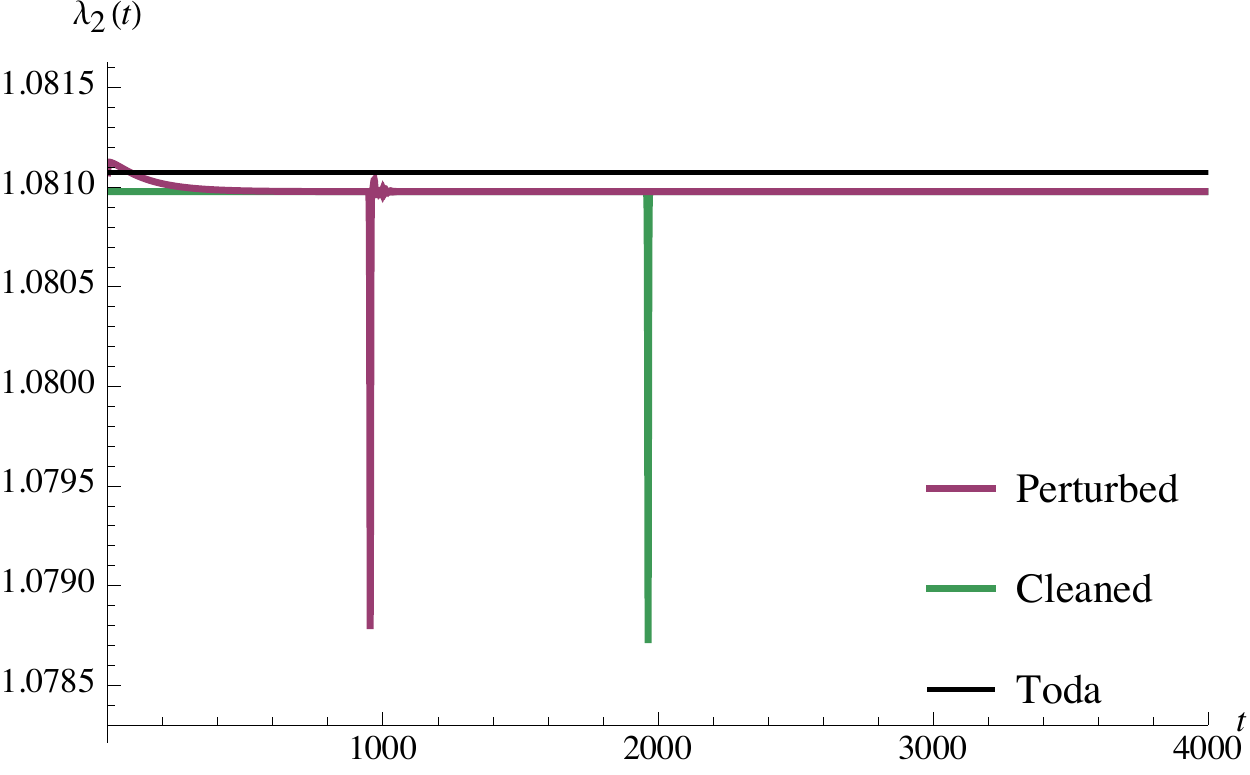}}
\caption{Eigenvalues corresponding to solitary waves during their head-on collision.}
\label{F:EVCol_3cases}
\end{figure}

\subsection{Reflection coefficient for a clean solitary wave solution}\label{S:ref}
Finally, we numerically study the time evolution of the reflection coefficient in the scattering data that is associated to a single clean solitary wave solution of the perturbed lattice. We find that the reflection coefficient associated to a clean solitary wave solution of the perturbed lattice is nontrivial; and that it exhibits oscillatory behavior, both in time and space, localized near the edges of the ac spectrum, that is, near the points $z=\pm 1$ on the unit circle. 
We numerically find that modulus of the reflection coefficient associated to a clean solitary wave solution remains asymptotically constant for all times $t\geq 0$, which is also the case in the pure Toda lattice. Figure~\ref{F:abs_refco_x2} displays the absolute value of the reflection coefficient associated to a single clean solitary wave solution of the lattice with $u(r)=r^2$ and $\varepsilon = 0.05$, at time (a) $t=0$ and (b) $t=200$.
\begin{figure}[h!]
\centering
\subfigure[$t=0$]{\includegraphics[width=397pt]{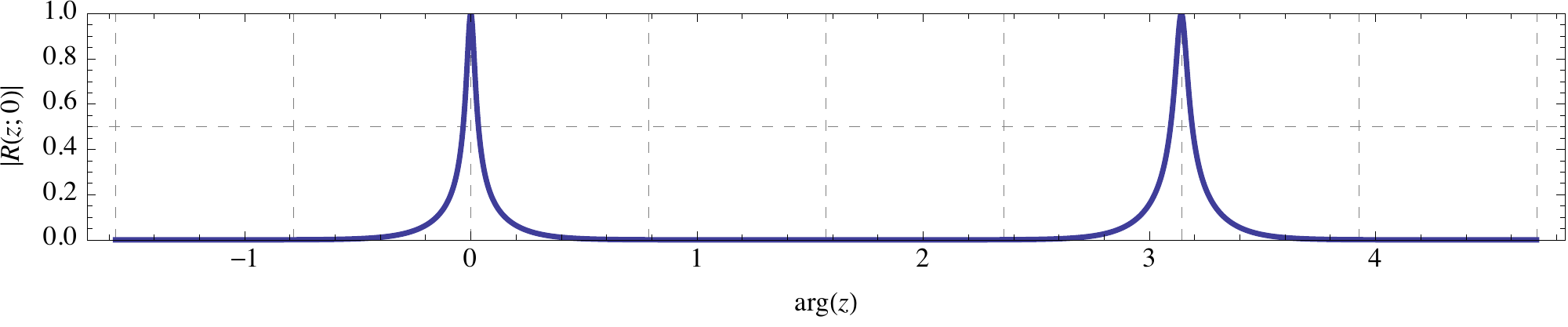}}\\
\subfigure[$t=200$]{\includegraphics[width=397pt]{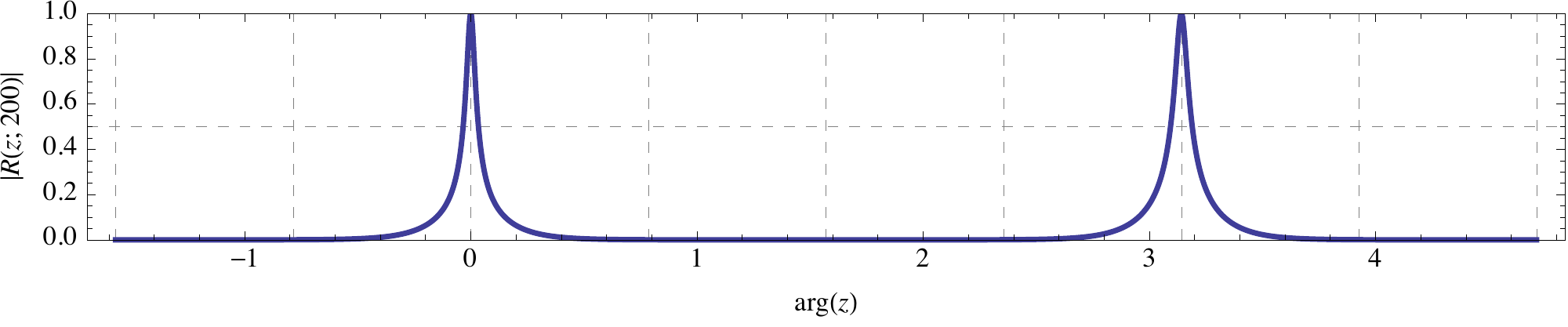}}
\caption{$|R(z;t)|$, with $u(r)=r^2$ and $\varepsilon = 0.05$; $|z|=1$, $-\pi/2 < \arg z \leq 3 \pi /2$.}
\label{F:abs_refco_x2}	
\end{figure}

\section{Discussion and conclusions} 
In this paper we have investigated the long-time behavior
of certain FPU lattices in a perturbative regime around the completely integrable Toda lattice.
We consider this question through the lens of the scattering data which can be associated,
at any time, to their solutions.  We are interested in the evolution of the scattering data since we know, 
via the bijective inverse scattering map, that each eigenvalue/norming constant pair 
corresponds to a localized ``bump" in the solution. 
If an eigenvalue is constant in time (as it is in the integrable case), 
then this ``bump" is a solitary wave traveling with constant speed.
In the perturbed FPU lattice, the evolution equations of the scattering data become much more complicated,
but we observe that the data itself continues to contain essential, readable information which describes
each part of the solution in the long time limit.

More specifically, we numerically create solitary wave solutions of the (non-integrable) FPU lattices, whose existence was proven in \cite{FW}, and identify the scattering data corresponding to such solutions by using the scattering transform for the Toda lattice. We find that eigenvalues in the scattering data remain constant for such solutions just as is the case for the integrable Toda lattice. This behavior is consistent with what was observed for a certain perturbation of the periodic Toda lattice in the relatively shorter time scales in \cite{FFM}. We also find that an eigenvalue associated with a Toda soliton initial data quickly diverges from its initial value when the soliton is let to evolve under the perturbed dynamics, but it converges to a new constant value after a short time elapses. From an inverse scattering point of view, this means that the initial poles in the associated Riemann-Hilbert problem will quickly settle at new locations in the complex plane, and then remain constant over time as is the case for the Toda lattice. We believe that these findings could serve as a stepping-stone towards performing the type of analysis as in \cite{DZ_Pert} and proving a long-time asymptotics result similar to the one in \cite{DZ_Pert} for solutions of perturbations of the doubly-infinite Toda lattice.

We also find that new eigenvalues emerge from the continuous spectrum as the solutions evolve under the perturbed dynamics. Long-time behavior of these new (and small) eigenvalues is unclear in our numerical studies, but emergence of new eigenvalues in the scattering data presents an obstacle towards using the scattering/inverse scattering transform approach employed in \cite{DZ_Pert} for the perturbed systems. We are currently investigating this phenomenon in a different project based on the numerical study of the Riemann-Hilbert problem associated with the inverse scattering transform for the Toda lattice.

\section{Acknowledgements}
\noindent The authors wish to thank Jerry Bona, Percy Deift, Fritz Gesztesy, Christian Klein, Ken McLaughlin, Peter Miller, and Tom Trogdon for useful discussions and suggestions. Both authors acknowledge the support of the National Science Foundation through NSF grants DMS-1150427 and DMS-0845760.

\appendix
\section{Proofs and remarks}
\label{A:app}
In this appendix, we collect some of the longer proofs of the theorems from the first part of the paper. We include these proofs
for completeness, since the methods are in most cases fairly standard, but still needing to be adapted to our case of the perturbation of the Toda lattice.

More can be shown regarding the solutions $(a, b)$ of the perturbed lattice than we discussed in Section~\ref{S:Background}. 
We let $\ell^1_w(\mathbb{Z})$ denote the weighted $\ell^1$-space, with the weight function given by \mbox{$n \mapsto 1 + |n|$}, 
and define the Banach space $X_w^1  = \ell^{1}_w(\mathbb{Z}) \oplus \ell^{1}_w(\mathbb{Z})$ equipped with the norm
\begin{equation*}
\big\| (x ,y) \big\|_{w,1} = \sum_{n\in\mathbb{Z}}\big(1 + |n|\big)\big( | x_n | + | y_n | \big)\,.
\end{equation*}
Then we have the following result, which establishes further spatial decay (needed in order to set up the scattering theory) 
for the solutions of the perturbed lattice.
\begin{theorem}\label{T:SpatialDecay}
Let $V_\varepsilon$ be a potential satisfying Assumption~\ref{A}, and $\tilde{a}^0$ and $b^0$ be bounded sequences such that $\left(\tilde{a}^0 , b^0\right)\in X_\text{w}^1$.
Let $\big(a(t), b(t)\big)$ be the unique global solution of the perturbed lattice \eqref{E:eom_p_ab} 
corresponding to the initial conditions
\begin{equation*}
a(0) = \tfrac{1}{2} + \tilde{a}^0 > 0~\text{ and }~b(0) = b^0\,.
\end{equation*}
Then
$
\Big(a(t) - \tfrac{1}{2}\,, \,b(t) \Big) \in X_w^1\,,
$
for all times $t \geq  0$.
\end{theorem}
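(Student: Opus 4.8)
The plan is to propagate the weighted $\ell^1$ bound in time by deriving a Grönwall-type estimate for the $X_w^1$-norm of $\big(a(t)-\tfrac12,b(t)\big)$. Write $\tilde a_n=a_n-\tfrac12$. The starting point is the a priori information already available: by the global well-posedness recalled in Section~\ref{S:Background}, the quantity $q_{n+1}-q_n$ is bounded in $\ell^\infty(\mathbb Z)$ uniformly in $t\ge 0$, so that $a_n=\tfrac12 e^{-(q_{n+1}-q_n)/2}$ obeys $\kappa\le a_n(t)\le \kappa^{-1}$ for some $\kappa>0$ and all $n,t$; moreover $\tilde a(t)$ and $b(t)$ are bounded in $\ell^2(\mathbb Z)\cap\ell^\infty(\mathbb Z)$ uniformly in $t$. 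First I would record the pointwise bound on the perturbation term: since $u\in C^2$ with $u'(0)=0$ and $c_n=u'\!\big(\log(1/(4a_n^2))\big)$ vanishes when $a_n=\tfrac12$, the uniform two-sided bound on $a_n$ confines the argument of $u'$ to a fixed compact interval, and a single application of the mean value theorem yields $|c_n|\le L\,|\tilde a_n|$ with $L$ independent of $n$ and $t$.

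The core difficulty is that I cannot simply differentiate $M(t):=\big\|(\tilde a(t),b(t))\big\|_{w,1}$, because its finiteness for $t>0$ is exactly what must be proved, so the naive manipulation is circular. To get around this I would introduce the damped weights $w_n^\delta=(1+|n|)e^{-\delta|n|}$ for $\delta\in(0,1]$ and set $M^\delta(t)=\sum_{n}w_n^\delta\big(|\tilde a_n(t)|+|b_n(t)|\big)$. For every fixed $\delta>0$ this sum is finite for all $t$, since $\big(w_n^\delta\big)_n\in\ell^2(\mathbb Z)$ and $\tilde a(t),b(t)\in\ell^2(\mathbb Z)$, so Cauchy--Schwarz gives $M^\delta(t)<\infty$. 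Working with $M^\delta$ rather than $M$ removes the circularity while keeping the weight under control as $\delta\to0$.

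Next I would estimate the time derivatives directly from \eqref{E:eom_p_ab}. Using the uniform bounds one gets $|\partial_t\tilde a_n|\le \|a\|_\infty\big(|b_{n+1}|+|b_n|\big)$ and, via $a_n^2-a_{n-1}^2=(\tilde a_n-\tilde a_{n-1})(1+\tilde a_n+\tilde a_{n-1})$ together with $|c_{n-1}-c_n|\le L(|\tilde a_{n-1}|+|\tilde a_n|)$, a bound $|\partial_t b_n|\le C_1\big(|\tilde a_n|+|\tilde a_{n-1}|\big)$. I would then pass to the Duhamel (integral) form of these equations, multiply by $w_n^\delta$, take absolute values and sum; the only structural input needed is that the nearest-neighbour shifts are harmless, i.e.\ $w_{n\pm1}^\delta/w_n^\delta\le 2e$ uniformly in $n$ and in $\delta\in(0,1]$. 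Reindexing the shifted sums then produces
\begin{equation*}
M^\delta(t)\le M^\delta(0)+C'\int_0^t M^\delta(s)\,ds\,,
\end{equation*}
with $C'$ independent of $\delta$, whence Grönwall's inequality yields $M^\delta(t)\le M^\delta(0)\,e^{C't}\le M(0)\,e^{C't}$, the last step using $w_n^\delta\le 1+|n|$ and the hypothesis $M(0)<\infty$.

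Finally I would let $\delta\to0^+$: since $w_n^\delta\nearrow 1+|n|$ monotonically, the monotone convergence theorem gives $M^\delta(t)\nearrow M(t)$, so $M(t)\le M(0)e^{C't}<\infty$ for every $t\ge0$, which is the assertion. The main obstacle, as indicated, is the a priori finiteness issue, which is precisely what the damped-weight regularization is designed to circumvent; a secondary technical point is the term-by-term summation of the estimates, most cleanly handled in the integral formulation above so as to avoid any appeal to differentiability of $t\mapsto |\tilde a_n(t)|$.
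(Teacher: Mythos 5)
Your proof is correct, but it resolves the central difficulty by a genuinely different mechanism than the paper. The paper's proof treats \eqref{E:spatial_ode} as an ODE in the Banach space $X_w^1$ itself: it first establishes \emph{local} existence and uniqueness there via a contraction mapping (Banach fixed point theorem) on a ball $B_\delta$ chosen so that the first component stays positive, identifies that fixed point with the known solution of the perturbed lattice, and only then applies a Gr\"onwall estimate --- on an interval where the $X_w^1$-norm is finite by construction --- to rule out blow-up and continue the solution globally. You instead take the global $\ell^2$-solution as given and never construct anything in $X_w^1$: the circularity you correctly identify (one cannot differentiate or integrate a norm not yet known to be finite) is circumvented by the damped weights $w_n^\delta=(1+|n|)e^{-\delta|n|}$, for which finiteness of $M^\delta(t)$ is free by Cauchy--Schwarz against the uniform-in-time $\ell^2$ bounds, the Gr\"onwall constant is $\delta$-independent thanks to the uniform shift bound $w_{n\pm1}^\delta/w_n^\delta\le 2e$, and the weight is removed at the end by monotone convergence. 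Both arguments ultimately rest on the same two ingredients --- the uniform $\ell^\infty$ bounds on $a$, $1/a$, $b$ and the local Lipschitz estimate $|c_n|\le L|\tilde a_n|$ coming from $u\in C^2$, $u'(0)=0$ --- and produce the same exponential bound. What your route buys is economy: no fixed-point construction, no identification/uniqueness step in $X_w^1$, and a cleaner handling of the term-by-term summation via the Duhamel form and Tonelli. What the paper's route buys is extra information of independent interest: local well-posedness and continuous dependence on initial data \emph{in the space} $X_w^1$, not merely propagation of the weighted bound along a solution already known to exist. One small point worth making explicit in your write-up: the hypotheses ($\tilde a^0$ bounded, $(\tilde a^0,b^0)\in X_w^1$, $a(0)>0$) do imply $\inf_n a_n(0)>0$ and hence that the initial data fall within the scope of the $\ell^2$ well-posedness theory you invoke, since $\tilde a^0_n\to 0$ forces $a_n(0)\to\tfrac12$ at $\pm\infty$.
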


In the case of the Toda lattice ($\varepsilon=0$), this result was proven by G. Teschl \cite{Tesch02}. Our proof
is a modification of his, adapted to deal with the perturbation term in the evolution equation.

\begin{proof}
We consider $\tilde{a}(t) = a(t) - \frac{1}{2}$\,, with $\tilde{a}(0) = \tilde{a}^0$\,, and study the differential equation which governs the evolution of $(\tilde{a}, b)$:
\begin{equation}\label{E:spatial_ode}
\partial_t 
\begin{pmatrix} \tilde{a}_n (t)\\ b_n (t)\end{pmatrix} =
\begin{pmatrix}
a_n(t) \big( b_{n+1} (t) - b_n (t) \big)\\
\big(2a_n(t) + 1 \big) \tilde{a}_n(t) - \big(2a_{n-1}(t) + 1 \big) \tilde{a}_{n-1}(t)+ \varepsilon \tilde{U}_{nn}(t)
\end{pmatrix}
\end{equation}
subject to initial conditions $\big( \tilde{a}(0), b(0 )\big) = \left(\tilde{a}^0, b^0\right) \in X^{1}_w$, where
\begin{equation*}
\tilde{U}_{nn}(t) = \frac{1}{2}\Big\{ u' \Big(-2\log{\big(2\tilde{a}_{n-1}(t) + 1\big)}\Big) - u' \Big(-2\log{\big(2\tilde{a}_{n}(t) + 1\big)}\Big) \Big\}\,,
\end{equation*}
for each $n$ in $\mathbb{Z}$. Let $f\big(t,\big(\tilde{a}(t), b(t)\big)\big)$ denote the right hand side of \eqref{E:spatial_ode}. By our assumptions on the initial data, we can choose
\begin{equation*}
\delta = \tfrac{1}{2}\min_{n\in\mathbb{Z}} \Big( \tilde{a}_n^{0} + \tfrac{1}{2}\Big)
\end{equation*} 
so that for any $(x,y)$ in the ball $B_{\delta} = \left\lbrace (x,y)  \colon \big\|(x,y)-\big(\tilde{a}^{0}, b^{0}\big)\big\|_{w,1} \leq \delta \right\rbrace$, $B_\delta \in X^1_w$, we have $\min_{n\in\mathbb{Z}} x_n >0$. Now, since the weight function $n \mapsto 1 +  |n|$ satisfies
\begin{equation}
\sup_{n\in\mathbb{Z}} \left\lbrace \tfrac{1+|n+1|}{1+|n|},  \tfrac{1+|n|}{1+|n+1|}\right\rbrace < \infty\,,
\end{equation}
the shift operators are bounded with respect to the norm $\| \cdot \|_{w,1}$. The multiplication operator with the sequence $a(t)$ is also uniformly bounded from $X^1_w$ into $X^1_w$ as $\|a(t)\|_{\ell^\infty}$ is bounded uniformly in time. Moreover, since $u\in C^2(\mathbb{R})$ and $r \mapsto \log r$ is Lipschitz continuous on $[\rho, +\infty)$ for any $\rho >0$, the map
\begin{equation*}
\begin{pmatrix} x_n\\ y_n \end{pmatrix}_{n\in\mathbb{Z}} \mapsto  \frac{1}{2}\begin{pmatrix} 0 \\
u' \Big(-2\log{\big(2x_{n-1}+ 1\big)}\Big) - u' \Big(-2\log{\big(2x_{n} + 1\big)}\Big)
\end{pmatrix}_{n\in\mathbb{Z}}
\end{equation*}
is Lipschitz continuous on $B_\delta \subset X^1_w$ by our choice of $\delta$. Therefore, there exists $T>0$ such that the map
\begin{equation*}
\begin{pmatrix} x(t)\\ y(t) \end{pmatrix} \mapsto  \begin{pmatrix}\tilde{a}^0 \\ b^0 \end{pmatrix} +
\int_0^t{f\Big(\tau, \big(x(\tau),y(\tau)\big)\Big)\,d\tau}
\end{equation*}
defined on the Banach space $X_T = [0,T] \times X_{w,1}$ equipped with the norm
\begin{equation*}
\| \cdot \|_{X_T} = \sup_{t\in [0,T]}\| \cdot \|_{w,1}\,,
\end{equation*}
is a contraction mapping from the closed ball $B_{\delta, T} = [0,T] \times B_{\delta} \subset X_T$ into itself, and the Banach 
Fixed Point Theorem implies that \eqref{E:spatial_ode} has a unique solution $\big(\tilde{a}(t), b(t)\big) \in X^1_w$ for $t\in[0,T]$ 
with $\big(\tilde{a}(0), b(0)\big) = \big(\tilde{a}^0, b^0\big)$. A straightforward calculation shows that this solution depends 
continuously on the initial data and $\big(\tilde{a}(t) + \tfrac{1}{2}\,, \,b(t) \big)$ coincides with the solution $\big(a(t)\,, \,b(t) \big)$ 
of the perturbed lattice. Now, suppose $(0,t^{*})$ is some finite time interval for which the solution to \eqref{E:spatial_ode} exists. 
Since the solution satisfies $(a_n)_{n\in\mathbb{Z}},(1/a_n)_{n\in\mathbb Z},(b_n)_{n\in\mathbb{Z}}
\in\ell^\infty(\mathbb{Z})$, we have 
\begin{equation*}
\big\| \big(0, \tilde{U}_{nn}(t)\big)_{n\in\mathbb{Z}}\big\|_{w,1} \leq \varepsilon 6 K \big\| \big(\tilde{a}(t), b(t) \big) \big\|_{w,1}\,
\end{equation*}
for any $t\in(0,t^*)$, where $K$ is the product of the Lipschitz constants of $u$ and $r \mapsto \log{r}$, which depend only on the initial data.
Then
\begin{equation*}
\big\| \big(\tilde{a}(t), b(t) \big)\big\|_{w,1}  \leq \big\| \big(\tilde{a}^0, b^0 \big)\big\|_{w,1} + \int_0^{t}{\big(6(C+\varepsilon K) + 3\big)\big\| \big(\tilde{a}(\tau), b(\tau) \big)\big\|_{w,1}    \,d\tau}\,
\end{equation*}
for any time $t\in(0, t^*)$, where $C = \sup_{t\geq 0} \| a(t)\|_{\ell^{\infty}}$. Using Gr\"{o}nwall's inequality, we obtain
\begin{equation*}
\big\| \big(\tilde{a}(t), b(t) \big)\big\|_{w,1}  \leq \big\| \big(\tilde{a}^0, b^0 \big)\big\|_{w,1} e^{(6(C+\varepsilon K) + 3)t}\,.
\end{equation*}
Therefore the solutions of \eqref{E:spatial_ode} remain bounded on finite time intervals and hence they are global in time in $X^1_w$.
\end{proof}

In the second part of the appendix we include the  evolution equations for the scattering data associated to
the Jacobi matrices that we are studying.
Note that the scattering relations \eqref{E:scatteringsoln1} and \eqref{E:scatteringsoln2} can be rewritten as
\begin{equation}\label{scat_rels}
\big(\varphi_+(z,n)\,\,\,\varphi_+(z^{-1},n)\big)
=\big(\varphi_-(z^{-1},n)\,\,\,\varphi_-(z,n)\big)\cdot \bd S(z)\,,
\end{equation}
where $\bd S(z)$ is the $2\times2$ scattering matrix. It can easily be related to the transmission
and reflection coefficients:
\begin{equation}\label{E:scat_matrix}
\bd S(z)=\frac{1}{T(z)}\cdot
\left(\begin{matrix}
1 & -R_+(z)\\
R_-(z) & T(z)^2-R_+(z)R_-(z)
\end{matrix}\right)\,.
\end{equation}

Now we take into account the time dependence. The time evolution of the reflection coefficient under the 
Toda lattice is given by
$R(z,t) = R_0(z)e^{t(z-z^{-1})}$,
where $R_0(z)$ is the reflection coefficient of the initial Jacobi matrix $L(t=0)$. 
Under the perturbed lattice evolution, however, additional terms appear. We begin with
a technical, but very important, result:
\begin{theorem}\label{T:eom_p_S}
The evolution equation induced by \eqref{E:eom_p_L} on the scattering matrix $\bd S$ defined
in \eqref{scat_rels} and \eqref{E:scat_matrix} is given, for all $|z|=1$, $z\neq \pm1$, and all $t\in\mathbb{R}$, by
\begin{equation}\label{E:Evol_p_S}
\partial_t \bd{S}(z,t) + \theta(z)[\bd{S}(z,t),\sigma_3] = \frac{\varepsilon}{\theta(z)}\bd{D}(z,t)\,,
\end{equation}
where
\begin{equation}\label{E:DefnTheta}
\theta(z)=\frac{z-z^{-1}}{2}\,,
\end{equation}
and
\begin{equation*}
\bd{D}(z,t) = \sum_{n=-\infty}^{\infty}U_{nn}(t)
\begin{pmatrix}
\varphi_+(z;n,t)\varphi_-(z;n,t) & \varphi_+(z^{-1};n,t)\varphi_-(z;n,t)\\
-\varphi_+(z;n,t)\varphi_-(z^{-1};n,t) & -\varphi_+(z^{-1};n,t)\varphi_-(z^{-1};n,t)
\end{pmatrix}.
\end{equation*}
\end{theorem}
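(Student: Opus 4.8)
The plan is to express each entry of the scattering matrix $\bd S$ as a Wronskian of two Jost solutions and thereby reduce the statement to a time-derivative computation for those Wronskians. For $|z|=1$ the spectral parameter $\lambda=\tfrac12(z+z^{-1})$ is fixed and real, so $\varphi_+(z,\cdot),\varphi_+(z^{-1},\cdot),\varphi_-(z,\cdot),\varphi_-(z^{-1},\cdot)$ are all solutions of $L\varphi=\lambda\varphi$ for one and the same $\lambda$. Using the discrete Wronskian $W_n(f,g)=a_n(f_ng_{n+1}-f_{n+1}g_n)$, which is $n$-independent whenever $f,g$ solve $L\varphi=\lambda\varphi$, the free asymptotics give $W_n(\varphi_-(z^{-1},\cdot),\varphi_-(z,\cdot))=-\theta(z)$; projecting the scattering relation \eqref{scat_rels} onto the basis $\{\varphi_-(z^{-1},\cdot),\varphi_-(z,\cdot)\}$ via this Wronskian then writes each entry of $\bd S$ as an explicit sign times $\theta(z)^{-1}$ times a Wronskian of the form $W(\varphi_+(z^{\pm1},\cdot),\varphi_-(z^{\pm1},\cdot))$. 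Since $\theta(z)$ is time-independent, $\partial_t\bd S$ is governed entirely by the time-derivatives of these four Wronskians, and the factor $\theta(z)^{-1}$ already accounts for the $\varepsilon/\theta(z)$ prefactor in \eqref{E:Evol_p_S}.

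The next step is to determine how the Jost solutions evolve. Differentiating $L\varphi_\pm=\lambda\varphi_\pm$ in time and inserting the perturbed Lax equation \eqref{E:eom_p_L} gives $(L-\lambda)(\partial_t\varphi_\pm-P\varphi_\pm)=-\varepsilon U\varphi_\pm$, exactly as in the integrable case but with the inhomogeneity $-\varepsilon U\varphi_\pm$. Matching the free asymptotics (as $n\to\pm\infty$ the operator $P$ acts like its free version $P_0$, with $P_0z^{\pm n}=\pm\theta(z)z^{\pm n}$) and using the normalization \eqref{E:normJost}, I would write $\partial_t\varphi_\pm=(P\mp\theta(z))\varphi_\pm+\varepsilon\tilde\Psi_\pm$, where $\tilde\Psi_\pm$ solves $(L-\lambda)\tilde\Psi_\pm=-U\varphi_\pm$ and is singled out by the requirement that the normalization be preserved. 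The crucial structural point, to be established carefully, is that $\tilde\Psi_+$ decays at $+\infty$ and $\tilde\Psi_-$ decays at $-\infty$; this uses that $U_{nn}=\tfrac12(c_{n-1}-c_n)\to0$ (because $a_n\to\tfrac12$ and $u'(0)=0$) together with the spatial decay of $(a-\tfrac12,b)$ furnished by Theorem~\ref{T:SpatialDecay}.

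With the evolution of the Jost solutions in hand, I would differentiate a representative Wronskian, say $W_n(\varphi_+(z,\cdot),\varphi_-(z,\cdot))$, by the product rule, $\partial_tW_n(f,g)=W_n(\dot f,g)+W_n(f,\dot g)+(\dot a_n/a_n)W_n(f,g)$ with $\dot a_n/a_n=b_{n+1}-b_n$. Substituting $\dot\varphi_\pm=(P\mp\theta)\varphi_\pm+\varepsilon\tilde\Psi_\pm$ splits the result into three pieces. The genuinely free piece $W_n(Pf,g)+W_n(f,Pg)+(\dot a_n/a_n)W_n(f,g)$ vanishes identically for any two eigensolutions of the same $\lambda$, as follows from the three-term recurrence $L\varphi=\lambda\varphi$ for $f$ and $g$ -- this is the computation underlying the conservation of the transmission coefficient under the Toda flow. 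The $\theta(z)$-terms carry opposite signs for the arguments $z$ and $z^{-1}$ (since $\theta(z^{-1})=-\theta(z)$), and once the four Wronskians are collected they assemble precisely into the commutator $\theta(z)[\bd S,\sigma_3]$ with $\sigma_3=\diag(1,-1)$, reproducing the term already present in the integrable case. Finally, the perturbative piece $\varepsilon\big(W_n(\tilde\Psi_+,g)+W_n(f,\tilde\Psi_-)\big)$ is handled by the discrete Green identity $W_n(p,q)-W_{n-1}(p,q)=p_nQ_n-P_nq_n$ (where $(L-\lambda)p=P$ and $(L-\lambda)q=Q$): it shows that $W_n(\tilde\Psi_+,g)$ increases by $U_{nn}f_ng_n$ and $W_n(f,\tilde\Psi_-)$ decreases by $U_{nn}f_ng_n$ at each step. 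Summing the first increment from $+\infty$ (where $W_n(\tilde\Psi_+,g)\to0$) and the second from $-\infty$ (where $W_n(f,\tilde\Psi_-)\to0$) collapses to the full bilateral sum, yielding $\partial_tW_n(\varphi_+(z,\cdot),\varphi_-(z,\cdot))=-\varepsilon\sum_nU_{nn}\varphi_+(z,n)\varphi_-(z,n)$; dividing by $-\theta(z)$ reproduces the $(1,1)$ entry $\tfrac{\varepsilon}{\theta(z)}\bd D_{11}$ of \eqref{E:Evol_p_S}, and the remaining three entries, together with their signs, follow identically from the analogous Wronskians $W(\varphi_+(z^{\pm1},\cdot),\varphi_-(z^{\pm1},\cdot))$.

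I expect the main obstacle to be the second step: justifying that the corrections $\tilde\Psi_\pm$ genuinely decay at the correct ends, so that the boundary Wronskians vanish and the telescoping sums converge to $\bd D$. On $|z|=1$ both homogeneous solutions $\varphi_\pm(z,\cdot)$ and $\varphi_\pm(z^{-1},\cdot)$ are merely bounded and oscillatory, so preserving the normalization \eqref{E:normJost} forces the inhomogeneous correction to carry neither oscillatory mode at the relevant infinity; pinning this down, and proving the requisite uniform-in-$n$ (and locally uniform in $z$) bounds on the Jost solutions that make $\sum_nU_{nn}\varphi_\pm\varphi_\mp$ absolutely convergent, is where the spatial decay of Theorem~\ref{T:SpatialDecay} and the vanishing of $U_{nn}$ at $\pm\infty$ must be used in full. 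The algebraic assembly of the $\theta$-terms into $\theta(z)[\bd S,\sigma_3]$ and the cancellation of the free Wronskian piece are, by contrast, routine if slightly lengthy.
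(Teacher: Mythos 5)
Your proposal is correct---I checked the free-piece cancellation, the Green-identity telescoping, and all four sign assignments, and they assemble exactly into \eqref{E:Evol_p_S}---but it takes a genuinely different route from the paper. The paper works with the ``$\infty\times2$'' Jost matrices $\Psi$ and $\Phi$ directly: it makes a discrete variation-of-constants ansatz $\partial_t\Psi - P\Psi = \Psi\,\diag(-\theta,\theta) + (\Psi_n\bd{A}_n)_{n\in\mathbb{Z}}$, solves for the correction matrices explicitly as a series $\bd{A}_n = \tfrac{\varepsilon}{\theta}\sum_{j\ge n}U_{jj}(\cdots)$ (after inverting a $2\times2$ matrix whose determinant is the Wronskian $-a_n\theta$), and then extracts $\partial_t\bd{S}$ by inserting the scattering relation $\Psi_n=\Phi_n\bd{S}$ and letting $n\to-\infty$, using the asymptotics of $\Phi_n$, $(P\Phi)_n$, and $\partial_t\Phi_n$. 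You instead represent each entry of $\bd{S}$ as a Wronskian divided by $\pm\theta$, characterize the perturbative correction $\tilde\Psi_\pm$ abstractly (the solution of $(L-\lambda)\tilde\Psi_\pm=-U\varphi_\pm$ vanishing at the relevant infinity, pinned down by the normalization) rather than by an explicit series, and convert the time derivative of each Wronskian into the bilateral sum $\bd{D}$ by Lagrange/Green telescoping. The two arguments share their crux---the inhomogeneous equation $(L-\lambda)(\partial_t\varphi-P\varphi)=-\varepsilon U\varphi$ plus the selection of the solution compatible with the Jost normalization, which is also where both defer the hard analysis; indeed your $\tilde\Psi_+$ is precisely the first column of the paper's $(\Psi_n\bd{A}_n)_{n\in\mathbb{Z}}$, whose series form makes the required decay visible---but your version avoids the matrix inversion and the $n\to-\infty$ limit, localizing all the analysis in the characterization of $\tilde\Psi_\pm$, and the commutator structure together with the $\varepsilon/\theta$ prefactor emerge transparently from $\theta(z^{-1})=-\theta(z)$ and the Wronskian normalization $W(\varphi_-(z^{-1},\cdot),\varphi_-(z,\cdot))=-\theta$. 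What the paper's explicit $\bd{A}_n$ buys in exchange is reusability: essentially the same series (through the kernel $K(j,n;t)$) is what produces the norming-constant equation \eqref{E:eom_p_norm} in Theorem~\ref{T:eom_p_eigs}, whereas your abstract characterization would have to be made explicit again for that purpose.
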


Rewriting \eqref{E:Evol_p_S} as
\begin{equation*}\label{E:4:49}
\partial_t \bd{S} = -\theta
\left(\begin{matrix} 0 & -2S_{12} \\ 2S_{21} & 0 \end{matrix}\right)
+\frac{\varepsilon}{\theta}\bd{D}\,,
\end{equation*}
together with the equation \eqref{E:scat_matrix} for the reflection coefficient in terms of the entries of $\bd{S}$
leads immediately to the evolution equation for $R$:
\begin{equation}\label{E:eom_p_R_+}
\partial_t R(z;t) = 2 \theta(z) R(z;t) -\frac{\varepsilon\, T(z;t)^2}{\theta(z)}\sum_{n=-\infty}^{\infty} U_{nn}(t)\varphi_-(z;n,t)^2\,,
\end{equation}
for all $t\in\mathbb{R}$ and all $|z|=1$, $z\neq\pm1$, where $\theta(z)$ was defined in \eqref{E:DefnTheta}.

The proof of this result is based on a discrete version of the variation of constants technique.
While in this case the adaptation of the continuous technique to the discrete setting requires some
well-made choices along the way, it is still fairly standard, and similar, for example, to the derivation in \cite{DZ_Pert} of the 
analogous equation for the reflection coefficient
in a perturbed defocusing cubic nonlinear Schr\"odinger equation.

\begin{proof}[Proof of Theorem~\ref{T:eom_p_S}]
In order to streamline our notation, we consider for each $z$ with $|z|=1$ two ``$\infty\times 2$'' matrices:
\begin{equation*}
\Psi(z;t)=\big(\Psi_n(z;t)\big)_{n\in\mathbb{Z}}=\big(\psi_{n,j}(z;t) \big)_{\substack{n\in\mathbb{Z} \\  j=1,2}}
\text{ and }
\Phi(z;t)=\big(\Phi_n(z;t)\big)_{n\in\mathbb{Z}}=\big(\phi_{n,j}(z;t) \big)_{\substack{n\in\mathbb{Z} \\  j=1,2}}\,,
\end{equation*}
with
$\psi_{n,1}(z;t) = \varphi_+(z;n,t)$, $\psi_{n,2}(z;t) = \varphi_+(z^{-1};n,t)$,
$\phi_{n,1}(z;t) = \varphi_-(z^{-1};n,t)$, and $\phi_{n,2}(z;t) = \varphi_-(z;n,t)$.
Furthermore, in the proofs and some of the statements in this section, we will avail ourselves
of the common convention of suppressing the dependence on the $z$ and/or $t$ variables,
in order to keep the length of some of the formulas manageable.

We begin by differentiating both sides of the equation
$L(t)\Psi(z;t) = \frac{z+z^{-1}}{2}\Psi(z;t)$
with respect to $t$. Using the evolution equation \eqref{E:eom_p_L} for $L$ we obtain:
\begin{equation}\label{E:4:5}
\left( L - \frac{z+z^{-1}}{2} \right)\big( \partial_t\Psi(z) - P \Psi(z) \big) = -\varepsilon U \Psi(z)\,.
\end{equation}
For each $n\in\mathbb{Z}$, we seek $\bd{A}_n=\bd{A}_n(z;t)$ a $2 \times 2$ matrix such that
\begin{equation}\label{E:4:6}
\partial_t\Psi - P \Psi = \Psi
\left(\begin{matrix}-\theta &0 \\
0 & \theta \end{matrix}\right)
+ \big(\Psi_n\bd{A}_n\big)_{n\in\mathbb{Z}}\,,
\end{equation}
where $\theta=\theta(z)$ as in \eqref{E:DefnTheta}. We note that the first term on the right hand side is the one 
obtained in the usual Toda case, while the second term encodes the variation of constants. 
Plugging \eqref{E:4:6} in \eqref{E:4:5} 
followed by straightforward, if quite lengthy, calculations allow us to simplify the left-hand side, leading to
\begin{equation}\label{E:4:18}
a_{n-1}\Psi_{n-1}\big(\bd{A}_{n-1} - \bd{A}_n\big)
- a_{n}\Psi_{n+1}\big(\bd{A}_n - \bd{A}_{n+1}\big) = -\varepsilon U_{nn}\Psi_n\,.
\end{equation}
We seek $\big(\bd{A}_n\big)_n$ such that
$\Psi_n\big( \bd{A}_{n+1} - \bd{A}_n \big) = 0$ for all $n\in\mathbb{Z}$.
Using this assumption in \eqref{E:4:18} and combining everything
yields a single matrix equation:
\begin{equation*}
\begin{pmatrix}
-a_n\Psi_{n+1} \\
a_n\Psi_n
\end{pmatrix}
\big(\bd{A}_n - \bd{A}_{n+1} \big) =
\begin{pmatrix}
-\varepsilon U_{nn}\Psi_n \\
0
\end{pmatrix}\,.
\end{equation*}
Direct calculations, as well as standard arguments involving the Wronskian of the (linearly independent)
Jost solutions $\varphi_+(z,\cdot)$ and $\varphi_+(z^{-1},\cdot)$, show that
$\det \begin{pmatrix} -a_n\Psi_{n+1} \\ a_n\Psi_n \end{pmatrix} =a_n\frac{z^{-1} - z }{2}= -a_n \theta \neq 0\,.$
Hence we can invert the $2\times2$ matrix $\left(\begin{matrix} -a_n \Psi_{n+1} \\ a_n\Psi_n \end{matrix}\right)$
and sum the results to obtain
\begin{equation}\label{E:4:33}
\bd{A}_n=\frac{\varepsilon}{\theta}\sum_{j=n}^{\infty}U_{jj}
\left( \begin{matrix}
\psi_{j,1}\psi_{j,2} & \psi_{j,2}^2 \\
-\psi_{j,1}^2 & -\psi_{j,1} \psi_{j,2}
\end{matrix}\right)\,.
\end{equation}

We will now obtain the evolution equation for the scattering matrix $\bd S$.
We start with rewriting \eqref{E:4:6} as
\begin{equation*}\label{E:4:34}
\partial_t\Psi - P \Psi + \theta\Psi\sigma_3 = \big(\Psi_n\bd{A}_{n}\big)_{n\in\mathbb{Z}}\,,
\end{equation*}
and using \eqref{scat_rels} in the form: 
\begin{equation}\label{E:scat_rels2}
\Psi_n = \Phi_n \bd{S}\qquad\text{for all}\,\,\, n\in\mathbb{Z}\,.
\end{equation}
For each $n\in\mathbb{Z}$, this leads to the equation: 
\begin{equation}\label{E:4:36}
\left(\partial_t \Phi_n \right) \bd{S} + \Phi_n \left( \partial_t \bd{S}\right) - \big( P \Phi\big)_n \bd{S} 
+\theta\Phi_n \bd{S} \sigma_3 = \Phi_n \bd{S} \bd{A}_n\,.
\end{equation}
From the expression \eqref{E:4:33} for $\bd A_n$ and from \eqref{E:scat_rels2} used for
all $j\geq n$, we obtain that 
\begin{equation*}
\bd{S} \bd{A}_n = \frac{\varepsilon}{\theta}\sum_{j=n}^{\infty}U_{jj}
\begin{pmatrix}
\psi_{j,1}\phi_{j,2} & \psi_{j,2}\phi_{j,2}\\
-\psi_{j,1}\phi_{j,1} & -\psi_{j,2}\phi_{j,1}
\end{pmatrix}
\end{equation*}
Taking $n\to-\infty$ in \eqref{E:4:36} and using the
asymptotic properties of the Jost solutions, we see that
\begin{equation*}
\lim_{n\to-\infty} \Phi_n(z)\cdot z^{-n\sigma_3}=(1\,\,\,1)\quad
\text{where }
\sigma_3=
\left(
\begin{matrix}1&0\\0&-1\end{matrix}\right)\text{ and }
z^{\pm n\sigma_3}=\left(
\begin{matrix}z^{\pm n}&0\\0&z^{\mp n}\end{matrix}\right)\,.
\end{equation*}
Further standard scattering theory arguments show that, as $n\to-\infty$,
$\partial_t\Phi_n\to (0\,\,\,0)$
and\\
$\big(P\Phi \big)_n \cdot z^{-n\sigma_3} \to\theta(z)(1\,\,\,-1)$.
Using these asymptotics as $n \rightarrow - \infty$ in \eqref{E:4:36} then yields: 
\begin{equation*}\label{E:4:47}
\partial_t \bd{S} + \theta[\bd{S},\sigma_3] = \frac{\varepsilon}{\theta}\bd{D}\,,
\end{equation*}
which completes the proof.
\end{proof}

Under the Toda evolution, $\zeta_k(t)=\zeta_k(0)$ and
$\gamma_k (t) = \gamma_k(0) e^{t(\zeta_k(0)-\zeta_k(0)^{-1})}$. 
Under the perturbed lattice, the evolution of the eigenvalues and norming constants is given by:
\begin{theorem}\label{T:eom_p_eigs}
The evolution equations for the eigenvalues and norming constants under the perturbed lattice 
are given by:
\begin{equation}\label{E:eom_p_eigenvalues}
\partial_t \lambda_k(t)=\varepsilon\gamma_k(t)\sum_{n=-\infty}^\infty U_{nn}(t)\varphi_+\big(\zeta_k(t);n,t\big)^2\,,
\end{equation}
and
\begin{equation}\label{E:eom_p_norm}
\partial_t \gamma_k(t)=2\theta(\zeta_k(t))\gamma_k(t) 
- \frac{2\varepsilon\gamma_k^2(t)}{\theta\big(\zeta_k(t)\big)} \sum_{n=-\infty}^\infty\varphi_+(\zeta_k(t);n,t)
\times 
\sum_{j=n}^{\infty}U_{jj}(t)\varphi_+(\zeta_k(t);j)K(j,n;t)\,,
\end{equation}
where
\begin{equation*}\label{E:2:26}
K(j,n;t)= \varphi_+(\zeta_k(t);n,t)\varphi_+(\zeta_k(t)^{-1};j,t)- \varphi_+(\zeta_k(t)^{-1};n,t)\varphi_+(\zeta_k(t);j,t)\,.
\end{equation*}
\end{theorem}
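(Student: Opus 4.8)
The plan is to obtain both evolution equations by analytically continuing the variation-of-constants identity \eqref{E:4:6} from the proof of Theorem~\ref{T:eom_p_S} to the poles $z=\zeta_k$ inside the unit disk. This is legitimate because the functions $z\mapsto z^{\mp n}\varphi_\pm(z,n)$ are holomorphic on $|z|<1$, so \eqref{E:4:6} together with the explicit form \eqref{E:4:33} of $\bd A_n$ persists at $z=\zeta_k$; there, moreover, $\psi:=\varphi_+(\zeta_k(t);\cdot,t)$ is a genuine $\ell^2$ eigenvector, $L\psi=\lambda_k\psi$, and (since $\lambda_k$ is real and $\zeta_k\in(-1,0)\cup(0,1)$) may be taken real.

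For the eigenvalue equation I would argue by a Hellmann--Feynman-type pairing. Differentiating $L(t)\psi(t)=\lambda_k(t)\psi(t)$ in $t$ and taking the $\ell^2$ inner product with $\psi$ gives $\langle\psi,(\partial_tL)\psi\rangle=\dot\lambda_k\|\psi\|^2$, since $\langle\psi,L\partial_t\psi\rangle=\langle L\psi,\partial_t\psi\rangle=\lambda_k\langle\psi,\partial_t\psi\rangle$ absorbs the remaining terms. Inserting $\partial_tL=[P,L]+\varepsilon U$ from \eqref{E:eom_p_L}, the commutator drops out because $\langle\psi,[P,L]\psi\rangle=\lambda_k\langle\psi,P\psi\rangle-\langle L\psi,P\psi\rangle=0$ by self-adjointness of $L$, leaving $\dot\lambda_k\|\psi\|^2=\varepsilon\langle\psi,U\psi\rangle$. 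Since $U$ is diagonal and $\|\psi\|_{\ell^2}^2=1/\gamma_k$, this is exactly \eqref{E:eom_p_eigenvalues}.

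For the norming constant I would start from $\gamma_k=\|\psi\|_{\ell^2}^{-2}$, so that $\partial_t\gamma_k=-\gamma_k^2\,\partial_t\|\psi\|^2$, and read off the fixed-$z$ time derivative $\partial_t\varphi_+(\zeta_k;n)$ from the first column of \eqref{E:4:6}: using \eqref{E:4:33} one finds
\begin{equation*}
\partial_t\varphi_+(\zeta_k;n)-(P\varphi_+)_n=-\theta(\zeta_k)\varphi_+(\zeta_k;n)+\frac{\varepsilon}{\theta(\zeta_k)}\sum_{j=n}^{\infty}U_{jj}\varphi_+(\zeta_k;j)K(j,n)\,.
\end{equation*}
Pairing this against $\varphi_+(\zeta_k;\cdot)$ and using $\langle\psi,P\psi\rangle=0$ produces the Toda contribution $-\theta(\zeta_k)\|\psi\|^2$ together with the double sum appearing in \eqref{E:eom_p_norm}; multiplying by $-2\gamma_k^2$ then yields the stated leading term $2\theta(\zeta_k)\gamma_k$ and the $\varepsilon$-correction.

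The main obstacle is the bookkeeping hidden in that last step. The relevant norm is that of the eigenvector at the \emph{moving} pole $\zeta_k(t)$, so the genuine derivative $\partial_t\|\varphi_+(\zeta_k(t);\cdot,t)\|^2$ also carries a $\dot\zeta_k\,\partial_z$ contribution; and, because $\varphi_+(\zeta_k^{-1};\cdot)$ and the fixed-$z$ derivative $\partial_t\varphi_+(\zeta_k;\cdot)$ both grow as $n\to-\infty$, neither $\sum_n\varphi_+\partial_t\varphi_+$ nor the $\partial_z$ piece converges absolutely. I would therefore carry out the pairing on finite windows $|n|\le M$ and apply summation by parts, retaining the boundary terms at $\pm M$. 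A discrete Christoffel--Darboux identity for solutions of $L\varphi=\lambda\varphi$, evaluated with the Jost normalizations $\varphi_+(z;n)\sim z^n$ as $n\to+\infty$ and the Wronskian of $\varphi_+(\zeta_k,\cdot)$ and $\varphi_+(\zeta_k^{-1},\cdot)$, should let me compute these boundary contributions and show that the would-be $\dot\zeta_k$ correction is absorbed, so that only the finite, absolutely convergent expression above survives as $M\to\infty$. Establishing that these boundary terms telescope correctly---equivalently, that the moving-pole and fixed-$z$ derivatives combine into the clean right-hand side of \eqref{E:eom_p_norm}---is the delicate heart of the argument.
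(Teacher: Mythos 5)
Your proposal follows essentially the same route as the paper: the eigenvalue equation \eqref{E:eom_p_eigenvalues} via the Hellmann--Feynman pairing $\langle\psi,(\partial_t L)\psi\rangle=\dot\lambda_k\|\psi\|^2$ with the commutator killed by self-adjointness (this is precisely what the paper means by ``the proof of \eqref{E:eom_p_eigenvalues} is standard''), and the norming-constant equation \eqref{E:eom_p_norm} by continuing the variation-of-constants identity \eqref{E:4:6}--\eqref{E:4:33} of Theorem~\ref{T:eom_p_S} to the pole $z=\zeta_k$ and pairing with the eigenvector using $\langle\psi,P\psi\rangle=0$, which is exactly the paper's ``arguments very similar to those of the proof of Theorem~\ref{T:eom_p_S}.'' The moving-pole and conditional-convergence subtlety you flag at the end is genuine, but the paper's own proof is only the two-line sketch just described and does not address it either, so your derivation is at least as complete as the source's.
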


Note that we cannot a-priori say that equations \eqref{E:eom_p_eigenvalues} and \eqref{E:eom_p_norm} hold
for all time $t>0$ (unlike \eqref{E:eom_p_R_+}). Rather the equations hold locally in time, assuming that
we start at an eigenvalue. Recall, however, that the eigenvalues of our Jacobi matrices are all simple:
in particular, this means that eigenvalues cannot cross, but they can stop existing by ``entering" the ac spectrum. 
We do not observe this phenomenon in any of our numerical simulations, but it remains a theoretical possibility. 
The proof of  \eqref{E:eom_p_eigenvalues} is standard,
and arguments very similar to those of the proof of Theorem~\ref{T:eom_p_S} lead to \eqref{E:eom_p_norm}.

\nocite{*}
\bibliographystyle{plain} 
\bibliography{NumericsFPU}



\end{document}